%% file: main.tex
\documentclass[12pt]{article}
\usepackage{bqf-paper}
\usepackage{algorithm}
\usepackage{algpseudocode}
\usepackage{amsmath}
\usepackage[utf8]{inputenc} 
\usepackage[T1]{fontenc}    
\usepackage{hyperref}       
\usepackage{url}            
\usepackage{booktabs}       
\usepackage{amsfonts}       
\usepackage{nicefrac}       
\usepackage{microtype}      
\usepackage{xcolor}         
\usepackage{tikz}
\usetikzlibrary{arrows.meta,positioning,calc,decorations.pathreplacing,shapes.geometric}
\usepackage[table]{xcolor}
\usepackage{listings}

\newcommand{\ignore}[1]{{}}

\title{\paperTitleSmallCaps{
New Lower and Upper Bounds for the Grothendieck Constant
}}

\author{
  \normalsize
  \setlength{\tabcolsep}{15pt}
  \begin{tabular}{ccc}
    Rahul Saha\footnotemark[1] & Alan Li\thanks{Equal contribution.}  & Anton Xue \\
    \small\texttt{rahul.saha@utexas.edu} & \small\texttt{alanli@cs.utexas.edu} & \small\texttt{anton.xue@utexas.edu} \\[3ex]
    Swarat Chaudhuri & Adam Klivans & Pravesh K. Kothari \\
    \small \texttt{swarat@cs.utexas.edu} & \small\texttt{klivans@cs.utexas.edu} & \small\texttt{kothari@cs.princeton.edu} \\[3ex]
     & Raghu Meka &  \\
     & \small\texttt{raghum@cs.ucla.edu} & \\[3ex]
  \end{tabular}
}

\date{}
\begin{document}

\maketitle

\input{paper/abstract}

\input{paper/introduction}

\input{paper/technical_overview}

\newpage

\part{The Upper Bound}

\input{paper/krivine}

\input{paper/limiting_schemes}

\input{paper/asymptotic_scheme}

\input{paper/coefficient_formula}

\input{paper/certification}

\newpage
\part{The Lower Bound}
\label{part:LowerBound}

\input{paper/lower_bound_general}

\input{paper/statement_ai_use}

\bibliographystyle{amsalpha}
\bibliography{references}

\input{paper/appendix}

\end{document}

%% file: paper/abstract.tex
\abstract{

We establish new bounds on the Grothendieck constant \(\KG\):
\[
    \frac{6\pi}{11}
    \le
    \KG
    \le
    \frac{\pi}{2\log(1+\sqrt2)} - 10^{-4}.
\]
Methodologically, our lower bound approach differs from previous works by establishing limitations on the asymptotically optimal Krivine schemes, rather than giving explicit constructions of gap instances.
Our upper bound is obtained by proposing and analyzing the first asymptotic construction of rounding schemes, whereas previous works only consider low-dimensional schemes.
Together, these bounds determine the previously unknown tenths digit of \(\KG\) to be $7$. 
The bounds were discovered by a long-running collaborative effort of humans and a long-horizon AI research system that we engineered. A detailed account of the AI-assisted research process appears in a companion paper~\cite{lisaha2026longhorizon}.

}
\newpage 

\tableofcontents

%% file: paper/introduction.tex
\section{Introduction}

The Grothendieck inequality~\cite{Gro53,LP68} states that there exists a universal constant $K<\infty$ such that for every $m,n\in\N$ and every matrix $A=(a_{ij})\in\R^{m\times n}$,
\begin{equation}\label{eq:grothendieck-ineq}
  \max_{\eps_i,\delta_j\in\{\pm1\}}
  \left|\sum_{i=1}^m\sum_{j=1}^n a_{ij}\eps_i\delta_j\right|
  \le
  \sup_{u_i,v_j\in S^{m+n-1}}
  \left|\sum_{i=1}^m\sum_{j=1}^n a_{ij}\ip{u_i}{v_j}\right|
  \le
  K\max_{\eps_i,\delta_j\in\{\pm1\}}
  \left|\sum_{i=1}^m\sum_{j=1}^n a_{ij}\eps_i\delta_j\right|.
\end{equation}
Here $S^{m+n-1}$ denotes the unit sphere in $\R^{m+n}$.  The Grothendieck constant $\KG$ is the infimum of $K$ for which this inequality holds.

The inequality is equivalent to the statement that the integrality gap of the canonical semidefinite relaxation for maximizing the bilinear form $x^\top Ay$ over $x\in\{-1,1\}^m$ and $y\in\{-1,1\}^n$ is bounded by an absolute constant.
This connection leads to efficient algorithms for approximating combinatorial problems, such as matrix cut norms~\cite{AN06}.  Beyond this algorithmic role, Grothendieck's inequality is central in the geometry of Banach spaces, operator spaces, harmonic analysis, and quantum information; see, for example, the surveys and monographs~\cite{KN11,pisier2011grothendieckstheorempastpresent,DFS08}.

However, the exact value of $\KG$ is still unknown.
On the lower bound, existing strategies aim to construct matrices for which the vector optimum is large relative to the sign optimum.
While Grothendieck's original work implies that $\KG \geq \pi/2$~\cite{Gro53}, this was substantially improved by Davie~\cite{Davie1984LowerBoundKG} and Reeds~\cite{Reeds1991LowerBoundKG}, who independently used a high-dimensional Gaussian-operator construction to prove
\[
\KG
\geq K_{\mathrm{DR}} = 1.6769\ldots
\]
The Davie--Reeds inequality was the best-known result for more than four decades, until two recent works by
Heilman~\cite{heilman2026lowerboundgrothendiecksconstant} and Jones and Malavolta~\cite{jones2026grothendieckconstantstrictlylarger} independently showed that $\KG \geq K_{\mathrm{DR}}+10^{-26}$
and $\KG \geq K_{\mathrm{DR}}+10^{-12}$, respectively.





On the upper bound, Krivine's seminal result~\cite{Kri77} proved that
\begin{equation}\label{eq:krivine-bound-intro}
  \KG\le {\pi\over 2\log(1+\sqrt2)}=1.7822\ldots
\end{equation}
The proof is constructive: it gives a rounding scheme that maps input vectors $u_i,v_j$ to signs $\eps_i,\delta_j\in\{\pm1\}$ after an infinite-dimensional preprocessing step and a random hyperplane rounding step.
Krivine also conjectured that this bound was tight, and this stood for over three decades until it was disproven by Braverman, Makarychev, Makarychev, and Naor in 2011; specifically, they ``mixed'' the hyperplane rounding scheme with a non-hyperplane one to establish the strict inequality $\KG < \tfrac{\pi}{2\log(1+\sqrt2)}$~\cite{braverman2011grothendieckconstantstrictlysmaller}.
Interestingly, however, Naor and Regev subsequently showed that mixed Krivine schemes (\cref{sec:krivine-rounding-scheme}) are in fact optimal: in the limit of dimension, there exists such a scheme that yields arbitrarily good approximations of $K_G$~\cite{naor2014krivine}.
This thus opened a strategy for attacking the strict inequality's unspecified numerical gap, and was leveraged in recent works by Heilman~\cite{heilman2026upperboundgrothendiecksconstant} and Li et al.~\cite{LiSahaXueChaudhuriKlivansKothariMeka2026} to independently arrive at explicit numerical improvements on the order of $10^{-5}$.

\paragraph{This work.}
In this paper, we make the following contribution:
\begin{enumerate}
    \item \textbf{Upper bound.} We introduce \emph{limiting Krivine schemes}, an analytic framework for constructing and studying asymptotic families of increasingly high-dimensional rounding schemes. Using this framework, we construct a new ``cubic--quintic'' scheme that yields
    \[
        K_G
        \le {\pi\over 2\log(1+\sqrt2)} - 3.47 \times 10^{-4}
        = 1.7818 \ldots.
    \]
    This differs from previous works, which only considered one- or two-dimensional rounding schemes, and in particular positively answers a question posed by Braverman et al.~\cite{braverman2011grothendieckconstantstrictlysmaller} about whether increasing dimension can lead to improved bounds on $K_G$.

    \item \textbf{Lower bound.} We find a new approach to proving lower bounds on $K_G$ and show:
    \[
        K_G \ge \frac{6\pi}{11}=1.7135\ldots.
    \]
    Together with our upper bound, this determines the tenths digit of \(K_G\) to be 7.
    Notably, our approach is to show a fundamental limitation of Krivine schemes and then apply a result from Naor and Regev~\cite{naor2014krivine}.
    

    \item \textbf{AI methodology.}
    We further detail our use of AI in mathematics research in a companion paper~\cite{lisaha2026longhorizon}.
    Because mathematics research is a long-horizon task that comprises complex goals, persistent memory, and delayed rewards, it is often unclear how to effectively use AI.
    Our work thus provides an extensive case study on how to develop and interact with emerging AI systems.    
    Critically, this system is also highly collaborative with human interaction, which allows it to effectively explore promising proof strategies.

\end{enumerate}

%% file: paper/technical_overview.tex
\section{Technical Overview}

Here, we provide the background and exposition of the main ideas in our work.

\subsection{The upper bound}

\paragraph{Classical Krivine rounding schemes.}
A classical Krivine scheme consists of odd sign functions \(f,g:\mathbb R^k\to\{-1,1\}\). Its normalized correlation function is
\(
    H_{f,g}(t):=\frac{\pi}{2}\mathbb E[f(X)g(Y)],
\)
where \(X,Y\) are standard Gaussian vectors with coordinatewise correlation \(t\). Suppose \(H^{-1}(z)=\sum_{n\geq 1}a_nz^n\), and let its \emph{majorant} series be  \(M(s):=\sum_{n\geq 1}|a_n|s^n\). Krivine's tensor preprocessing
(\cref{app:krivine-preprocessing})
gives
\(
    M(\gamma)\leq 1
    \implies 
    K_G\leq\frac{\pi}{2\gamma}.
\)
The upper-bound problem is therefore to construct a correlation function whose inverse has a large admissible  $\gamma$ so that $M(\gamma) = 1$.

\paragraph{Limiting Krivine rounding schemes.}
In ordinary Krivine schemes, we need individual Gaussian coordinates $X_i, Y_i$ to have coordinate-wise correlation $t$. We enlarge the search space by allowing $X_i, Y_i$ to have coordinate-wise correlation
\[
    \rho_i(t)=\sum_{\substack{d\geq 1\\ d\ \mathrm{odd}}}c_{i,d}t^d,
    \qquad
    \sum_d|c_{i,d}|\leq 1.
\]
This gives us more degrees of freedom in designing \(H\) whose associated \(M\) has larger \(\gamma\) satisfying $M(\gamma) = 1$.





\paragraph{Approximation of limiting schemes using classical schemes.}
While we may design the correlations between variables to admit any \(\rho\) of the above structure, it is not immediately clear how the rounding algorithm works. 
Here, we show that any limiting scheme can in fact be approximated arbitrarily well by classical Krivine schemes, whose rounding algorithm is established. 
Therefore, it suffices to find a limiting scheme whose inverse majorant $M$ admits a high $\gamma$ with $M(\gamma) = 1$.

More precisely, we construct a sequence $\mathcal{S}_1, \mathcal{S}_2, \ldots$ of classical Krivine schemes of increasing dimension \(k\), whose correlation functions $H_{\mathcal{S}_1}, H_{\mathcal{S}_2}, \dots$ converge to the correlation function $H_{\mathcal{S}}$ of a limiting scheme $\mathcal{S}$.
The main idea is to use the central limit theorem to collapse the increasing number of Gaussian random variables in the classical schemes \(\mathcal{S}_k = (f_k, g_k)\), where \(f_k, g_k : \mathbb{R}^k \to \{\pm 1\}\).





%

\paragraph{The cubic--quintic scheme.}
A limiting Krivine scheme that attains the paper's stated upper bound on \(K_G\) is defined as follows.
For suitable parameters \(s_3,s_5,\vartheta>0\), set
\[
    \rho(t):=
    \frac{t-s_3^2t^3+s_5^2t^5}{1+s_3^2+s_5^2},
\]
and use the partitions
\[
    f(w,x)=\operatorname{sgn}\!\left(w+\vartheta\operatorname{He}_3(x)\right),
    \qquad
    g(w,x)=\operatorname{sgn}\!\left(w-\vartheta\operatorname{He}_3(x)\right).
\]
We set the first coordinate \(w\) to have correlation \(\rho(t)\), and set the second coordinate \(x\) to have correlation \(t\).



\paragraph{Deriving the upper bound.}
In order to prove the upper bound (which we denote as $K_G^{UB}$), one needs to show that the inverse majorant $M(\gamma) \leq 1$ holds, for $\gamma=\frac{\pi}{2K_G^{UB}}$.
Doing this is non-trivial, and requires both an analytic step and a computational step. 

For the analytic step, write
\(H(t)=b_1t+\sum_{m\geq 3}b_mt^m\), and let
\(\Delta_H:=\sum_{m\geq 3}|b_m|\) denote its total nonlinear coefficient mass. We prove that it is sufficient to show 
\[
    \gamma+\Delta_H<b_1.
\]
Thus, rather than computing the inverse series \(H^{-1}\), it is enough to certify a lower bound on the linear coefficient \(b_1\) and an upper bound on \(\Delta_H\).
We then compute the first terms of $\{b_n\}$ using a formula involving the Hermite coefficients of $f$ and $g$, which are one-dimensional Gaussian integrals. Finally, we bound the tail of $\Delta_H$ through a single boundary-norm estimate for \(D^3H\), where \(D=t\frac{d}{dt}\). Rigorous interval arithmetic verifies the resulting finite collection of inequalities, yielding the claimed upper bound on \(K_G\).


\subsection{The lower bound}

\paragraph{Mixed Krivine schemes are optimal.}
Naor and Regev~\cite{naor2014krivine} proved that mixed Krivine schemes are asymptotically optimal. We use a quantitative consequence of their construction: there exist limiting mixed correlation functions \(H_k\) with admissible inverse-majorant parameters \(\gamma_k\) satisfying
\[
    \gamma_k\longrightarrow\frac{\pi}{2K_G}.
\]
Consequently, any universal upper bound on the admissible parameter of a Krivine correlation function gives a lower bound on \(K_G\).

\paragraph{The affine coefficient inequality.}
To obtain this upper bound, we ask which Taylor coefficients can actually arise from a Krivine correlation function. For odd sign functions \(f,g:\mathbb R^d\to\{-1,1\}\), write
\[
    H_{f,g}(t)=b_1t+b_3t^3+\cdots.
\]

Surprisingly, we show that there is the clean constraint on the first two coefficients:
\begin{equation}
    b_3\geq 2b_1-\frac{11}{6}.
    \label{eq:affine-coefficient-inequality}
\end{equation}
Its affine form is important, since it is preserved under mixtures and coefficientwise limits. 

Pushing the affine inequality to the inverse series $H^{-1}$ yields bounds on the inverse coefficients $a_1$ and $a_3$, which in turn gives $\gamma \leq \frac{11}{12}$. This results in the desired $K_G \geq \frac{6\pi}{11}$ .

\paragraph{Hermite expansion.} The approach to proving this is by looking at Hermite expansions of $f, g$. For a square-integrable function \(F\) on Gaussian space, let \(P_iF\) denote its orthogonal projection onto the Hermite polynomials of total degree \(i\).  If
\(H_{f,g}(t)=b_1t+b_3t^3+\cdots\) is the normalized correlation function of \(f\) and \(g\), then the basic orthogonality properties of Hermite polynomials give
\(b_1=\frac{\pi}{2}\langle P_1f,P_1g\rangle\) and
\(b_3=\frac{\pi}{2}\langle P_3f,P_3g\rangle\).
Our affine coefficient inequality is
\begin{equation*}
    b_3\geq 2b_1-\frac{11}{6}.
\end{equation*}
Since \(\nu:=\mathbb E|Z|=\sqrt{2/\pi}\), this is equivalent to
\[
    2\langle P_1f,P_1g\rangle-\langle P_3f,P_3g\rangle
    \leq \frac{11}{6}\nu^2.
\]

\paragraph{Agreement and disagreement.}
In order to prove the inequality, we make a key substitution. The substitution is to write
\(h:=(f+g)/2\) and \(k:=(f-g)/2\). The function \(h\) can be thought of as the \textit{agreement part}, where \(f\) and \(g\) agree, while \(k\) is the corresponding \textit{disagreement part}. Moreover,
\(h,k\in\{-1,0,1\}\) and \(h^2+k^2=1\).

Substituting \(f=h+k\) and \(g=h-k\) into the preceding inequality, and discarding a nonpositive term, it suffices to prove
\begin{equation}
    2\|P_1h\|^2-\|P_3h\|^2+\|P_3k\|^2
    \leq \frac{11}{6}\nu^2.
    \label{eq:agreement-disagreement-reduction}
\end{equation}
The first two terms depend only on the agreement function \(h\), while the final term measures the degree-three contribution of the disagreement function \(k\). We bound these two parts separately.

\paragraph{Agreement terms.}
Rotate coordinates so that \(P_1h\) lies in a single Gaussian direction \(S\), and write
\(x:=\|P_1h\|/\nu\in[0,1]\). The positive agreement term is then
\(2\|P_1h\|^2=2\nu^2x^2\).

To control the helpful negative term \(-\|P_3h\|^2\), average \(h\) over all coordinates transverse to \(S\). Namely, writing the Gaussian input as \((S,T)\), set
\(A(s):=\mathbb E_T[h(s,T)]\). Then \(|A|\leq1\) and
\(\mathbb E[A(S)S]=\nu x\). A one-dimensional rearrangement inequality shows that this prescribed linear moment forces a cubic Hermite moment:
\[
    \|P_3h\|^2
    \geq
    \frac{\nu^2}{6}\Delta_+(x)^2
\]
for an explicit function \(\Delta\). Thus, a large linear agreement term necessarily creates a compensating cubic agreement term.

\paragraph{Disagreement term.}
For each fixed transverse vector \(t\), consider the one-dimensional slice
\(u_t(s):=k(s,t)\). Since \(k\in\{-1,0,1\}\), each \(u_t\) is a ternary function on the Gaussian line. The degree-three Hermite decomposition implies
\[
    \|P_3k\|^2
    \leq
    \mathbb E_T V(u_T),
    \qquad
    V(u):=
    \sum_{j=0}^3
    \bigl(\mathbb E[u(Z)\psi_j(Z)]\bigr)^2.
\]
We control these first four Hermite moments through the weighted support
\(\beta(u):=\mathbb E[|Z|u(Z)^2]\). The key one-dimensional estimate is
\begin{equation}
    V(u)\leq 3\nu\beta(u)-\beta(u)^2.
    \label{eq:one-dimensional-support-inequality}
\end{equation}
Conceptually, this says that a ternary function cannot have both a support concentrated cheaply near the origin and a large projection onto all Hermite degrees at most three. The inequality admits an exact finite-dimensional dual reduction. After introducing a support price \(c\), the optimizing ternary function has the threshold form
\(
    u(z)=\operatorname{sgn}(p(z))
    \mathbf 1_{\{|p(z)|\geq c|z|\}},
\)
where \(p\) is a unit cubic Hermite polynomial. The resulting one-dimensional Gaussian integral inequalities are verified using outward-rounded interval arithmetic. Finally, \(h^2+k^2=1\) relates the average weighted support of the slices \(u_T\) to the agreement parameter \(x\). Combining this relation with
\eqref{eq:one-dimensional-support-inequality} gives an explicit upper bound on
\(\|P_3k\|^2\) in terms of \(x\). Together with the agreement estimate, this reduces
\eqref{eq:agreement-disagreement-reduction} to an elementary scalar inequality on
\(x\in[0,1]\), proving \eqref{eq:affine-coefficient-inequality}.

%% file: paper/krivine.tex
\section{Rounding Schemes}

In this section, we will define a new class of rounding schemes. But first, we recall classical Krivine schemes ~\cite{krivine1977constante}. 

\subsection{Classical Krivine rounding schemes}
\label{sec:krivine-rounding-scheme}

A Krivine rounding scheme is a pair of partitions of $\R^d$, $f,g: \R^d \rightarrow \{-1,1\}$, such that $f,g$ are odd functions. We will work with $f,g$ that satisfy some additional properties. To describe these, we first introduce some basic setup~\cite{braverman2011grothendieckconstantstrictlysmaller}. 


\begin{definition}[Gaussian correlation function]\label{def:gaussian-correlation}
Let $f,g:\mathbb{R}^k\to\{-1,1\}$ be odd measurable functions.
Let $X, Y$ be standard jointly Gaussian vectors in $\mathbb{R}^k$ satisfying
\[
\mathbb{E}[X_iX_j]
=
\mathbb{E}[Y_iY_j]
=
\delta_{ij},
\qquad
\mathbb{E}[X_iY_j]
=
\delta_{ij}t.
\]
The \emph{arcsine-normalized correlation function} associated to $(f,g)$ is
\[
H_{f,g}(t)
:=
\frac{\pi}{2}\,
\mathbb{E}\left[
f\left(X\right)
g\left(
Y
\right)
\right],
\qquad -1<t<1.
\]
\end{definition}

We will use the following basic properties of the Gaussian correlation function and its inverse $H_{f,g}^{-1}$ in our analysis (see \cite{braverman2011grothendieckconstantstrictlysmaller}).

\begin{lemma}\label[lemma]{lemma:correlation-profile-analytic}
Let $f,g:\mathbb{R}^k\to\{-1,1\}$ be odd measurable functions. Then $H_{f,g}$ is odd and
extends holomorphically to the strip
\[
S:=\{z\in\mathbb{C}: |\operatorname{Re}z|<1\}.
\]
In particular, if $H_{f,g}'(0)\neq 0$, then $H_{f,g}$ has a unique \textbf{local inverse branch} at
the origin, i.e., there are open neighborhoods
$U,V\subset\C$ of $0$ such that
\(
    H_{f,g}:U\to V
\)
is biholomorphic. We write
\(
    H_{f,g}^{-1}:V\to U
\)
for this local inverse, chosen so that $H_{f,g}^{-1}(0)=0$. This inverse is
unique when restricted to this branch, and is an odd function.
\end{lemma}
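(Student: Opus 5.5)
We will write $H_{f,g}$ as an explicit Gaussian integral and read off all three claims from it. Since $(X,Y)$ has covariance $\begin{pmatrix} I & tI\\ tI & I\end{pmatrix}$, for $|t|<1$ the joint density is
\[
p_t(x,y)=\frac{1}{(2\pi)^k(1-t^2)^{k/2}}\exp\!\left(-\frac{|x|^2+|y|^2-2t\langle x,y\rangle}{2(1-t^2)}\right),
\]
so that
\[
H_{f,g}(t)=\frac{\pi}{2}\iint f(x)g(y)\,p_t(x,y)\,dx\,dy .
\]

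\textbf{Oddness.} The pair $(X,-Y)$ has cross-correlation $-t$. Using that $g$ is odd,
\[
H_{f,g}(-t)=\tfrac{\pi}{2}\mathbb E[f(X)g(-Y)]=-H_{f,g}(t).
\]

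\textbf{Holomorphic extension.} Replace $t$ by a complex $z$ with $|\operatorname{Re} z|<1$ in $p_z$. Since $|f|=|g|=1$, the integral defines a holomorphic function wherever the integrand is locally uniformly dominated by an integrable function. We then apply Morera's theorem together with Fubini, or equivalently differentiate under the integral sign. The quadratic form in the exponent has real part
\[
\operatorname{Re}\frac{1}{1-z^2}\bigl(|x|^2+|y|^2\bigr)-2\operatorname{Re}\frac{z}{1-z^2}\,\langle x,y\rangle .
\]
Decoupling into the coordinate pairs $(x_i,y_i)$, this is positive definite iff
\[
\operatorname{Re}\frac{1}{1-z^2}>\left|\operatorname{Re}\frac{z}{1-z^2}\right| .
\]
Since $\frac{1}{1-z^2}\pm\frac{z}{1-z^2}=\frac{1}{1\mp z}$, the condition is equivalent to $\operatorname{Re}\frac{1}{1-z}>0$ and $\operatorname{Re}\frac{1}{1+z}>0$. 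These hold exactly when $\operatorname{Re}(1\mp z)>0$, i.e. when $|\operatorname{Re} z|<1$. On compact subsets of the strip $S$ the smallest eigenvalue of this real part is uniformly bounded below, which gives the uniform Gaussian domination needed. The prefactor $(1-z^2)^{-k/2}$ is holomorphic on $S$: the factor $1-z^2$ avoids $(-\infty,0]$ there, because $1-z^2\in(-\infty,0]$ forces $z$ to be real with $|z|\ge1$, so the principal branch works.

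\textbf{Local inverse.} If $H_{f,g}'(0)\neq0$, the holomorphic inverse function theorem gives neighborhoods $U,V$ of $0$ with $H_{f,g}:U\to V$ biholomorphic. Since $H_{f,g}(0)=0$ by oddness, the inverse satisfies $H_{f,g}^{-1}(0)=0$. Shrinking $U$ to $U\cap(-U)$ and setting $V=H_{f,g}(U)$, this neighborhood is symmetric because $H_{f,g}$ is odd. The function $w\mapsto -H_{f,g}^{-1}(-w)$ is then also an inverse on $V$ fixing $0$, so by uniqueness it equals $H_{f,g}^{-1}$; hence the inverse is odd.

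\textbf{Main obstacle.} The only delicate step is the domination estimate uniformly on compact subsets of the strip, together with the branch of $(1-z^2)^{-k/2}$. An alternative that avoids the branch issue is the Hermite/Mehler expansion
\[
H_{f,g}(t)=\frac{\pi}{2}\sum_{\alpha}\hat f_\alpha\hat g_\alpha\,t^{|\alpha|}.
\]
However, that series only converges for $|t|<1$, a disk rather than the strip, so the explicit density argument is the one to use.
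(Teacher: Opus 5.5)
Your argument is correct and is essentially the standard density argument of \cite{braverman2011grothendieckconstantstrictlysmaller}, which the paper cites instead of giving its own proof. As there, you write $H_{f,g}$ through the explicit Gaussian density, use $\frac{1}{1-z^2}\pm\frac{z}{1-z^2}=\frac{1}{1\mp z}$ to get positive definiteness of the real part of the quadratic form locally uniformly on the strip, and finish with the holomorphic inverse function theorem.

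The one detail to state explicitly is that the \emph{complex} extension is odd, since you use this for the symmetric neighborhood $U\cap(-U)$. It follows from the identity theorem, or directly from $p_z(x,-y)=p_{-z}(x,y)$ together with the evenness of your branch of $(1-z^2)^{-k/2}$.
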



\begin{lemma}[Local Inverse Expansion]\label[lemma]{lemma:local-inverse-expansion}

Let $f,g:\mathbb{R}^k\to\{-1,1\}$ be odd and measurable functions with $H'_{f,g}(0) \neq 0$. Let $H^{-1}_{f,g}$ be the unique local inverse branch of $H_{f,g}$ at the origin. Then,
\begin{itemize}
\item $H^{-1}_{f,g}$ is odd. 
    \item $H^{-1}_{f,g}$ has a Taylor-series expansion at the origin: $H^{-1}_{f,g}(\zeta) = \sum_{n\geq 1} a_n \zeta^n$. 
\end{itemize}
Further, if $\gamma > 0$ is such that $\sum_{n \geq 1} |a_n| \gamma^n \leq 1$, then the Grothendieck constant 
$$ K_G \leq \frac{\pi}{2\gamma}.$$ 
\end{lemma}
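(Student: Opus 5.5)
The first two claims follow from \cref{lemma:correlation-profile-analytic} by the holomorphic inverse function theorem. Since $H:=H_{f,g}$ is holomorphic near $0$ with $H'(0)\neq 0$, its local inverse $H^{-1}:V\to U$ is holomorphic on the open neighborhood $V$ of $0$. It therefore has a convergent Taylor expansion $\sum_{n\ge 0}a_n\zeta^n$ on a disc around $0$, and $a_0=H^{-1}(0)=0$.

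For oddness, I will use that $H$ is odd. I shrink $V$ to a disc, so that $V=-V$, and replace $U$ by $H^{-1}(V)$. Then $\zeta\mapsto -H^{-1}(-\zeta)$ is holomorphic on $V$, vanishes at $0$, and is a right inverse of $H$:
\[
H\bigl(-H^{-1}(-\zeta)\bigr)=-H\bigl(H^{-1}(-\zeta)\bigr)=\zeta .
\]
Its values also lie in $U$, once $U$ is taken symmetric, which I can arrange by intersecting with $-U$ on the $t$-side. By uniqueness of the local inverse branch, it equals $H^{-1}$. Hence all even $a_n$ vanish.

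For the bound on $K_G$, I will run Krivine's argument, presumably the content of \cref{app:krivine-preprocessing}. Fix unit vectors $u_i,v_j$ in a Hilbert space $\mathcal H$ and set $s_n:=\operatorname{sgn}(a_n)$. In the Fock-type space $\bigoplus_{n \text{ odd}}\mathcal H^{\otimes n}$, define
\[
U_i:=\bigoplus_n \sqrt{|a_n|\gamma^n}\,u_i^{\otimes n},\qquad
V_j:=\bigoplus_n s_n\sqrt{|a_n|\gamma^n}\,v_j^{\otimes n}.
\]
Then $\langle U_i,V_j\rangle=\sum_n a_n(\gamma\langle u_i,v_j\rangle)^n=H^{-1}(\gamma\langle u_i,v_j\rangle)$, and $\|U_i\|^2,\|V_j\|^2=\sum|a_n|\gamma^n\le1$.

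Two points need care here. First, $\gamma\langle u_i,v_j\rangle$ must lie in the disc where the series converges and agrees with the inverse. Absolute convergence at radius $\gamma$ gives convergence on the closed disc of radius $\gamma$, and the identity theorem extends $H(\sum a_n\zeta^n)=\zeta$ to the open disc. The boundary case $|\langle u_i,v_j\rangle|=1$ is handled by continuity or by a small perturbation of $\gamma$. Second, the real argument $\sum a_n\zeta^n$ must lie in $(-1,1)$, which it does since its absolute value is at most $\sum|a_n|\gamma^n\le1$. Here too, the equality case can be avoided by shrinking $\gamma$ slightly, which costs nothing after taking a limit.

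Next, I pad the vectors to be exactly unit, keeping the inner products fixed. Take orthogonal extra directions: set $U_i':=U_i\oplus \sqrt{1-\|U_i\|^2}\,e_i\oplus 0$ and $V_j':=V_j\oplus 0\oplus\sqrt{1-\|V_j\|^2}\,e'_j$, with all $e_i,e'_j$ mutually orthogonal. Project onto a finite-dimensional span and pick an orthonormal basis. Let $G$ be a $k\times\infty$ Gaussian matrix with i.i.d. rows, and set $X^{(i)}:=GU_i'$ and $Y^{(j)}:=GV_j'$. Then each pair $(X^{(i)},Y^{(j)})$ is exactly as in \cref{def:gaussian-correlation} with $t=\langle U_i',V_j'\rangle$. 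Rounding with $\eps_i=f(X^{(i)})$ and $\delta_j=g(Y^{(j)})$ gives
\[
\tfrac{\pi}{2}\mathbb E[\eps_i\delta_j]=H(H^{-1}(\gamma\langle u_i,v_j\rangle))=\gamma\langle u_i,v_j\rangle .
\]
Thus $\mathbb E\sum a_{ij}\eps_i\delta_j=\frac{2\gamma}{\pi}\sum a_{ij}\langle u_i,v_j\rangle$. Some sign choice attains at least the expectation, which gives $K_G\le\pi/(2\gamma)$.

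The main obstacle is the domain bookkeeping: the identity $H\circ(\sum a_n\zeta^n)=\mathrm{id}$ is known only on a small neighborhood. Extending it to $|\zeta|\le\gamma$ needs care, since the power series of the inverse may converge on a larger disc than the one where $H$ is injective. I will handle this via the identity theorem applied to the holomorphic function $H(\sum a_n\zeta^n)-\zeta$ on the connected disc $|\zeta|<\gamma$. This function is well defined because $|\sum a_n\zeta^n|<1$ lies in the strip $S$.
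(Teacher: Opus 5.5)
Your proposal is correct and follows essentially the same route as the paper's proof (Theorem~\ref{thm:appendix-pure-krivine-rounding} in Appendix~\ref{app:krivine-preprocessing}). Both use the sign-twisted tensor embedding $\bigoplus_r \sqrt{|a_r|}\,\gamma^{r/2}x^{\otimes r}$, pad with orthogonal directions to get unit vectors, apply a Gaussian projection, and work at some $\gamma'<\gamma$ before letting $\gamma'\uparrow\gamma$. Your identity-theorem step, which uses $|\sum_n a_n\zeta^n|<1$ on $|\zeta|<\gamma$ to extend $H\circ H^{-1}=\mathrm{id}$ to the whole disc, is a useful detail that the appendix theorem instead assumes as a hypothesis; it matches what the paper does in the proof of Theorem~\ref{thm:upper_bound_limiting_scheme}.
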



With the above notation, we define the \textbf{inverse majorant series} for the rounding scheme $(f,g)$ as 
\[
M_{f,g}(s):=\sum_{n\ge 1}|a_n|s^n.
\]

We are now ready to define Krivine schemes. 

\begin{definition}[$\gamma$-admissible Krivine Rounding Scheme]
\label[definition]{def:gamma-admissible}
    For $f,g:\mathbb{R}^k\to\{-1,1\}$ odd and measurable functions with $H'_{f,g}(0) \neq 0$, and $\gamma > 0$, we say $(f,g)$ is a $\gamma$-admissible Krivine rounding scheme if the conditions of \cref{lemma:local-inverse-expansion} hold. 
\end{definition}
We note that Krivine~\cite{Kri77} used a $\gamma$-admissible Krivine rounding scheme to show that \(K_G \le \frac{\pi}{2\gamma}.\)

%% file: paper/limiting_schemes.tex
\subsection{Preliminaries}

We first state Vitali's theorem, which lets us convert pointwise convergence to uniform convergence. See~\cite{schiff1993normal} for a reference. 

\begin{fact}[Vitali's convergence theorem]
Let \(h_1,h_2,\ldots\) be holomorphic functions on a connected open set
\(\Omega\). Suppose that:
\begin{enumerate}
    \item the sequence \((h_N)_{N\geq 1}\) is locally bounded on \(\Omega\); and
    \item \(h_N(z)\) converges for every \(z\) in some subset of \(\Omega\)
    having an accumulation point in \(\Omega\).
\end{enumerate}
Then \((h_N)_{N\geq 1}\) converges locally uniformly on \(\Omega\) to a
holomorphic function.
\end{fact}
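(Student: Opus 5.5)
The plan is the standard normal-families argument: combine Montel's theorem with the identity theorem. Write $E\subset\Omega$ for the set on which $h_N(z)$ converges, and let $z_0\in\Omega$ be an accumulation point of $E$.

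\textbf{Step 1 (Montel).} First I show that a locally bounded family of holomorphic functions on $\Omega$ is normal. This means every subsequence has a further subsequence converging locally uniformly on $\Omega$.
\begin{itemize}
\item Fix a closed disc $\overline{D}(a,2r)\subset\Omega$ on which $|h_N|\le C$ for all $N$.
\item The Cauchy estimate gives $|h_N'|\le C/r$ on $\overline{D}(a,r)$, so the $h_N$ are uniformly bounded and equi-Lipschitz there.
\item Cover $\Omega$ by countably many such discs (equivalently, exhaust it by compacts $K_1\subset K_2\subset\cdots$).
\item Apply Arzel\`a--Ascoli on each $K_j$ and take a diagonal subsequence. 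This yields a subsequence converging uniformly on every compact subset of $\Omega$.
\end{itemize}

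\textbf{Step 2 (Weierstrass).} Each such locally uniform limit is holomorphic. For every triangle in a disc inside $\Omega$, the integral of $h_N$ is zero, and this passes to the limit under uniform convergence. Continuity of the limit plus Morera's theorem then gives holomorphy.

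\textbf{Step 3 (uniqueness of subsequential limits).} Suppose $h_{N_k}\to h$ and $h_{M_k}\to \tilde h$ locally uniformly along two subsequences.
\begin{itemize}
\item For every $z\in E$, the full sequence $h_N(z)$ converges, so $h(z)=\tilde h(z)$.
\item Thus the holomorphic function $h-\tilde h$ vanishes on a set with an accumulation point $z_0\in\Omega$.
\item Since $\Omega$ is connected, the identity theorem gives $h\equiv\tilde h$.
\end{itemize}

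\textbf{Step 4 (full-sequence convergence).} Suppose $(h_N)$ does not converge locally uniformly. Then, using Step 1 on the original sequence together with Step 3, the following holds. There is a compact $K\subset\Omega$, an $\varepsilon>0$, and a subsequence $(h_{N_k})$ with $\sup_K|h_{N_k}-h|\ge\varepsilon$, where $h$ is the common limit from Step 3 (pick any convergent subsequence to define $h$). By Step 1, $(h_{N_k})$ has a further subsequence converging locally uniformly to some $\tilde h$. By Step 3, $\tilde h=h$, which contradicts the $\varepsilon$-separation on $K$. Hence the whole sequence converges locally uniformly to $h$, and $h$ is holomorphic by Step 2.

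The only nontrivial ingredient is Step 1. The key point there is that local boundedness upgrades, via the Cauchy integral formula, to local equicontinuity. The remaining care is in the diagonal argument over an exhaustion by compacts.
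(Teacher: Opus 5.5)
Your argument is correct: Montel's theorem (Cauchy estimates plus Arzel\`a--Ascoli with a diagonal extraction), Weierstrass's theorem for the limit, the identity theorem on the connected set $\Omega$ to force uniqueness of subsequential limits, and the subsequence-of-a-subsequence contradiction together prove the full statement. The paper gives no proof of its own and simply cites Schiff's \emph{Normal Families}, where Vitali's (Vitali--Porter) theorem is proved by this same normal-families route, so your proposal matches the intended argument.
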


\subsection{Limiting Krivine rounding schemes}
\label{sec:LimitingKrivine}

In this section, we extend classical Krivine schemes by allowing them to have arbitrary correlations among variables (up to some criteria). We will show that we can state asymptotic statements about classical Krivine schemes using this extended class of schemes, that we call \textit{limiting Krivine schemes}. This gives us analytic machinery for searching and analyzing limits of high dimensional schemes that were previously inaccessible via standard techniques. It is first instructive for us to start with a motivating example. 

\paragraph{Motivating example.}
Consider the two-dimensional classical Krivine rounding scheme
\[
f(X_1,X_2)=\operatorname{sgn}(X_1+\lambda X_2),
\qquad
g(Y_1,Y_2)=\operatorname{sgn}(Y_1+\lambda Y_2),
\]
where \(X_1,X_2,Y_1,Y_2\) are standard Gaussians, \(X_1\) is independent of \(X_2\), \(Y_1\) is independent of \(Y_2\), and
\[
\operatorname{corr}(X_1,Y_1)
=
\operatorname{corr}(X_2,Y_2)
=
t.
\]
In a classical Krivine scheme, every pair of corresponding coordinates has correlation \(t\). Suppose instead that we would like the first pair to have correlation \(t\), but the second pair to have correlation \(t^3\). To obtain this from a classical high-dimensional scheme, define
\[
U_N=\frac{1}{\sqrt N}\sum_{j=1}^N \mathrm{He}_3(X_{2,j}),
\qquad
V_N=\frac{1}{\sqrt N}\sum_{j=1}^N \mathrm{He}_3(Y_{2,j}),
\]
where the pairs \((X_{2,j},Y_{2,j})\) are independent standard Gaussian pairs with correlation \(t\). The Hermite identity gives
\[
\mathbb{E}[U_NV_N]
=
\mathbb{E}[\mathrm{He}_3(X_{2,1})\mathrm{He}_3(Y_{2,1})]
=
t^3.
\]
Moreover, by the central limit theorem, \((U_N,V_N)\) converges in distribution to a standard Gaussian pair \((U,V)\) with correlation \(t^3\). Thus, the classical rounding schemes
\[
F_N=\operatorname{sgn}(X_1+\lambda U_N),
\qquad
G_N=\operatorname{sgn}(Y_1+\lambda V_N)
\]
converge to the fixed-dimensional Gaussian rounding scheme
\[
f(X_1,U)=\operatorname{sgn}(X_1+\lambda U),
\qquad
g(Y_1,V)=\operatorname{sgn}(Y_1+\lambda V),
\]
in which the two coordinate pairs have correlations \(t\) and \(t^3\), respectively.

To use this construction, we need to carefully justify the following analytic steps:
\begin{enumerate}
    \item Pointwise convergence of the correlation functions must be strengthened to locally uniform convergence;
    \item An inverse-majorant $\gamma$-admissibility proved for the limiting scheme must remain valid for all sufficiently large finite-dimensional approximations.
\end{enumerate}

We now generalize this construction and prove both properties. First, we define the notion of allowable correlation maps.  
\begin{definition}[Allowable correlation map]
\label{def:allowable-correlation-map}
A function \(\rho:(-1,1)\to\mathbb{R}\) is an \emph{allowable correlation map}
if it admits an expansion
\[
\rho(t)
=
\sum_{\substack{d\geq 1\\ d\ \mathrm{odd}}} c_d t^d,
\qquad
\sum_{\substack{d\geq 1\\ d\ \mathrm{odd}}}|c_d|
\leq 1.
\]
\end{definition}

We can now state the definition of limiting Krivine schemes. 

\begin{definition}[Limiting Krivine Schemes]
A $k$-dimensional \emph{limiting Krivine scheme} is a tuple $\mathcal{S} = (f, g, \rho)$, where  \(f,g:\mathbb{R}^k\to\{-1,1\}\) are odd measurable functions where the discontinuity sets have Gaussian measure zero, and \(\rho = (\rho_1,\ldots,\rho_k)\) is a tuple of allowable correlation maps.  

The arcsine-normalized correlation function of this scheme is given by 
\[
H_{f, g, \rho}(t)
=
\frac{\pi}{2}\mathbb{E}[f(X)g(Y)].
\]
where \(X,Y\in\mathbb{R}^k\) are jointly Gaussian such that
\[
\mathbb{E}[X_iX_j]
=
\mathbb{E}[Y_iY_j]
=
\delta_{ij},
\qquad
\mathbb{E}[X_iY_j]
=
\delta_{ij}\rho_i(t).
\]
\end{definition}

\begin{remark}
    This is a generalization of the usual high dimensional Krivine schemes with $\rho_i(t) = t$ for all $i$. 

    At this stage, \(H_{f, g, \rho}\) is defined only on \((-1,1)\); in particular, we do not assume that it extends holomorphically to the strip associated with classical Krivine schemes.
\end{remark}

The motivation for this definition is that every limiting Krivine scheme can be approximated arbitrarily well by ordinary finite-dimensional Krivine schemes, as we will show below. This allows us to use the usual projection scheme to round the vectors, and get the inverse-majorant machinery for free. We make this precise in the following theorem.  

\begin{theorem}[Finite-dimensional approximation of limiting schemes]
Let \(\mathcal S=(f,g,\rho_1,\ldots,\rho_k)\) be a limiting Krivine scheme. Then there exists a sequence of ordinary finite-dimensional Krivine schemes
\(
\mathcal S_1,\mathcal S_2,\ldots,
\)
of increasing dimension, whose correlation functions $H_{\mathcal S_1}, H_{\mathcal S_2}, \ldots$ converge to the correlation function $H_{\mathcal S}$ of \(\mathcal S\), uniformly on every compact subset of \((-1,1)\).
\end{theorem}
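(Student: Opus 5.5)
Following the motivating example, we realize each allowable correlation map \(\rho_i(t)=\sum_{d\ \mathrm{odd}}c_{i,d}t^d\) by sums of normalized odd Hermite polynomials over many independent blocks of classical coordinates, and then pass to the limit with the central limit theorem. First we truncate the series. Fix a degree \(D\) and replace \(\rho_i\) by \(\rho_i^{(D)}(t)=\sum_{d\le D}c_{i,d}t^d\). We then add a "slack" term \(r_{i,D}\,\cdot\) (independent noise) to make up the missing variance \(1-\sum_{d\le D}|c_{i,d}|\). The slack is chosen so that each marginal remains a standard Gaussian.

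\textbf{Construction.} For coordinate \(i\), degree \(d\le D\), and \(j=1,\dots,N\), take fresh classical coordinate pairs \((X_{i,d,j},Y_{i,d,j})\) with correlation \(t\). Set
\[
U_i^{N}=\sum_{d\le D}\sqrt{|c_{i,d}|}\,\frac{1}{\sqrt N}\sum_{j=1}^N \widetilde{\mathrm{He}}_d(X_{i,d,j})+\sqrt{r_i}\,\xi_i ,
\]
where \(\widetilde{\mathrm{He}}_d=\mathrm{He}_d/\sqrt{d!}\) and \(r_i=1-\sum_{d\le D}|c_{i,d}|\). The term \(\xi_i\) is built in one of two ways:
\begin{itemize}
\item For the \(f\)-side, \(\xi_i\) is an extra coordinate \(X_{i,0}\).
\item For the \(g\)-side, \(\xi_i\) is an independent coordinate used only by \(g\); this is allowed because \(f,g\) may depend on disjoint coordinates of \(\mathbb R^k\). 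It gives correlation \(0\).
\end{itemize}
For \(V_i^N\) we use the same formula with \(Y\)'s, and multiply the \(d\)-th term by \(\operatorname{sgn}(c_{i,d})\), so that \(\mathbb E[U_i^NV_i^N]=\sum_{d\le D}c_{i,d}t^d\) by the Hermite identity \(\mathbb E[\widetilde{\mathrm{He}}_d(X)\widetilde{\mathrm{He}}_d(Y)]=t^d\). The classical scheme is then \(F(X)=f(U^N)\), \(G(Y)=g(V^N)\). It is odd because every \(\widetilde{\mathrm{He}}_d\) with \(d\) odd is an odd function, so \(U^N(-X)=-U^N(X)\). It is a genuine finite-dimensional Krivine scheme. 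Its correlation function is \(\tfrac{\pi}{2}\mathbb E[f(U^N)g(V^N)]\).

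\textbf{Pointwise convergence.} For fixed \(t\in(-1,1)\), the multivariate CLT gives \((U^N,V^N)\Rightarrow(U,V)\), jointly Gaussian with unit variances and cross-covariances \(\delta_{ij}\rho_i^{(D)}(t)\). The summands are i.i.d. in \(j\) with finite second moments, so the CLT applies. Since \(f\otimes g\) is bounded and its discontinuity set has Gaussian measure zero, the portmanteau theorem gives \(\mathbb E[f(U^N)g(V^N)]\to H_{f,g,\rho^{(D)}}(t)\). For \(|t|<1\) the limit Gaussian is nondegenerate, so the discontinuity set is null for the joint law as well. Next, \(\rho^{(D)}\to\rho\) uniformly on \([-1,1]\) as \(D\to\infty\), because \(\sum|c_d|\le1\). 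The map from the cross-covariance vector to \(\mathbb E[f(X)g(Y)]\) is continuous in total variation as long as the covariance stays nondegenerate, i.e. \(\sup_i|\rho_i|<1\). This holds on compacts of \((-1,1)\) since \(|\rho_i(t)|\le |t|\). Hence \(H_{f,g,\rho^{(D)}}\to H_{\mathcal S}\) uniformly on compacts. A diagonal choice \(N=N(D)\) then gives the sequence \(\mathcal S_D\), once we upgrade the convergence in \(N\) to be uniform.

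\textbf{Uniformity (main obstacle).} Pointwise CLT convergence in \(N\) must be made locally uniform in \(t\). I would use Vitali's theorem, which the paper stated for exactly this purpose. By Lemma~\ref{lemma:correlation-profile-analytic}, each classical \(H_{\mathcal S_N}\) extends holomorphically to the strip \(|\operatorname{Re} z|<1\).
\begin{itemize}
\item \textbf{Local boundedness.} Expand \(H_{\mathcal S_N}(z)=\sum_m b_m^{(N)}z^m\) with \(\sum_m|b_m^{(N)}|\le \tfrac{\pi}{2}\|F\|\|G\|=\tfrac{\pi}{2}\), using the Hermite expansion and Cauchy--Schwarz. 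This bounds the family uniformly on the disc \(|z|<1\), which suffices for the connected domain \(\Omega=\{|z|<1\}\).
\item \textbf{Convergence on a set with an accumulation point.} Pointwise convergence holds on \((-1,1)\subset\Omega\).
\end{itemize}
Vitali then yields locally uniform convergence, hence uniform convergence on compacts of \((-1,1)\). The delicate points I expect are (i) the measure-zero discontinuity hypothesis needed in the portmanteau step, and (ii) checking that the Hermite-coefficient bound truly gives a uniform bound on the disc rather than only on the real interval.
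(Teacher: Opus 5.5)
Your proposal is correct and follows essentially the same route as the paper. Both realize each $\rho_i$ by odd Hermite sums with a slack coordinate of zero cross-correlation, get pointwise convergence from the multivariate CLT and the measure-zero discontinuity hypothesis, and upgrade it via Vitali on the unit disk using the Cauchy--Schwarz bound $|H_N(z)|\le \pi/2$. The only difference is that you truncate $\rho_i$ at degree $D$ and diagonalize, which costs you an extra continuity argument for $\mathbb{E}[f(X)g(Y)]$ in the covariance. The paper instead feeds the full $L^2$ series $\sum_d\sqrt{|c_d|}\operatorname{He}_d(x_1)+\sqrt{1-s}\,x_2$ directly into the CLT, since finite second moments suffice, so no second limit is needed.
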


We dedicate the rest of the section to proving this theorem. First, we prove an equivalence condition on allowability of a function. 

\begin{lemma}[Allowable function criteria]
A map \(\rho\) is allowable if and only if there are odd functions
\(p,q: \R^3 \to \R\), such that
\(\|p\|_2=\|q\|_2=1\) and
\[
\rho(t)=\mathbb{E}[p(U)q(V)],
\]
where \(U,V\in\mathbb{R}^m\) are standard Gaussian vectors with
\(\mathbb{E}[U_iV_j]=t\delta_{ij}\).
\end{lemma}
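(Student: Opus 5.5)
The plan is to work in the orthonormal multivariate Hermite basis and use Mehler's identity (the Ornstein--Uhlenbeck semigroup). For a multi-index $\alpha\in\N^m$ let $h_\alpha(x)=\prod_i \mathrm{He}_{\alpha_i}(x_i)/\sqrt{\alpha_i!}$; these form an orthonormal basis of $L^2(\gamma_m)$. The basic fact is that if $U,V$ are standard Gaussian vectors in $\R^m$ with $\mathbb E[U_iV_j]=t\delta_{ij}$ and $|t|<1$, then
\[
\mathbb E[h_\alpha(U)h_\beta(V)]=\delta_{\alpha\beta}\,t^{|\alpha|}.
\]
Equivalently, $\mathbb E[p(U)q(V)]=\langle p,T_tq\rangle$, where the Ornstein--Uhlenbeck operator $T_t$ is diagonal with eigenvalue $t^{|\alpha|}$ on $h_\alpha$. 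Since $T_t$ is a contraction on $L^2$, this identity extends from finite Hermite sums to all $p,q\in L^2$ by continuity. (I read the $\R^3$ in the statement as $\R^m$; $m=2$ will suffice for the construction.)

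\emph{($\Leftarrow$) Sufficiency of the Hermite representation.} Expand $p=\sum_\alpha \hat p_\alpha h_\alpha$ and $q=\sum_\alpha\hat q_\alpha h_\alpha$. Oddness of $p$ and $q$ kills every $\alpha$ with $|\alpha|$ even, since $h_\alpha(-x)=(-1)^{|\alpha|}h_\alpha(x)$. Hence
\[
\rho(t)=\sum_{d\ \mathrm{odd}} c_d t^d,\qquad c_d=\sum_{|\alpha|=d}\hat p_\alpha\hat q_\alpha .
\]
The series converges absolutely for $|t|<1$, so the rearrangement by degree is legitimate. Cauchy--Schwarz and Parseval then give
\[
\sum_d|c_d|\le\sum_\alpha|\hat p_\alpha\hat q_\alpha|\le\|p\|_2\|q\|_2=1,
\]
so $\rho$ is allowable.

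\emph{($\Rightarrow$) Necessity.} Let $\rho=\sum_{d\ \mathrm{odd}}c_dt^d$ with $\sum|c_d|\le 1$, and take $m=2$. Put all the prescribed mass on the first coordinate:
\[
p_0(x)=\sum_{d\ \mathrm{odd}}\sqrt{|c_d|}\,h_{(d,0)}(x),\qquad q_0(x)=\sum_{d\ \mathrm{odd}}\operatorname{sgn}(c_d)\sqrt{|c_d|}\,h_{(d,0)}(x).
\]
Both lie in $L^2$ with $\|p_0\|_2^2=\|q_0\|_2^2=\sum|c_d|\le1$. Both are odd, and $\mathbb E[p_0(U)q_0(V)]=\sum c_dt^d=\rho(t)$.

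To reach unit norm, set $\eta=\sqrt{1-\sum|c_d|}$ and add padding terms that are odd, live on the second coordinate, and are mutually orthogonal:
\[
p=p_0+\eta\,h_{(0,3)},\qquad q=q_0+\eta\,h_{(0,5)}.
\]
The multi-indices $(0,3)$ and $(0,5)$ differ from each other and from every $(d,0)$. The cross terms therefore vanish by the Mehler identity, so $\rho$ is unchanged while $\|p\|_2=\|q\|_2=1$.

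The only step needing genuine care, and the one I expect to be the main obstacle, is rigor in passing from finite Hermite sums to infinite series. This has two parts: justifying $\mathbb E[p(U)q(V)]=\sum_\alpha\hat p_\alpha\hat q_\alpha t^{|\alpha|}$ for general $L^2$ functions, and justifying the regrouping into a power series in $t$. I would handle both through the contraction property $\|T_tq\|_2\le\|q\|_2$ together with absolute summability from Cauchy--Schwarz. The padding trick is what removes the apparent difficulty when $\sum|c_d|<1$ and every odd degree is already in use on one coordinate; this is exactly why I use a second coordinate rather than working with $m=1$.
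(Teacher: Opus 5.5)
Your proposal is correct and follows the paper's proof. For the direction from $(p,q)$ to allowability, both use the degree-wise Hermite expansion together with Cauchy--Schwarz. For the converse, both place $\sqrt{|c_d|}$ and $\operatorname{sgn}(c_d)\sqrt{|c_d|}$ on the first coordinate and add orthogonal padding of weight $\sqrt{1-\sum|c_d|}$.

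The only differences are minor. The paper pads with $x_2$ in $p$ and $x_3$ in $q$ on $\mathbb{R}^3$, which are uncorrelated because they sit on different coordinates. You instead pad on a single second coordinate with the distinct Hermite degrees $3$ and $5$, which works equally well. You also spell out the $L^2$ justification via the Ornstein--Uhlenbeck contraction, which the paper leaves implicit.
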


\begin{proof}
Suppose first that \(p\) and \(q\) are odd and have \(L_2\)-norm one. Write
their Hermite expansions by degree as
\[
p=\sum_{\substack{d\geq 1\\ d\text{ odd}}}p_d,
\qquad
q=\sum_{\substack{d\geq 1\\ d\text{ odd}}}q_d.
\]
Here \(p_d\) and \(q_d\) are the degree-\(d\) Hermite components.

Different Hermite degrees are orthogonal, while degree \(d\) contributes a
factor \(t^d\). Hence
\[
\mathbb{E}[p(U)q(V)]
=
\sum_{\substack{d\geq 1\\ d\text{ odd}}}
\langle p_d,q_d\rangle t^d.
\]
Set \(c_d=\langle p_d,q_d\rangle\). Then
\[
\sum_d|c_d|
\leq
\sum_d\|p_d\|_2\|q_d\|_2
\leq
\left(\sum_d\|p_d\|_2^2\right)^{1/2}
\left(\sum_d\|q_d\|_2^2\right)^{1/2}
=
1.
\]
Thus the resulting correlation map is allowable.

Conversely, suppose that
\[
\rho(t)=\sum_{d\text{ odd}}c_dt^d,
\qquad
s:=\sum_{d\text{ odd}}|c_d|\leq 1.
\]
For \(x=(x_1,x_2,x_3)\in\mathbb{R}^3\), define
\[
p(x)
=
\sum_{d\text{ odd}}
\sqrt{|c_d|}\operatorname{He}_d(x_1)
+
\sqrt{1-s}\,x_2
\]
and
\[
q(x)
=
\sum_{d\text{ odd}}
\operatorname{sgn}(c_d)\sqrt{|c_d|}
\operatorname{He}_d(x_1)
+
\sqrt{1-s}\,x_3.
\]
Both functions are odd. Orthogonality gives \(\|p\|_2=\|q\|_2=1\).

Let \(U,V\in\mathbb{R}^3\) be standard Gaussian vectors with
\(\mathbb{E}[U_iV_j]=t\delta_{ij}\). The terms involving \(U_2\) and \(V_3\)
have zero cross-correlation. Therefore
\[
\mathbb{E}[p(U)q(V)]
=
\sum_{d\text{ odd}}c_dt^d
=
\rho(t).
\]
\end{proof}

We are now ready to prove our theorem, which follows as a direct corollary of the stronger statement we prove below. 

\begin{theorem}[Finite-dimensional approximation of limiting schemes]
Let \((f,g,\rho_1,\ldots,\rho_k)\) be a limiting Krivine scheme.
Then its correlation function extends holomorphically to the unit disk and is
the locally uniform limit there of the correlation functions of ordinary
finite-dimensional Krivine schemes.
\end{theorem}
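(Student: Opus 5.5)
We first realize each allowable map by Hermite polynomials, then lift the limiting scheme to ordinary finite-dimensional schemes via the central limit theorem, and finally upgrade pointwise convergence to locally uniform convergence on the disk with Vitali's theorem.

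\emph{Realizing the correlations.} By the allowable function criteria, for each \(i\) we fix odd \(p_i,q_i:\mathbb{R}^3\to\mathbb{R}\) of unit \(L_2\)-norm with \(\rho_i(t)=\mathbb{E}[p_i(U)q_i(V)]\). If \(\sum_d|c_{i,d}|\) is an infinite series, we first truncate each \(p_i,q_i\) to finitely many Hermite degrees and renormalize. The truncated \(\rho_i^{(M)}\) converge to \(\rho_i\) uniformly on the closed disk, since the coefficient tails are summable. With this reduction we may take \(p_i,q_i\) to be polynomials.

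\emph{Lifting to ordinary schemes.} For \(N\ge1\), take \(3kN\) coordinates, grouped into blocks \(X^{(i,j)}\in\mathbb{R}^3\) for \(i\le k\), \(j\le N\), each with correlation \(t\) against its partner. Set
\[
U_{i,N}=\frac{1}{\sqrt N}\sum_{j=1}^N p_i(X^{(i,j)}),\qquad V_{i,N}=\frac{1}{\sqrt N}\sum_{j=1}^N q_i(Y^{(i,j)}),
\]
and define \(F_N=f(U_{1,N},\dots,U_{k,N})\) and \(G_N=g(V_{1,N},\dots,V_{k,N})\). These are odd sign functions, because \(p_i,q_i\) are odd and \(f,g\) are odd, so each \((F_N,G_N)\) is an ordinary Krivine scheme. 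For fixed \(t\in(-1,1)\), the summands are i.i.d.\ in \(j\), centered, and independent across \(i\), with covariance \(\mathbb{E}[U_iU_{i'}]=\delta_{ii'}\) and cross-covariance \(\delta_{ii'}\rho_i(t)\). The multivariate CLT therefore gives \((U_N,V_N)\Rightarrow(X,Y)\) with exactly the law in the definition of \(H_{f,g,\rho}\). The map \((x,y)\mapsto f(x)g(y)\) is bounded and is continuous off a set of Gaussian measure zero; note the limit law may be degenerate when \(|\rho_i(t)|=1\), but this does not happen for \(|t|<1\). The continuous mapping theorem then yields \(H_{F_N,G_N}(t)\to H_{f,g,\rho}(t)\) pointwise on \((-1,1)\). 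If truncation was used, a diagonal sequence handles it: the limit-law correlation functions depend continuously on the \(\rho_i\) when \(|\rho_i|<1\).

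\emph{Holomorphy and uniformity.} Each \(H_{F_N,G_N}(t)=\frac{\pi}{2}\sum_{d\text{ odd}}\langle P_dF_N,P_dG_N\rangle t^d\). By Cauchy--Schwarz as in the criteria lemma, its coefficients satisfy \(\sum_d|b_d^{(N)}|\le\frac{\pi}{2}\). Hence every \(H_{F_N,G_N}\) is holomorphic on the unit disk with \(|H_{F_N,G_N}(z)|\le\pi/2\) there, which gives the local boundedness Vitali needs. The pointwise convergence holds on \((-1,1)\), a set with accumulation points in the disk. Vitali's theorem then gives locally uniform convergence on the disk to a holomorphic function, and this function agrees with \(H_{f,g,\rho}\) on \((-1,1)\). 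That limit is the claimed holomorphic extension, which also yields the previous theorem.

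\emph{Main obstacle.} The delicate step is the CLT and continuous-mapping step. It requires the discontinuity set of \(f\otimes g\) to be null for the limit law, which is where the hypothesis on discontinuity sets and \(|t|<1\) enter. It also requires that the polynomial summands have enough moments, which is automatic after truncation. Everything else is bookkeeping.
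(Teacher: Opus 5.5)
Your proposal is correct and is essentially the paper's proof: you realize each $\rho_i$ through the allowable-function lemma, form $F_N,G_N$ from normalized block sums, get pointwise convergence on $(-1,1)$ from the multivariate CLT and the null-discontinuity hypothesis, and upgrade to locally uniform convergence on the disk via the bound $|H_N|\le\pi/2$ and Vitali's theorem. The one difference is your truncation-plus-diagonal step, which is unnecessary and which the paper omits: the CLT needs only finite second moments, which the $L_2$ functions $p_i,q_i$ already have, and the discontinuity set of $(x,y)\mapsto f(x)g(y)$ is null under any coupling with standard Gaussian marginals, so degeneracy of the limit law is never an issue.
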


\begin{proof}
By the above lemma, for each \(i\in\{1,\ldots,k\}\), there exist odd measurable functions 
\(
p_i,q_i:\mathbb{R}^3\to\mathbb{R}
\)
with $\|p\|_2 = \|q\|_2 = 1$ such that, whenever \(U,V\in\mathbb{R}^3\) are joint standard Gaussian vectors with
\(
\mathbb{E}[U_aV_b]=t\delta_{ab},
\)
we have
\[
\mathbb{E}[p_i(U)q_i(V)]
=
\rho_i(t).
\]
We combine these functions into vector-valued maps. Write
\(u\in\mathbb{R}^{3k}\) in \(k\) blocks as
\[
u=\bigl(u^{(1)},\ldots,u^{(k)}\bigr),
\qquad
u^{(i)}\in\mathbb{R}^3,
\]
and define
\[
P(u)
=
\bigl(
p_1(u^{(1)}),\ldots,p_k(u^{(k)})
\bigr),
\]
\[
Q(u)
=
\bigl(
q_1(u^{(1)}),\ldots,q_k(u^{(k)})
\bigr).
\]
For each positive integer \(N\), define functions on
\(\bigl(\mathbb{R}^{3k}\bigr)^N\) by
\[
F_N(u_1,\ldots,u_N)
=
f\left(
\frac{1}{\sqrt{N}}\sum_{r=1}^N P(u_r)
\right)
\]
and
\[
G_N(v_1,\ldots,v_N)
=
g\left(
\frac{1}{\sqrt{N}}\sum_{r=1}^N Q(v_r)
\right).
\]
Since \(P\), \(Q\), \(f\), and \(g\) are odd, the functions \(F_N\) and
\(G_N\) are odd. They therefore define an ordinary finite-dimensional
Krivine scheme.

Let \(H_N\) denote the correlation function of this scheme. Fix
\(t\in(-1,1)\), and let
\[
(U_1,V_1),\ldots,(U_N,V_N)
\]
be independent pairs of standard Gaussian vectors in \(\mathbb{R}^{3k}\)
satisfying
\[
\mathbb{E}[(U_r)_a(V_r)_b]
=
t\delta_{ab}.
\]
Define
\[
A_N
=
\frac{1}{\sqrt{N}}\sum_{r=1}^N P(U_r),
\qquad
B_N
=
\frac{1}{\sqrt{N}}\sum_{r=1}^N Q(V_r).
\]
By the definition of the correlation function,
\[
H_N(t)
=
\frac{\pi}{2}\,
\mathbb{E}\bigl[f(A_N)g(B_N)\bigr].
\]

\begin{claim}
As \(N \rightarrow \infty\), we have \[
(A_N,B_N)
\xrightarrow{d}
(X,Y),
\]
where \((X,Y)\) is jointly Gaussian and satisfies
\[
\mathbb{E}[X_iX_j]
=
\mathbb{E}[Y_iY_j]
=
\delta_{ij},
\qquad
\mathbb{E}[X_iY_j]
=
\delta_{ij}\rho_i(t).
\]
\end{claim}

\begin{claimproof}
The random vectors
\[
\bigl(P(U_r),Q(V_r)\bigr)\in\mathbb{R}^{2k}
\]
are independent, identically distributed, and centered. Because the
\(k\) Gaussian blocks are independent, their covariance matrix is
\[
\begin{pmatrix}
I_k & D(t)\\
D(t) & I_k
\end{pmatrix},
\qquad
D(t)
=
\operatorname{diag}\bigl(
\rho_1(t),\ldots,\rho_k(t)
\bigr).
\]
By the multivariate central limit theorem, we have
\(
(A_N,B_N)
\xrightarrow[N\to\infty]{d}
(X,Y),
\)
where \((X,Y)\) is jointly Gaussian satisfying
\(
\mathbb{E}[X_iX_j]
=
\mathbb{E}[Y_iY_j]
=
\delta_{ij}\) and 
\(
\mathbb{E}[X_iY_j]
=
\delta_{ij}\rho_i(t),
\)
as desired. 
\end{claimproof}

Returning to the proof of the theorem, since the function
\(
(x,y)\longmapsto f(x)g(y)
\)
 is bounded and has discontinuity set with measure 0, it follows that
\[
\mathbb{E}\bigl[f(X_N)g(Y_N)\bigr]
\longrightarrow
\mathbb{E}\bigl[f(X(t))g(Y(t))\bigr].
\]
Consequently,
\[
H_N(t)\longrightarrow H(t)
\]
for every \(t\in(-1,1)\), where
\[
H(t)
=
\frac{\pi}{2}\,
\mathbb{E}\bigl[f(X(t))g(Y(t))\bigr]
\]
is the correlation function of the limiting scheme.

It remains to convert this pointwise convergence to locally uniform
convergence on the unit disk. For each \(N\), the Hermite expansions of
\(F_N\) and \(G_N\) give a holomorphic extension of \(H_N\) to
\[
\mathbb{D}
=
\{z\in\mathbb{C}:|z|<1\}.
\]
More explicitly, if
\[
F_N=\sum_{\alpha}\widehat{F_N}(\alpha)\operatorname{He}_{\alpha},
\qquad
G_N=\sum_{\alpha}\widehat{G_N}(\alpha)\operatorname{He}_{\alpha},
\]
then
\[
H_N(z)
=
\frac{\pi}{2}
\sum_{\alpha}
\widehat{F_N}(\alpha)\widehat{G_N}(\alpha)z^{|\alpha|}.
\]
For \(z\) in the unit disk, the Cauchy--Schwarz inequality gives
\[
\begin{aligned}
|H_N(z)|
&\leq
\frac{\pi}{2}
\sum_{\alpha}
\bigl|
\widehat{F_N}(\alpha)\widehat{G_N}(\alpha)
\bigr|
|z|^{|\alpha|} \\
&\leq
\frac{\pi}{2}
\left(
\sum_{\alpha}
|\widehat{F_N}(\alpha)|^2
\right)^{1/2}
\left(
\sum_{\alpha}
|\widehat{G_N}(\alpha)|^2
\right)^{1/2} \\
&\leq
\frac{\pi}{2}
\|F_N\|_2\|G_N\|_2
=
\frac{\pi}{2}.
\end{aligned}
\]
Here the second inequality uses \(|z|<1\), and the third follows from
the orthonormality of the Hermite polynomials.
Thus, the family \((H_N)_{N\geq 1}\) is locally bounded on \(\mathbb{D}\).
Since it converges pointwise on the interval \((-1,1)\), which has an
accumulation point in \(\mathbb{D}\). Vitali's theorem implies that
\((H_N)\) converges locally uniformly on \(\mathbb{D}\) to a holomorphic
function. On \((-1,1)\), this function agrees with \(H\). It is therefore
the holomorphic extension of the correlation function of the limiting
scheme.
\end{proof}
We now show the following. 
\begin{theorem}[Upper bound from a limiting Krivine scheme]
\label{thm:upper_bound_limiting_scheme}
Let
\(
\mathcal{S}=(f,g,\rho_1,\ldots,\rho_k)
\)
be a limiting Krivine scheme. Let \(H\) be its correlation
function. Suppose that \(H'(0)\neq 0\), so that \(H\) has a local holomorphic inverse
at the origin. Write
\[
H^{-1}(z)
=
\sum_{j=0}^{\infty}a_{2j+1}z^{2j+1},
\]
and define its majorant by
\[
M(r)
=
\sum_{j=0}^{\infty}|a_{2j+1}|r^{2j+1}.
\]
If there exists \(\gamma>0\), within the radius of convergence of
\(H^{-1}\), such that
\(
M(\gamma)=1,
\)
then
\(
K_G
\leq
\frac{\pi}{2\gamma}.
\)
\end{theorem}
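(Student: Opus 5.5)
The plan is to transfer $\gamma$-admissibility from the limiting scheme to the finite-dimensional approximants $H_N$ given by the preceding theorem, and then invoke \cref{lemma:local-inverse-expansion}. The inequality $M(\gamma)=1$ will in general not pass exactly to $H_N$, so I will work with a slightly smaller parameter. Fix $\gamma'<\gamma$. It suffices to show $K_G\le \pi/(2\gamma')$ for every such $\gamma'$, and then let $\gamma'\uparrow\gamma$.

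The first step is to show that the coefficients of the inverses converge, $a_n^{(N)}\to a_n$ for every $n$, uniformly enough that the majorants converge at $\gamma'$. By the preceding theorem, $H_N\to H$ locally uniformly on $\mathbb D$, so by Cauchy's estimates $H_N'(0)\to H'(0)\neq 0$. Hence $H_N'(0)\neq0$ for large $N$, and each $H_N$ has a local inverse by \cref{lemma:correlation-profile-analytic}. Moreover $|H_N|\le \pi/2$ on $\mathbb D$ uniformly in $N$.

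To obtain a uniform neighborhood, I will use a quantitative inverse function theorem (a Koebe/Landau-type bound). It gives a fixed disk $|\zeta|<\delta$ on which all the inverses $H_N^{-1}$ exist and are uniformly bounded. These inverses can be written via the Lagrange/Cauchy formula
\[
H_N^{-1}(\zeta)=\frac{1}{2\pi i}\oint_{|w|=r}\frac{wH_N'(w)}{H_N(w)-\zeta}\,dw,
\]
which converges to the corresponding integral for $H$ by uniform convergence. Thus $H_N^{-1}\to H^{-1}$ locally uniformly near $0$, and each coefficient converges: $a_n^{(N)}\to a_n$.

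The main obstacle is controlling the tails $\sum_{n>n_0}|a_n^{(N)}|(\gamma')^n$ uniformly in $N$. Coefficientwise convergence alone does not give this, because $\gamma$ may exceed the small uniform radius $\delta$. I would first try to show that $H_N^{-1}\to H^{-1}$ locally uniformly on the disk of radius $\gamma''$, where $\gamma'<\gamma''<\gamma$. Two facts support this. First, $\gamma$ lies inside the radius of convergence of $H^{-1}$. Second, $H^{-1}$ maps the disk of radius $\gamma''$ into the region where $|H^{-1}(\zeta)|\le M(\gamma'')<1$, that is, strictly inside $\mathbb D$, where $H_N\to H$ uniformly.

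On a slightly larger compact disk, $H$ is injective on the image $H^{-1}(\overline{D_{\gamma''}})$. Hurwitz/Rouché then shows that $H_N$ has a unique preimage there for large $N$, so $H_N^{-1}$ extends holomorphically to $D_{\gamma''}$ and converges uniformly to $H^{-1}$. Cauchy estimates on $|\zeta|=\gamma''$ then give $|a_n^{(N)}-a_n|\le \varepsilon_N(\gamma'')^{-n}$ with $\varepsilon_N\to0$. Summing against $(\gamma')^n$ yields
\[
M_N(\gamma')\le M(\gamma')+\varepsilon_N\frac{\gamma'/\gamma''}{1-\gamma'/\gamma''}.
\]
Since $M$ is strictly increasing and $M(\gamma)=1$, we have $M(\gamma')<1$, so $M_N(\gamma')\le1$ for large $N$.

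Finally, for such $N$ the ordinary Krivine scheme $\mathcal S_N$ is $\gamma'$-admissible. Applying \cref{lemma:local-inverse-expansion} gives $K_G\le\pi/(2\gamma')$, and letting $\gamma'\uparrow\gamma$ finishes the proof.
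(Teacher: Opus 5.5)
Your proposal is correct and follows essentially the same route as the paper. Both fix $\gamma'<\gamma$ with $M(\gamma')<1$, use that $H^{-1}$ maps a disk of radius in $(\gamma',\gamma)$ compactly into $\mathbb{D}$, and obtain the inverse branches of $H_N$ on that disk by Rouch\'e, converging uniformly to $H^{-1}$. Both then control all inverse coefficients by Cauchy estimates and apply the classical Krivine criterion before letting $\gamma'\uparrow\gamma$.

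The differences are minor. Your Koebe/Landau step is unnecessary, since the Rouch\'e argument already gives coefficient convergence. Your direct bound $|a_n^{(N)}-a_n|\le\varepsilon_N(\gamma'')^{-n}$ replaces the paper's head/tail split (based on $|A_{n,N}|\le B\rho^{-n}$) with a single geometric sum.
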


The only remaining thing left to show that an inverse-majorant bound for the
limiting correlation function passes to its finite-dimensional
approximations. We first prove this analytic stability statement.

\begin{lemma}[Finite inverse-majorant transfer]
\label{lem:finite-inverse-majorant-transfer}
Let $F_N,F$ be holomorphic on $\D$, suppose that $F_N\to F$ locally
uniformly on $\D$, and assume that $F_N(0)=F(0)=0$. Suppose that $G$ is a
holomorphic inverse branch of $F$ on a neighborhood of
$\overline{D(0,R_0)}$, takes values in $\D$, and satisfies
$
F(G(\zeta))=\zeta,$ with 
$
 |\zeta|\leq R_0.
$
Write
$
G(\zeta)=\sum_{n\geq 1}A_n\zeta^n.
$
If $0<\gamma<R_0$ and
$
\sum_{n\geq 1}|A_n|\gamma^n<1,
$
then, for all sufficiently large $N$, the function $F_N$ has a holomorphic
inverse branch
\[
G_N(\zeta)=F_N^{-1}(\zeta)=\sum_{n\geq 1}A_{n,N}\zeta^n
\]
defined on a neighborhood of $\overline{D(0,\gamma)}$, and
$
\sum_{n\geq 1}|A_{n,N}|\gamma^n<1.
$
Moreover, $A_{n,N}\to A_n$ for every fixed $n$.
\end{lemma}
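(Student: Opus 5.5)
We prove the lemma by constructing the inverse branch $G_N$ with Rouché's theorem and then obtaining the coefficient bound from Cauchy estimates and dominated convergence.

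\textbf{Step 1: the picture under $G$.} Choose $R$ with $\gamma<R<R_0$. Since $F(G(\zeta))=\zeta$, we have $G'(0)F'(0)=1$, so $G$ is injective near $0$. On all of $\overline{D(0,R_0)}$, $G$ is injective because it has the left inverse $F$. Set $\Omega:=G(D(0,R))$, an open set whose closure $K:=G(\overline{D(0,R)})$ is a compact subset of $\D$. Note that $F$ maps $\partial\Omega=G(\partial D(0,R))$ onto $\partial D(0,R)$.

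\textbf{Step 2: the inverse $G_N$ via Rouché.} Pick a compact neighborhood $K'\subset\D$ of $K$, for instance a closed $\delta$-neighborhood of $K$ contained in $\D$. Locally uniform convergence gives $\sup_{K'}|F_N-F|\to0$. For $|\zeta|\le R'$ with $\gamma<R'<R$, the function $F-\zeta$ has exactly one zero in $\Omega$, namely $G(\zeta)$, because $F$ is injective on $\Omega$: if $F(w)=F(w')$ with $w=G(a)$ and $w'=G(b)$, then applying $F$ gives $a=b$. On $\partial\Omega$ we have $|F-\zeta|\ge R-R'>0$. For $N$ large with $\sup_{\partial\Omega}|F_N-F|<R-R'$, Rouché's theorem (applied on the Jordan domain $\Omega$, the image of a disk under a conformal map) shows that $F_N-\zeta$ has exactly one zero $G_N(\zeta)\in\Omega$. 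The residue formula
\[
G_N(\zeta)=\frac{1}{2\pi i}\oint_{\partial\Omega}\frac{wF_N'(w)}{F_N(w)-\zeta}\,dw
\]
then shows that $G_N$ is holomorphic on $D(0,R')$ and satisfies $F_N\circ G_N=\mathrm{id}$ and $G_N(0)=0$, since $F_N(0)=0$ and $0\in\Omega$. The same formula, together with $F_N'\to F'$ uniformly on $\partial\Omega$ (Weierstrass), gives $G_N\to G$ uniformly on $\overline{D(0,R')}$.

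\textbf{Step 3: coefficient transfer.} By Cauchy's estimates on the circle of radius $R'$,
\[
|A_{n,N}-A_n|\le \|G_N-G\|_{\infty,R'}\,R'^{-n}.
\]
This gives $A_{n,N}\to A_n$ for each fixed $n$. Summing,
\[
\sum_n|A_{n,N}-A_n|\gamma^n\le \|G_N-G\|_{\infty,R'}\sum_n(\gamma/R')^n\to0.
\]
Hence $\sum_n|A_{n,N}|\gamma^n$ converges to $\sum_n|A_n|\gamma^n<1$ and is therefore $<1$ for $N$ large.

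\textbf{Main obstacle.} The delicate step is Step 2: showing that $G_N$ is a genuine single-valued inverse branch on a fixed disk, uniformly in $N$. This requires injectivity of $F$ on $\Omega$ and a clean contour to which the argument principle applies. If applying Rouché on the image domain $\Omega$ is awkward, I would instead use Hurwitz/Rouché on small disks around each point $G(\zeta)$, combined with a uniform lower bound on $|F'|$ over $K$. Such a bound exists because $F'(G(\zeta))G'(\zeta)=1$ and $G'$ is bounded on the compact set $\overline{D(0,R')}$.
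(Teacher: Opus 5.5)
Your proposal is correct and follows essentially the same route as the paper: Rouch\'e's theorem on $\Omega=G(D(0,R))$ using $|F|=R$ on $\partial\Omega$, the argument-principle formula for $G_N$, and uniform convergence $G_N\to G$ on a closed disk of radius larger than $\gamma$. The only difference is the last step. You apply Cauchy's estimate to $G_N-G$ and get $\sum_n|A_{n,N}-A_n|\gamma^n\to0$ in one line, whereas the paper bounds $|A_{n,N}|\le B\rho^{-n}$ with $B=\sup_{\overline\Omega}|w|$ and splits the sum into a convergent head and a uniformly small tail; your version is slightly shorter and also gives convergence of the majorant values themselves.
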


\begin{proof}
The proof has three steps. First, Rouch\'e's theorem shows that the inverse
branch of $F$ persists for $F_N$. Second, the resulting inverse branches
converge to $G$, which controls every fixed coefficient. Finally, a Cauchy
estimate on a slightly larger disk controls the remaining tail uniformly in
$N$.

Choose radii
\(
\gamma<\rho<r<R<R_0
\)
and set
\(
\Omega=G(D(0,R)).
\)
Because $G$ is an inverse branch of $F$, it is injective and has nonzero
derivative on a neighborhood of $\overline{D(0,R)}$. Hence
\(
\partial\Omega=G(\partial D(0,R)).
\)
For $w\in\partial\Omega$, we therefore have $|F(w)|=R$. Thus, for every
$|\zeta|\leq r$,
\[
|F(w)-\zeta|\geq R-r,
\qquad w\in\partial\Omega.
\]
The contour $\partial\Omega$ is a compact subset of $\D$. Since
$F_N\to F$ locally uniformly, for all sufficiently large $N$,
\[
\sup_{w\in\partial\Omega}|F_N(w)-F(w)|<\frac{R-r}{2}.
\]
Fix such an $N$. For every $|\zeta|\leq r$, Rouch\'e's theorem implies that
$F_N(w)-\zeta$ and $F(w)-\zeta$ have the same number of zeros in $\Omega$,
counted with multiplicity. The equation $F(w)=\zeta$ has exactly one
solution in $\Omega$, namely $w=G(\zeta)$. Therefore $F_N(w)=\zeta$ also has
exactly one solution in $\Omega$.

These solutions define a holomorphic function $G_N$ on $D(0,r)$. One may
write it using the argument principle as
\[
G_N(\zeta)
=
\frac{1}{2\pi i}
\int_{\partial\Omega}
\frac{wF_N'(w)}{F_N(w)-\zeta}\,dw,
\qquad |\zeta|<r.
\]
This is the unique solution of $F_N(w)=\zeta$ in $\Omega$. In particular,
$G_N(0)=0$ and
\(
F_N(G_N(\zeta))=\zeta.
\)
Thus $G_N$ is the inverse branch of $F_N$ at the origin.

The same argument-principle formula holds for $G$. The denominators above
are uniformly bounded away from zero on
$\partial\Omega\times\overline{D(0,r)}$, and local uniform convergence of
$F_N$ implies uniform convergence of $F_N'$ to $F'$ on $\partial\Omega$.
It follows that
\(
G_N\longrightarrow G
\)
uniformly on $\overline{D(0,r)}$. By Cauchy's coefficient formula,
\(
A_{n,N}\longrightarrow A_n
\)
for every fixed $n$.

It remains to control all coefficients at once. Let
\(
B=\sup_{w\in\overline\Omega}|w|.
\)
Since $G_N(D(0,r))\subseteq\Omega$, Cauchy's estimate on the circle
$|\zeta|=\rho$ gives
\(
|A_{n,N}|\leq B\rho^{-n},
\qquad n\geq 1,
\)
for all sufficiently large $N$.

Set
\(
\delta
=
1-\sum_{n\geq 1}|A_n|\gamma^n
>0.
\)
Choose $M$ large enough that
\(
B\sum_{n>M}\left(\frac{\gamma}{\rho}\right)^n
<\frac{\delta}{3}.
\)
Since the first $M$ coefficients converge, for all sufficiently large $N$,
\(
\sum_{n\leq M}|A_{n,N}|\gamma^n
\leq
\sum_{n\leq M}|A_n|\gamma^n
+
\frac{\delta}{3}.
\)
Combining the head and tail bounds gives
\begin{align*}
\sum_{n\geq 1}|A_{n,N}|\gamma^n
&\leq
\sum_{n\leq M}|A_n|\gamma^n
+
\frac{\delta}{3}
+
B\sum_{n>M}\left(\frac{\gamma}{\rho}\right)^n \\
&<
\sum_{n\geq 1}|A_n|\gamma^n
+
\frac{2\delta}{3}
<1.
\end{align*}
This proves the claim.
\end{proof}
We will now prove ~\cref{thm:upper_bound_limiting_scheme}.

\begin{proof}
We first move from $\gamma$ to an arbitrary smaller radius $\gamma'<\gamma$, where the majorant inequality is strict. We then transfer this strict certificate to a finite-dimensional approximation and finally let $\gamma'$ approach $\gamma$.

Let $H_N$ be the correlation functions of the ordinary finite-dimensional
Krivine schemes supplied by the finite-dimensional approximation theorem.
Thus
\(
H_N\longrightarrow H
\)
locally uniformly on $\D$. Since all of the schemes are odd,
$H_N(0)=H(0)=0$.

The equality $M(\gamma)=1$ need not remain true under a small perturbation,
so we first work at a slightly smaller radius. Fix any
\(
0<\gamma'<\gamma.
\)
Because
\(
a_1=\frac{1}{H'(0)}\neq 0,
\)
the majorant is strictly increasing on the positive real axis. Hence
\(
M(\gamma')<M(\gamma)=1.
\)
Choose $R_0$ such that
\(
\gamma'<R_0<\gamma.
\)
Then $M(R_0)<1$. If $G=H^{-1}$ denotes the inverse power series, then for
$|z|\leq R_0$,
\(
|G(z)|
\leq
M(|z|)
\leq
M(R_0)
<1.
\)
Thus $G$ is holomorphic on a neighborhood of
$\overline{D(0,R_0)}$ and takes values in $\D$. Moreover, the identity
$H(G(z))=z$ holds near the origin. Both sides are holomorphic on
$D(0,R_0)$, so the identity theorem gives
\[
H(G(z))=z,
\qquad |z|<R_0.
\]
Therefore $G$ is an inverse branch of $H$ on this disk.

We may now apply Lemma~\ref{lem:finite-inverse-majorant-transfer} at the
radius $\gamma'$. For all sufficiently large $N$, the inverse branch of
$H_N$ has an expansion
\[
H_N^{-1}(z)=\sum_{n\geq 1}a_{n,N}z^n
\]
satisfying
\[
\sum_{n\geq 1}|a_{n,N}|(\gamma')^n<1.
\]
The classical Krivine criterion applied to this finite-dimensional scheme
therefore yields
\(
\KG\leq \frac{\pi}{2\gamma'}.
\)
This holds for every $\gamma'<\gamma$. Letting $\gamma'$ increase to
$\gamma$ gives
\[
\KG\leq \frac{\pi}{2\gamma},
\]
as claimed.
\end{proof}

%% file: paper/asymptotic_scheme.tex
\section{The Cubic--Quintic Scheme}
\label{sec:scheme}

The previous section lets us work directly with fixed-dimensional limiting
Krivine schemes. We now define the two-dimensional scheme used for our upper
bound. Its first coordinate combines the correlation powers \(t,t^3,t^5\),
while its second coordinate keeps correlation \(t\). In the next two sections,
we compute its correlation function and certify its inverse majorant.

We first describe the general form of the construction. Let
\(D\subseteq\{3,5,7,\ldots\}\) be finite, choose weights
\((s_d)_{d\in D}\), and set
\[
    \sigma_d=(-1)^{(d-1)/2},
    \qquad
    V_D=1+\sum_{d\in D}s_d^2.
\]
A degree-\(d\) Hermite direction contributes the correlation \(t^d\).
After combining these directions into one normalized Gaussian coordinate,
the resulting correlation map is
\begin{equation}
    \rho_D(t)
    =
    \frac{t+\sum_{d\in D}\sigma_d s_d^2 t^d}{V_D}.
    \label{eq:general-hermite-correlation-map}
\end{equation}
The map \(\rho_D\) is allowable, since the absolute values of its
coefficients sum to one. Thus, for any \(\eta\in\mathbb{R}\), the partitions
\[
    \operatorname{sgn}\left(
        w+\frac{\eta}{\sqrt{V_D}}\He_3(x)
    \right)
    \qquad\text{and}\qquad
    \operatorname{sgn}\left(
        w-\frac{\eta}{\sqrt{V_D}}\He_3(x)
    \right)
\]
together with the correlation maps \(\rho_D(t)\) and \(t\) define a
two-dimensional limiting Krivine scheme.

For the scheme used in our proof, we take \(D=\{3,5\}\). The cubic
direction is anti-aligned between the two sides, while the quintic direction
is aligned. We use the parameters
\begin{equation}
    \eta=0.136419125,
    \qquad
    s_3=0.34101124,
    \qquad
    s_5=0.05276111.
    \label{eq:params}
\end{equation}
Set
\[
    V=1+s_3^2+s_5^2,
    \qquad
    \vartheta=\frac{\eta}{\sqrt V},
\]
and define
\begin{equation}
    \rho(t)
    =
    \frac{t-s_3^2t^3+s_5^2t^5}{V}.
    \label{eq:rho-collapsed}
\end{equation}

\begin{definition}[Cubic--quintic limiting scheme]
\label{def:cubic-quintic-scheme}
Define \(f,g:\mathbb{R}^2\to\{-1,1\}\) by
\begin{equation}
    f(w,x)
    =
    \sgn\bigl(w+\vartheta\He_3(x)\bigr),
    \qquad
    g(w,x)
    =
    \sgn\bigl(w-\vartheta\He_3(x)\bigr).
    \label{eq:cubic-quintic-partitions}
\end{equation}
Our construction is the limiting Krivine scheme
\[
    \mathcal{S}=(f,g,\rho_1,\rho_2),
    \qquad
    \rho_1(t)=\rho(t),
    \qquad
    \rho_2(t)=t.
\]
We write \(H\) for its correlation function.
\end{definition}

The map \(\rho\) is allowable because the absolute values of its
coefficients sum to \((1+s_3^2+s_5^2)/V=1\). The partitions are odd and
satisfy \(g(w,x)=f(w,-x)\). Their threshold boundaries also have Gaussian
measure zero, so the approximation theorem from the previous section
applies.

More explicitly, let \((W,X)\) and \((W',X')\) be standard Gaussian vectors.
Assume that the two coordinate pairs are independent and that
\(\operatorname{corr}(W,W')=\rho(t)\) and
\(\operatorname{corr}(X,X')=t\). Then
\begin{equation}
    H(t)
    =
    \frac{\pi}{2}
    \E\bigl[f(W,X)g(W',X')\bigr].
    \label{eq:lim-H}
\end{equation}

The signs and weights in \(\rho\) are chosen to make the cubic and quintic
coefficients of \(H\) nearly vanish. This in turn suppresses the first two
nonlinear coefficients of \(H^{-1}\). Section~\ref{sec:coefficients} makes
this relation explicit.

\begin{theorem}[Cubic--quintic upper bound]
\label{thm:cubic-quintic-upper-bound}
Let \(M\) be the inverse majorant of \(H\). At
\(\gamma=0.881545409\), we have \(M(\gamma)<1\). Consequently,
\[
    K_G
    \leq
    \frac{\pi}{2\gamma}
    \leq
    1.7818666069360661
    <
    \frac{\pi}{2\log(1+\sqrt{2})}
    -
    3.4737\cdot 10^{-4}.
\]
\end{theorem}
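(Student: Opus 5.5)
The plan is to reduce the statement to \cref{thm:upper_bound_limiting_scheme} and certify the majorant inequality without inverting $H$ explicitly. Write $H(t)=b_1t+\sum_{m\ge3}b_mt^m$ and set $\Delta_H:=\sum_{m\ge3}|b_m|$. The first step is an analytic lemma: if $b_1>0$ and $\gamma+\Delta_H<b_1$, then $M(\gamma)<1$.

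\emph{Proof of the lemma.} Consider the majorant function $\widetilde H(s)=b_1s-\sum_{m\ge3}|b_m|s^m$. Its formal inverse $\widetilde G$ has coefficients dominating those of $H^{-1}$ in absolute value. This is seen by comparing the Lagrange-inversion and recursive formulas: the recursion for the inverse coefficients of $b_1s-\sum|b_m|s^m$ involves only nonnegative terms, and these bound the signed recursion for $H^{-1}$. Hence $M(\gamma)\le\widetilde G(\gamma)$ whenever $\widetilde G(\gamma)$ converges. Since $\widetilde H(1)=b_1-\Delta_H>\gamma$ and $\widetilde H$ is increasing on $[0,s^*]$ up to the point where it reaches $\widetilde H(1)$ (or up to $1$), the branch $\widetilde G$ is analytic on $|\zeta|\le\gamma$. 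Moreover $\widetilde G(\gamma)<1$, so $M(\gamma)<1$.

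The argument also needs $\gamma$ to lie within the radius of convergence of $H^{-1}$, which follows from the same domination. Because $M$ is continuous and increasing, the case $M(\gamma)<1$ feeds directly into \cref{thm:upper_bound_limiting_scheme}: either apply it at the larger $\gamma''$ with $M(\gamma'')=1$, or rerun its proof, which only used the strict inequality at smaller radii.

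Second, I compute the coefficients $b_m$. By independence of the two coordinate pairs, expand $f(w,x)$ in Hermite polynomials as $\sum_{a,b}\hat f(a,b)\psi_a(w)\psi_b(x)$, and do the same for $g$, using $\hat g(a,b)=(-1)^b\hat f(a,b)$. Then
\[
H(t)=\tfrac{\pi}{2}\sum_{a,b}(-1)^b\hat f(a,b)^2\rho(t)^a t^b .
\]
The coefficients $\hat f(a,b)$ are, after integrating out $w$, one-dimensional Gaussian integrals of $\Phi$-type functions of $\vartheta\He_3(x)$ against $\psi_b(x)$. Substituting $\rho(t)$ and expanding gives $b_1$ and finitely many $b_3,\dots,b_{m_0}$ as explicit finite sums, which I would evaluate with rigorous interval arithmetic.

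Third, I need a tail bound for $\sum_{m>m_0}|b_m|$. The plan is to exploit that $H$ is holomorphic on $\D$ (by the finite-dimensional approximation theorem), bounded by $\pi/2$, and in fact controlled on the boundary. Using $D=t\,d/dt$, we have $\sum_m m^3|b_m|^2\lesssim\|D^3H\|$ in a boundary $L^2$ norm, and Cauchy--Schwarz then gives
\[
\sum_{m>m_0}|b_m|\le \Bigl(\sum m^{-6}\Bigr)^{1/2}\Bigl(\sum m^6|b_m|^2\Bigr)^{1/2},
\]
that is, a tail of order $m_0^{-5/2}$ times a boundary norm of $D^3H$. Bounding that norm requires uniform estimates on $H$ and its derivatives near $|t|=1$, which may be degenerate since $\rho(t)\to\pm1$ there.

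I expect this tail estimate to be the main obstacle. The margin $b_1-\gamma-\Delta_H$ is tiny, of order $10^{-4}$, so the tail must be extremely small and the boundary estimate sharp. My first attempt would be to bound $D^3H$ termwise via the Hermite representation, using $\sum_{a,b}\hat f(a,b)^2=1$ together with polynomial weights $(a\deg\rho+b)^3$ and decay of $\hat f(a,b)$ coming from smoothness in $x$ of the integrated-out $w$ kernel. The final step is a finite interval-arithmetic check of $\gamma+\Delta_H<b_1$ at $\gamma=0.881545409$, followed by the numerical comparison with $\pi/(2\log(1+\sqrt2))$.
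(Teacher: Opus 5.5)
\textbf{There is a genuine gap: the crux step, bounding $\|D^3H\|_{L^2(\mathbb T)}$, cannot work as you propose it.}

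\emph{What matches the paper.} Your architecture is the paper's. Your majorizing inverse $\widetilde G$ of $b_1s-\sum_{m\ge3}|b_m|s^m$ is exactly the paper's series $V$, defined by $b_1V=z+E_{\mathrm{maj}}(V)$, and your near-linearity criterion $\gamma+\Delta_H<b_1$ is the paper's. The coefficient formula $H=\frac{\pi}{2}\sum_{a,b}\hat f(a,b)^2\rho(t)^a(-t)^b$ is the same, as is the Parseval/Cauchy--Schwarz tail bound $\sum_{m>N}|b_m|\le\|D^3H\|_{L^2(\mathbb T)}(10N^5)^{-1/2}$.

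\emph{A soft spot in your lemma.} Analyticity of the real inverse of $\widetilde H$ along $[0,\gamma]$ does not by itself make the power series of $\widetilde G$ converge on $|\zeta|\le\gamma$. You need Pringsheim's theorem (the coefficients are nonnegative). Alternatively, use the paper's monotone iteration $V_{k+1}=(z+E_{\mathrm{maj}}(V_k))/b_1$ at some $R>\gamma$. That gives $V(R)\le(R+\Delta_H)/b_1<1$, and with it the room beyond $\gamma$ that the transfer lemma needs.

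\emph{Why your route to $\|D^3H\|_{L^2(\mathbb T)}$ fails.} Any bound that takes absolute values term by term in the Hermite index $a$ of the $w$-coordinate diverges. Since $f$ is a sign function in $w$, $\sum_b\hat f(a,b)^2=\mathbb E[q_a(\vartheta\operatorname{He}_3(X))^2]$ decays only like $a^{-3/2}$. Hence $\sum_{a,b}\hat f(a,b)^2(5a+b)^3=\infty$. Keeping the factor $|\rho(e^{i\theta})|^a$ does not rescue this, because $\rho(\pm i)=\pm i$. Near $\theta=\pm\pi/2$ one has $|\rho|\approx1-c(\theta\mp\pi/2)^2$. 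The termwise majorant of $|D^3H(e^{i\theta})|$ is then of order $|\theta\mp\pi/2|^{-5}$, which is not square-integrable. A Minkowski bound in $L^2(\mathbb T)$ fails too: it is at least $\sum_a a^{3}\|\rho^a\|_{L^2(\mathbb T)}a^{-3/2}\asymp\sum_a a^{5/4}$. Your stated worry is also misplaced: $\rho$ does not approach $\pm1$ on the circle, and the paper certifies $|1-\rho(e^{i\theta})^2|\ge0.3724$. The finiteness of the norm comes from cancellation invisible at the coefficient level. The sign-correlation kernel $C_r(x,y)$ continues analytically in $r$ across $|r|=1$ away from $r=\pm1$, even though its Hermite series in $r$ does not converge absolutely there.

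\emph{The missing idea.} Leave the Hermite series in $w$ and work with closed-form kernels:
\begin{itemize}
\item Write $H(t)=T_{-t}[C_{\rho(t)}]$, where $T_\lambda[\Psi]=\widehat\Psi_{0,0}+\sum_{n\ge1}\lambda^n\widehat\Psi_{n,n}$ is the diagonal Hermite trace in $(x,y)$.
\item Use $\partial_rC_r=2\pi K_r(u,v)$ and $\partial_x\partial_yC_r=2\pi\vartheta^2\operatorname{He}_3'(x)\operatorname{He}_3'(y)K_r(u,v)$, with $u=\vartheta\operatorname{He}_3(x)$, $v=\vartheta\operatorname{He}_3(y)$, and $K_r$ the closed-form Mehler kernel.
\item Differentiate the trace via $D_tT_{-t}[\Psi]=T_{-t}[D_t\Psi-t\partial_x\partial_y\Psi]$. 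This expresses $D^3H$ through nine explicit smooth kernels.
\item Control the trace up to $|t|=1$ by $|T_\lambda[\Psi]|\le\|\Psi\|_{L^2(\mu)}+\frac{\pi}{\sqrt6}\|\partial_x\partial_y\Psi\|_{L^2(\mu)}$.
\end{itemize}
Only then is the boundary energy a finite integral that interval arithmetic can certify, giving $\|D^3H\|_{L^2(\mathbb T)}<14.443$. Combined with the certified head through degree $251$, this yields $\Delta_H<1.5905\cdot10^{-5}$ and closes $\gamma+\Delta_H<b_1$.
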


%% file: paper/coefficient_formula.tex

\section{The Correlation Coefficients}
\label{sec:coefficients}

The previous section defined our two-dimensional limiting scheme, whose two
correlation maps are \(\rho(t)\) and \(t\). We now compute its correlation
function
\[
  H(t)=\sum_{m\geq 1} b_m t^m.
\]
The key observation is that \(g(w,x)=f(w,-x)\). Thus, the Hermite
coefficients of \(f\) determine the entire series \(H\).

We use standard facts about Hermite polynomials. See  \cite{Andrews_Askey_Roy_1999} for a reference.  

Let $\mu$ be the standard Gaussian measure on $\R$, so that
\[
    d\mu(x):=\frac{1}{\sqrt{2\pi}}e^{-x^2/2}\,dx.
\]
Let \(\HermBasis_n\) denote the \(n\)-th orthonormal probabilist's Hermite polynomial. Then \(\{\HermBasis_n\}_{n\ge0}\) is an orthonormal basis for \(L^2(\mu)\).

Let \(W,X\) be independent standard Gaussians. For \(a,b\geq 0\), write
\begin{equation}
  \mathsf A_{a,b}
  =
  \E\!\left[
    f(W,X)\He_a(W)\He_b(X)
  \right].
  \label{eq:mathsfAab}
\end{equation}
Then, in \(L_2\) of the standard Gaussian measure on \(\mathbb{R}^2\),
\[
  f(w,x)
  =
  \sum_{a,b\geq 0}
  \mathsf A_{a,b}\He_a(w)\He_b(x).
\]
Since \(f(-w,-x)=-f(w,x)\), we have \(\mathsf A_{a,b}=0\) whenever
\(a+b\) is even. For a power series \(P\), we write \([t^m]P(t)\) for the
coefficient of \(t^m\).

\begin{theorem}[Coefficient formula]
\label{thm:coefficient-formula}
For every \(t\in(-1,1)\),
\begin{equation}
  H(t)
  =
  \frac{\pi}{2}
  \sum_{a,b\geq 0}
  \mathsf A_{a,b}^2\rho(t)^a(-t)^b.
  \label{eq:H-coefficient-formula}
\end{equation}
Consequently, for every \(m\geq 1\),
\begin{equation}
  b_m
  =
  \frac{\pi}{2}
  \sum_{\substack{a,b\geq 0\\a+b\leq m}}
  (-1)^b\mathsf A_{a,b}^2
  [t^{m-b}]\rho(t)^a.
  \label{eq:bm-coefficient-formula}
\end{equation}
In particular, \(b_m=0\) whenever \(m\) is even.
\end{theorem}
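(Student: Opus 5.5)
The approach is a Hermite (Mehler) expansion of \(\E[f(W,X)g(W',X')]\) with respect to the product structure of the two independent coordinate pairs.

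First, expand \(f\) in the orthonormal tensor basis, \(f=\sum_{a,b}\mathsf A_{a,b}\He_a(w)\He_b(x)\), with \(\sum_{a,b}\mathsf A_{a,b}^2=\|f\|_2^2=1\). Since \(g(w,x)=f(w,-x)\) and \(\He_b(-x)=(-1)^b\He_b(x)\), we get \(g=\sum_{a,b}(-1)^b\mathsf A_{a,b}\He_a(w)\He_b(x)\).

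Next, use the standard correlation identity for orthonormal Hermite polynomials. If \((Z,Z')\) are standard Gaussians with correlation \(r\in(-1,1)\), then \(\E[\He_a(Z)\He_{a'}(Z')]=\delta_{aa'}r^a\). Since the pairs \((W,W')\) and \((X,X')\) are independent, with correlations \(\rho(t)\) and \(t\), this gives
\[
\E\bigl[\He_a(W)\He_b(X)\He_{a'}(W')\He_{b'}(X')\bigr]=\delta_{aa'}\delta_{bb'}\rho(t)^a t^b .
\]

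To justify exchanging the expectation with the double sum, I would use the Ornstein--Uhlenbeck / Mehler operator. The map \(\phi\mapsto \E[\phi(W',X')\mid W,X]\) equals \(T_{\rho(t)}\otimes T_t\), which acts diagonally on the basis with eigenvalues \(\rho(t)^a t^b\). This operator is a contraction on \(L^2\) because \(|\rho(t)|\le\sum|c_d||t|^d<1\). Hence \(\E[f(W,X)g(W',X')]=\langle f,(T_{\rho(t)}\otimes T_t)g\rangle\), and the pairing can be computed termwise in \(L^2\). This yields \eqref{eq:H-coefficient-formula}, and the series converges absolutely since \(\sum\mathsf A_{a,b}^2=1\).

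For \eqref{eq:bm-coefficient-formula}, I would extract coefficients. The series \(\sum\mathsf A_{a,b}^2\rho(z)^a(-z)^b\) converges absolutely and uniformly on \(|z|\le r<1\), because \(|\rho(z)|\le\rho^*(|z|)\le|z|\), where \(\rho^*\) is the majorant with coefficients \(|c_d|\). So it defines a holomorphic function, and Taylor coefficients may be taken term by term. Since \(\rho(t)=O(t)\), the power \(\rho(t)^a\) has no terms of degree below \(a\), so only \(a+b\le m\) contributes, and \([t^m]\rho^a(-t)^b=(-1)^b[t^{m-b}]\rho^a\).

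The vanishing of even coefficients follows from parity. \(\mathsf A_{a,b}=0\) unless \(a+b\) is odd, and \(\rho\) is odd, so \(\rho^a\) has only monomials of parity \(a\). Thus every contributing monomial has degree of parity \(a+b\), which is odd.

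The main obstacle is the rigorous interchange of sums, both in the expectation and in the coefficient extraction. The contraction/Mehler argument together with the bound \(|\rho(z)|\le|z|\) handles it.
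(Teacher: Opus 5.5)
Your proposal is correct and follows the same route as the paper. Both arguments expand $f$ in the tensor Hermite basis and use $g(w,x)=f(w,-x)$ to produce the signs $(-1)^b$. They then apply orthogonality and the Hermite covariance identity on the two independent coordinate pairs, extract Taylor coefficients, and finish with the parity argument. Your Mehler-operator contraction argument and the bound $|\rho(z)|\le|z|$ on compact subdisks make explicit the interchanges of sum with expectation and with coefficient extraction. The paper justifies those interchanges in one line via $f\in L_2$ and $|\rho(t)|\le|t|<1$.
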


\begin{proof}
Let \((W,X)\) and \((W',X')\) be the Gaussian inputs used to define the
correlation function of the scheme. Thus, \(W\) and \(W'\) have correlation
\(\rho(t)\), while \(X\) and \(X'\) have correlation \(t\); the two
coordinate pairs are independent.

Since \(g(w,x)=f(w,-x)\), the Hermite expansion of \(g\) is
\[
  g(w,x)
  =
  \sum_{a,b\geq 0}
  (-1)^b\mathsf A_{a,b}\He_a(w)\He_b(x).
\]
When the two expansions are substituted into the definition of \(H\),
orthogonality leaves only matching Hermite degrees. The Hermite covariance
identity then contributes \(\rho(t)^a\) in the first coordinate and \(t^b\)
in the second. This gives \eqref{eq:H-coefficient-formula}. The series is
absolutely convergent for \(t\in(-1,1)\), since \(f\in L_2\) and
\(|\rho(t)|\leq |t|<1\).

Extracting the coefficient of \(t^m\) gives
\eqref{eq:bm-coefficient-formula}. Finally, \(\rho\) is odd and
\(\mathsf A_{a,b}\) vanishes unless \(a+b\) is odd. Hence every nonzero
summand in \eqref{eq:H-coefficient-formula} is odd in \(t\), so only odd
coefficients occur.
\end{proof}

\paragraph{Computing the Hermite coefficients.}
The coefficients in \eqref{eq:mathsfAab} reduce to
one-dimensional Gaussian integrals. Let \(Z\) be a standard Gaussian and,
for \(a\geq 0\), set
\begin{equation}
  q_a(u)
  =
  \E\!\left[\sgn(Z+u)\He_a(Z)\right].
  \label{eq:threshold-hermite-coefficients}
\end{equation}
Conditioning on \(X\) gives
\begin{equation}
  \mathsf A_{a,b}
  =
  \E\!\left[
    \He_b(X)q_a\!\left(\vartheta\He_3(X)\right)
  \right].
  \label{eq:f-coefficients-one-dimensional}
\end{equation}
The threshold coefficients have the closed form
\begin{equation}
  q_0(u)
  =
  \erf\!\left(\frac{u}{\sqrt{2}}\right),
  \qquad
  q_a(u)
  =
  \frac{2\varphi(u)}{\sqrt{a}}\He_{a-1}(-u)
  \quad (a\geq 1),
  \label{eq:threshold-hermite-closed-form}
\end{equation}
where \(\varphi(u)=(2\pi)^{-1/2}e^{-u^2/2}\). This standard identity follows
by one integration by parts; we include the proof in
Appendix~\ref{app:coefficient-formulas}.

Theorem~\ref{thm:coefficient-formula} therefore reduces each \(b_m\) to
finite coefficient extraction from powers of \(\rho\) and one-dimensional
Gaussian integrals. This is the formula used to compute the coefficient head
in the next section. A more general formula for schemes built from an
arbitrary finite set of Hermite degrees is recorded in
Appendix~\ref{app:coefficient-formulas}; it is not needed for the main
argument.

%% file: paper/certification.tex
\section{Proof of Cubic-Quintic Scheme}
\label{sec:certification}

Write the correlation function of the cubic--quintic limiting scheme as
\begin{equation}
  H(t)=b_1t+\sum_{\substack{m\ge 3\\ m\ \mathrm{odd}}} b_m t^m,
  \label{eq:certification-H-series}
\end{equation}
and let
\begin{equation}
  \Delta_H:=\sum_{\substack{m\ge 3\\ m\ \mathrm{odd}}}|b_m|
  \label{eq:nonlinear-coefficient-mass}
\end{equation}
be the total mass of its nonlinear coefficients.  Let
\[
  G(z)=H^{-1}(z)=\sum_{n\ge1}a_nz^n
\]
denote the local holomorphic inverse of $H$ at the origin, and define its
majorant by
\begin{equation}
  \mathcal M(r):=\sum_{n\ge1}|a_n|r^n.
  \label{eq:inverse-majorant-certification}
\end{equation}
The local inverse exists because $H'(0)=b_1>0$, by the holomorphic inverse
function theorem.

For the linear map $t\mapsto b_1t$, the condition $\mathcal M(\gamma)<1$ is
simply $\gamma<b_1$.  The next proposition shows that the nonlinear part costs
at most $\Delta_H$ of this margin.

\begin{proposition}[Inverse certificate from near-linearity]
\label{prop:inverse-certificate-near-linearity}
Assume $b_1>0$ and $\Delta_H<\infty$.  If
\begin{equation}
  \gamma+\Delta_H<b_1,
  \label{eq:near-linearity-condition}
\end{equation}
then there exists $R>\gamma$ such that the local inverse $G$ extends
holomorphically to $R\mathbb D$, maps $R\mathbb D$ into $\mathbb D$, and
satisfies
\begin{equation}
  \mathcal M(\gamma)
  \le \frac{\gamma+\Delta_H}{b_1}
  <1.
  \label{eq:near-linearity-conclusion}
\end{equation}
\end{proposition}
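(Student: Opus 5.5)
The plan is to compare the inverse series $G$ coefficientwise with the inverse of a \emph{majorant} nonlinear equation, and then to read off the bound at $r=\gamma$ from a one-variable real argument. Write $H(t)=b_1t+N(t)$, where $N(t)=\sum_{m\ge3}b_mt^m$, and set
\[
\widetilde N(s):=\sum_{m\ge3}|b_m|s^m .
\]
Since $\Delta_H<\infty$, the series $\widetilde N$ converges absolutely on $\overline{\mathbb D}$, and $\widetilde N(1)=\Delta_H$. Near the origin, the identity $H(G(z))=z$ is equivalent to the fixed-point relation
\[
b_1G(z)=z-N(G(z)).
\]
Comparing coefficients of $z^n$ gives $b_1a_1=1$ and, for $n\ge2$, $b_1a_n=-[z^n]\,N(G)$. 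Because $N$ has no terms of degree below $3$, the coefficient $[z^n]N(G)$ is a polynomial in $b_3,b_5,\dots$ and in $a_1,\dots,a_{n-1}$ only. Moreover, this polynomial has nonnegative integer coefficients.

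\textbf{Step 1 (majorant comparison).} Let $W(r)=\sum_{n\ge1}w_nr^n$ be the formal power series solution of
\[
b_1W=r+\widetilde N(W).
\]
Its coefficients satisfy the same recursion as the $a_n$, with every $b_m$ replaced by $|b_m|$ and without the minus sign. Hence all $w_n\ge0$. By induction on $n$, using the positivity of the polynomials in the recursion, we get $|a_n|\le w_n$ for every $n$. Consequently $\mathcal M(r)\le W(r)$ coefficientwise for $r\ge0$, whenever $W$ converges.

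\textbf{Step 2 (the real branch stays below $1$).} This is the step I expect to need the most care, since it must produce an honest convergent branch rather than just a formal one. Consider $\varphi(w):=b_1w-\widetilde N(w)$ on $[0,1]$. We have $\varphi(0)=0$, $\varphi'(0)=b_1>0$, and $\varphi$ is concave because $\widetilde N$ has nonnegative coefficients. Its endpoint value is $\varphi(1)=b_1-\Delta_H>\gamma$ by \eqref{eq:near-linearity-condition}.

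Choose $R$ with $\gamma<R<b_1-\Delta_H$. Concavity then gives a unique $w^*\in(0,1)$ with $\varphi(w^*)=R$, and it gives $\varphi'>0$ on $[0,w^*]$; the positivity of $\varphi'$ follows because $\varphi$ is concave and still increasing at $w^*$ (it has not yet reached its maximum, which is at least $\varphi(1)>R$). By the real inverse function theorem, the inverse $\varphi^{-1}:[0,R]\to[0,w^*]$ is real-analytic and increasing.

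Its Taylor series at $0$ is exactly the series $W$ from Step 1, since both satisfy the same functional equation and that equation determines the coefficients uniquely. Since the coefficients $w_n$ are nonnegative, Pringsheim's theorem (together with monotone convergence of the partial sums) shows that $W$ converges on $[0,R]$ to $\varphi^{-1}$. In particular
\[
W(R)=w^*<1 .
\]

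\textbf{Step 3 (conclusions).} From Steps 1 and 2, $\mathcal M(R)\le W(R)<1$. Therefore $G$ converges on $R\mathbb D$, and $|G(z)|\le\mathcal M(|z|)<1$ there, so $G$ maps $R\mathbb D$ into $\mathbb D$. The identity $H(G(z))=z$ holds near $0$. Both sides are holomorphic on $R\mathbb D$, using that $H$ is holomorphic on $\mathbb D$ because $\Delta_H<\infty$. By the identity theorem, $H(G(z))=z$ on all of $R\mathbb D$, so $G$ is the holomorphic extension of the local inverse.

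Finally, at $r=\gamma$ we use the fixed-point equation for $W$ together with $0\le W(\gamma)\le W(R)<1$. Since $\widetilde N$ is increasing on $[0,1]$, this gives $\widetilde N(W(\gamma))\le\widetilde N(1)=\Delta_H$. Hence
\[
\mathcal M(\gamma)\le W(\gamma)=\frac{\gamma+\widetilde N(W(\gamma))}{b_1}\le\frac{\gamma+\Delta_H}{b_1}<1,
\]
which is exactly \eqref{eq:near-linearity-conclusion}.
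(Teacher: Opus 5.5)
Your proof is correct. Steps 1 and 3 coincide with the paper's argument: the paper's majorizing series $V$, solving $b_1V=z+E_{\mathrm{maj}}(V)$, is your $W$, and the closing chain at $r=\gamma$ is the same.

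The genuine difference is Step 2, i.e.\ how one shows that the formal majorant actually converges at some $R>\gamma$ with value below $1$. The paper never identifies the sum of $W$. It runs the monotone Picard iteration $W_0=0$, $W_{k+1}=(z+\widetilde N(W_k))/b_1$, whose iterates have nonnegative coefficients increasing to those of $W$. It then checks that the invariant $W_k(R)\le q:=(R+\Delta_H)/b_1<1$ propagates, because $\widetilde N(W_k(R))\le\widetilde N(1)=\Delta_H$, and monotone convergence gives $W(R)\le q$. This uses nothing beyond nonnegativity of coefficients.

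You instead identify $W$ with the true inverse of the concave function $\varphi(w)=b_1w-\widetilde N(w)$, and use Pringsheim's theorem to push the radius of convergence past $R$. This needs the analytic (not merely the real) inverse function theorem, plus a continuation step you leave implicit. If the radius $\rho$ of $W$ were at most $R$, the holomorphic extension of $\varphi^{-1}$ to a small disk about $\rho$ would agree with $W$ on the segment just left of $\rho$. Hence it would agree on the connected intersection of that disk with $D(0,\rho)$, so $\rho$ would be a regular point, contradicting Pringsheim.

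What your route buys is the exact value $W(R)=\varphi^{-1}(R)$. This gives the sharper bound $\mathcal M(\gamma)\le\varphi^{-1}(\gamma)\le(\gamma+\Delta_H)/b_1$, and the argument goes through under the weaker hypothesis $\gamma<\max_{[0,1]}\varphi$. You could get the same sharpening without Pringsheim: run the paper's iteration with the invariant $W_k(R)\le w^*$, which propagates since $(R+\widetilde N(w^*))/b_1=w^*$.
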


\begin{proof}
Set
\[
  E(t):=\sum_{m\ge3}b_mt^m,
  \qquad
  E_{\mathrm{maj}}(u):=\sum_{m\ge3}|b_m|u^m.
\]
There is a unique formal power series
\[
  V(z)=\sum_{n\ge1}v_nz^n
\]
satisfying
\begin{equation}
  b_1V(z)=z+E_{\mathrm{maj}}(V(z)).
  \label{eq:majorizing-inverse-equation}
\end{equation}
Indeed, comparison of the coefficient of $z$ gives $v_1=b_1^{-1}$.  For
$n\ge2$, the coefficient of $z^n$ in $E_{\mathrm{maj}}(V)$ depends only on
$v_1,\ldots,v_{n-1}$, so \eqref{eq:majorizing-inverse-equation} determines
$v_n$ recursively.  The same recursion shows that $v_n\ge0$ for every $n$.

Comparing coefficients in
\[
  b_1G(z)=z-E(G(z))
\]
and in \eqref{eq:majorizing-inverse-equation}, induction on $n$ gives
\begin{equation}
  |a_n|\le v_n
  \qquad(n\ge1).
  \label{eq:inverse-coefficient-domination}
\end{equation}
Indeed, at degree $n$, every nonlinear composition involves only previously
determined coefficients, and replacing each $b_m$ by $|b_m|$ removes all
possible cancellation.

Choose $R>\gamma$ so close to $\gamma$ that
\[
  q:=\frac{R+\Delta_H}{b_1}<1.
\]
Starting from $V_0(z)=0$, define
\begin{equation}
  V_{k+1}(z):=\frac{z+E_{\mathrm{maj}}(V_k(z))}{b_1}.
  \label{eq:majorizing-inverse-iteration}
\end{equation}
The series $V_k$ have nonnegative coefficients and increase coefficientwise,
meaning that each Taylor coefficient is nondecreasing in $k$.
Moreover, if $V_k(R)\le q$, then $V_k(R)<1$ and
\[
  V_{k+1}(R)
  \le \frac{R+E_{\mathrm{maj}}(1)}{b_1}
  =q.
\]
Thus $V_k(R)\le q$ for every $k$.  Their coefficientwise limit satisfies
\eqref{eq:majorizing-inverse-equation}, and hence equals $V$ by uniqueness.
The monotone convergence theorem now gives
\[
  V(R)=\sum_{n\ge1}v_nR^n\le q<1.
\]
Together with \eqref{eq:inverse-coefficient-domination}, this implies
\[
  \sum_{n\ge1}|a_n|R^n\le V(R)<1.
\]
Consequently, the Taylor series of $G$ converges on $R\mathbb D$ and takes
values in $\mathbb D$.  Since the identity $H(G(z))=z$ holds near the origin,
it holds throughout $R\mathbb D$ by the identity theorem.  Finally,
\[
  \mathcal M(\gamma)
  \le V(\gamma)
  =\frac{\gamma+E_{\mathrm{maj}}(V(\gamma))}{b_1}
  \le\frac{\gamma+\Delta_H}{b_1},
\]
because $0\le V(\gamma)\le V(R)<1$.
\end{proof}

It remains to certify $\Delta_H$.  We compute the coefficients through a
finite degree $N$ and bound the remaining tail indirectly.  The point is that
the Euler operator
\[
  D:=t\frac{d}{dt}
\]
multiplies the coefficient of $t^m$ by $m$, so $D^3$ strongly amplifies
high-degree coefficients.  The $L^2(\mathbb T)$ norm below is the
root-mean-square size of the resulting function around the unit circle
$\mathbb T:=\{z\in\mathbb C:|z|=1\}$.

\begin{lemma}[Third-order coefficient-tail bound]
\label{lem:third-order-tail}
Suppose that $D^3H$ has radial boundary values in $L^2(\mathbb T)$, with
\[
  \|D^3H\|_{L^2(\mathbb T)}^2
  :=\frac{1}{2\pi}\int_0^{2\pi}
  |D^3H(e^{i\theta})|^2\,d\theta.
\]
Then, for every odd $N$,
\begin{equation}
  \sum_{\substack{m>N\\m\ \mathrm{odd}}}|b_m|
  \le
  \|D^3H\|_{L^2(\mathbb T)}
  \left(\sum_{\substack{m>N\\m\ \mathrm{odd}}}m^{-6}\right)^{1/2}
  \le
  \frac{\|D^3H\|_{L^2(\mathbb T)}}{\sqrt{10N^5}}.
  \label{eq:third-order-tail}
\end{equation}
\end{lemma}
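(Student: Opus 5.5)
The plan is to combine Parseval's identity on the unit circle with a Cauchy--Schwarz estimate on the tail, and then bound the resulting sum of $m^{-6}$ by an integral. Since $H(z)=\sum_m b_m z^m$ is holomorphic on $\mathbb D$, we have $D^3H(z)=\sum_m m^3 b_m z^m$ on $\mathbb D$. The hypothesis says $D^3H$ has radial boundary values in $L^2(\mathbb T)$. I would read this in the Hardy-space sense, meaning $\sup_{r<1}\frac{1}{2\pi}\int_0^{2\pi}|D^3H(re^{i\theta})|^2\,d\theta<\infty$. The boundary function then has Fourier coefficients $m^3b_m$, so Parseval gives
\[
\sum_m m^6|b_m|^2=\|D^3H\|_{L^2(\mathbb T)}^2 .
\]
To make this rigorous I would apply Parseval on the circle of radius $r<1$, where $\sum_m m^6|b_m|^2r^{2m}=\frac1{2\pi}\int|D^3H(re^{i\theta})|^2d\theta$. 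Letting $r\uparrow1$ by monotone convergence on the left and $L^2$ convergence of the radial limits on the right, which is standard for $H^2$ functions, gives the identity. This justification of Parseval is the only step with any analytic content, and I expect it to be the main (mild) obstacle. If one only wants an inequality, Fatou's lemma suffices, provided the $L^2$ norms on circles converge to the boundary norm.

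Next, writing $|b_m|=m^{-3}\cdot m^3|b_m|$ and applying Cauchy--Schwarz over odd $m>N$ gives
\[
\sum_{\substack{m>N\\ m\ \mathrm{odd}}}|b_m|\le\Bigl(\sum m^6|b_m|^2\Bigr)^{1/2}\Bigl(\sum_{\substack{m>N\\ m\ \mathrm{odd}}} m^{-6}\Bigr)^{1/2}.
\]
The first factor is at most $\|D^3H\|_{L^2(\mathbb T)}$ by Parseval, which proves the first inequality in \eqref{eq:third-order-tail}.

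For the second inequality, take $N$ odd, so the summation runs over $m=N+2,N+4,\dots$. Since $x^{-6}$ is decreasing, each term satisfies $m^{-6}\le\frac12\int_{m-2}^{m}x^{-6}\,dx$. Summing over these $m$ gives
\[
\sum_{\substack{m>N\\ m\ \mathrm{odd}}}m^{-6}\le\frac12\int_N^\infty x^{-6}\,dx=\frac{1}{10N^5}.
\]
Taking square roots yields $\|D^3H\|_{L^2(\mathbb T)}/\sqrt{10N^5}$, as claimed.
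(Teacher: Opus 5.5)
Your proposal is correct and follows the paper's proof: Parseval for $D^3H$ (coefficients $m^3b_m$), then Cauchy--Schwarz against the weights $m^{-3}$, then the comparison $\sum_{m>N,\ m\ \mathrm{odd}} m^{-6}\le\frac12\int_N^\infty x^{-6}\,dx=\frac{1}{10N^5}$. Reading the hypothesis in the Hardy-space ($H^2$) sense to justify Parseval is appropriate, since the paper uses this implicitly. Your aside about Fatou is unnecessary: once the norms on circles converge to the boundary norm, monotone convergence already gives the identity, and Fatou on its own would only bound the boundary norm by $\sum_m m^6|b_m|^2$, which is the wrong direction.
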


\begin{proof}
Since $D^3(t^m)=m^3t^m$, Parseval's identity gives
\begin{equation}
  \|D^3H\|_{L^2(\mathbb T)}^2
  =\sum_{\substack{m\ge1\\m\ \mathrm{odd}}}m^6|b_m|^2.
  \label{eq:parseval-third-order}
\end{equation}
The Cauchy--Schwarz inequality therefore gives
\[
  \sum_{\substack{m>N\\m\ \mathrm{odd}}}|b_m|
  =\sum_{\substack{m>N\\m\ \mathrm{odd}}}
  m^{-3}\bigl(m^3|b_m|\bigr)
  \le
  \left(\sum_{\substack{m>N\\m\ \mathrm{odd}}}m^{-6}\right)^{1/2}
  \|D^3H\|_{L^2(\mathbb T)}.
\]
Finally, since $x\mapsto x^{-6}$ is decreasing and consecutive odd integers
are separated by two, the monotone integral estimate gives
\[
  \sum_{\substack{m>N\\m\ \mathrm{odd}}}m^{-6}
  \le\frac12\int_N^\infty x^{-6}\,dx
  =\frac{1}{10N^5}.
\]
\end{proof}

\begin{proposition}[Certified numerical input]
\label{prop:certified-numerical-input}
For the cubic--quintic correlation function, interval arithmetic certifies
\begin{equation}
  b_1\ge0.881573822049,
  \qquad
  \sum_{\substack{3\le m\le251\\m\ \mathrm{odd}}}|b_m|
  \le1.1328860\cdot10^{-5},
  \qquad
  \|D^3H\|_{L^2(\mathbb T)}<14.443.
  \label{eq:certified-numerical-input}
\end{equation}
\end{proposition}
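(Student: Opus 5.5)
The three inequalities are certified numerically, and I would organize the certification around the coefficient formula of \cref{thm:coefficient-formula}. For the head coefficients $b_1,b_3,\dots,b_{251}$, formula \eqref{eq:bm-coefficient-formula} expresses each $b_m$ as a finite sum over $a+b\le m$ with $a+b$ odd. Each term involves three ingredients: $\mathsf A_{a,b}^2$, the sign $(-1)^b$, and the exact polynomial coefficient $[t^{m-b}]\rho(t)^a$. That last coefficient is a rational expression in $s_3^2,s_5^2,V$, which I would enclose in intervals from the decimal parameters \eqref{eq:params}.

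The Hermite coefficients come from \eqref{eq:f-coefficients-one-dimensional} and \eqref{eq:threshold-hermite-closed-form}. For $a\ge1$ we have
\[
\mathsf A_{a,b}=\frac{2}{\sqrt a}\,\mathbb E\!\left[\He_b(X)\,\varphi(\vartheta\He_3(X))\,\He_{a-1}(-\vartheta\He_3(X))\right],
\]
a one-dimensional Gaussian integral of a smooth, rapidly decaying integrand. I would enclose it with a validated quadrature, such as Gauss--Hermite or a truncated interval with a Taylor-model remainder, plus an explicit tail bound beyond $|x|\le L$. The case $a=0$ uses $\erf$ and is handled the same way. Summing the enclosures with outward rounding gives a lower bound on $b_1$ and an upper bound on $\sum_{3\le m\le251}|b_m|$. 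Two points need care here. First, $b_3$ and $b_5$ nearly cancel by design, so the intervals must be tight, which calls for high-precision (MPFI/Arb) arithmetic. Second, the $|\cdot|$ of an interval containing $0$ must be bounded by its maximal endpoint.

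For $\|D^3H\|_{L^2(\mathbb T)}$, I would avoid boundary values and instead use Parseval with \eqref{eq:H-coefficient-formula}. Write $D^3H$ on the circle as $\frac{\pi}{2}\sum_{a,b}\mathsf A_{a,b}^2 D^3[\rho^a(-t)^b]$. Then apply the triangle inequality in $L^2(\mathbb T)$: each $\rho^a(-t)^b$ is a polynomial whose coefficient sequence is controlled by the allowable expansion. Since the coefficients of $\rho$ have absolute sum $1$ and degree at most $5$, $D^3$ applied to a term of degree at most $5a+b$ costs at most $(5a+b)^3$. Combined with the decay of $\mathsf A_{a,b}^2$, this gives a convergent bound. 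A sharper alternative expresses $D^3H$ through derivatives $\rho^{(j)}$ and Gaussian expectations of derivatives of the smoothed threshold functions, evaluated at $|t|=1$. I would first try the crude triangle-inequality route, computing the finite part explicitly and bounding the tail $a+b>K$ via $\sum_{a+b>K}(5a+b)^3\mathsf A_{a,b}^2$. That tail can be controlled with Gaussian integration by parts, since $f$ is smooth in $w$ away from a smooth boundary curve, so its Hermite coefficients decay like $n^{-3/4}$ in the $w$-direction.

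The main obstacle is the $D^3$ norm. The discontinuity of $f,g$ means $\mathsf A_{a,b}^2$ decays only polynomially, so $\sum (5a+b)^6\mathsf A_{a,b}^2$-type quantities may diverge. If they do, I would exploit cancellation on the circle rather than termwise bounds, using the Gaussian/Mehler representation of $H$ near $|t|=1$ directly. Because $|\rho(e^{i\theta})|<1$ away from $\theta\in\{0,\pi\}$, the only singular behaviour sits at $t=\pm1$. There $H$ behaves like the arcsine function, with an expansion such as $\arcsin$ whose coefficients decay like $m^{-3/2}$. Its $D^3$ would then have coefficients of size $m^{3/2}$, which are not square-summable. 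I would therefore check early whether $\|D^3H\|_{L^2}<\infty$ is even plausible. If the arcsine-type singularity is softened, for instance because the effective correlation at $\pm1$ is $\rho(1)<1$ in the first coordinate, then the singularity comes only from the $x$-coordinate through smooth $\He_3$ mixing. I would bound it by splitting $H=H_{\text{sing}}+H_{\text{reg}}$ and certifying each part.
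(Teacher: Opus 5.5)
Your plan for the first two bounds matches the paper. It extracts $[t^{m-b}]\rho(t)^a$ exactly in \eqref{eq:bm-coefficient-formula}, and encloses each $\mathsf A_{a,b}$ by outward-rounded quadrature on a finite interval plus a Gaussian tail estimate.

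The gap is the third bound, $\|D^3H\|_{L^2(\mathbb T)}<14.443$, for which the proposal has no working argument. Your primary termwise route does not just lose constants; it diverges.
\begin{itemize}
\item The $\mathsf A_{a,b}$ are the $x$-Hermite coefficients of $q_a(\vartheta\operatorname{He}_3(x))$. By Parseval and your own $a^{-3/4}$ decay, $\sum_b\mathsf A_{a,b}^2=\mathbb E[q_a(\vartheta\operatorname{He}_3(X))^2]\asymp a^{-3/2}$.
\item Every monomial of $\rho^a t^b$ has degree at least $a$, so $\|D^3[\rho^a(-t)^b]\|_{L^2(\mathbb T)}\ge a^3\|\rho^a\|_{L^2(\mathbb T)}$.
\item $\|\rho^a\|_{L^2(\mathbb T)}$ does not decay geometrically. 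Your claim that $|\rho(e^{i\theta})|<1$ off $\{0,\pi\}$ is false: since $\sigma_d\,i^d=i$ for every odd $d$, one has $\rho(\pm i)=\pm i$. Hence $\|\rho^a\|_{L^2(\mathbb T)}\gtrsim a^{-1/4}$.
\end{itemize}
The triangle-inequality sum is therefore at least of order $\sum_a a^{11/4}\cdot a^{-3/2}=\infty$. Finiteness of $\|D^3H\|$ depends on cancellation across $a$ in $\sum_a r^a q_a(u)q_a(v)$, and the triangle inequality throws that cancellation away.

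Your closing remark has the right instinct. The discontinuous $w$-coordinate never sees correlation $\pm1$, because $\rho(z)\neq\pm1$ on the whole closed disk. So $H$ has no arcsine-type singularity at $t=\pm1$, and there is no $H_{\mathrm{sing}}$ to split off. But you stop short of a way to actually sum the $a$-series.

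The missing idea is the paper's boundary representation, which does that summation before any norm is taken.
\begin{itemize}
\item Integrate out the $w$-coordinates to get the kernel $C_r(x,y)$. Then $H(t)=T_{-t}[C_{\rho(t)}]$, where $T_\lambda$ is the diagonal Hermite trace in $(x,y)$.
\item By Mehler's identity, one derivative replaces the sign functions by the bivariate Gaussian density: $\partial_rC_r=2\pi K_r(\vartheta\operatorname{He}_3(x),\vartheta\operatorname{He}_3(y))$ and $\partial_x\partial_yC_r=2\pi\vartheta^2\operatorname{He}_3'(x)\operatorname{He}_3'(y)K_r(\cdot,\cdot)$.
\item These closed forms continue analytically to $r=\rho(e^{i\theta})$, because $|1-\rho^2|\ge0.3724$ with a positive Gaussian decay margin on all of $\mathbb T$. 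That, not $|\rho|<1$, is the relevant condition.
\item The Euler operator passes through the trace as $D_tT_{-t}[\Psi]=T_{-t}[D_t\Psi-t\,\partial_x\partial_y\Psi]$. So $D^3H=T_{-t}[\Phi_3]$, with $\Phi_3$ an explicit combination of nine smooth kernels $\partial_r^p(\partial_x\partial_y)^jC_r|_{r=\rho(t)}$, $p+j\ge1$.
\item The $x$-coordinate, whose correlation does reach modulus one on $\mathbb T$, is then harmless. For $|\lambda|\le1$, Cauchy--Schwarz with $\sum_n n^{-2}=\pi^2/6$ gives $|T_\lambda[\Psi]|\le\|\Psi\|_{L^2(\mu)}+\frac{\pi}{\sqrt6}\|\partial_x\partial_y\Psi\|_{L^2(\mu)}$.
\item Hence $|D^3H(e^{i\theta})|^2\le\frac{\pi^2}{6}S(\theta)$ for an explicit smooth energy $S$. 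The constant $14.443$ comes from a rigorous Gauss--Legendre quadrature of $\frac{1}{\pi}\int_0^\pi S(\theta)\,d\theta$.
\end{itemize}
Without this device, or an equivalent one that resums in $a$ before estimating, your plan gives no finite bound on $\|D^3H\|_{L^2(\mathbb T)}$, let alone $14.443$.
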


\begin{proof}
See Appendix~\ref{app:certification-details}.
\end{proof}

Lemma~\ref{lem:third-order-tail} and
Proposition~\ref{prop:certified-numerical-input} give
\begin{equation}
  \Delta_H
  \le
  1.1328860\cdot10^{-5}
  +\frac{14.443}{\sqrt{10\cdot251^5}}
  <1.5904724\cdot10^{-5}.
  \label{eq:certified-nonlinear-mass}
\end{equation}

\begin{theorem}[Cubic--quintic upper bound]
\label{thm:cubic-quintic-upper-bound-certified}
For
\[
  \gamma=0.881545409,
\]
the cubic--quintic limiting scheme satisfies $\mathcal M(\gamma)<1$.
Consequently,
\begin{equation}
  K_G\le\frac{\pi}{2\gamma}
  \le1.7818666069360661.
  \label{eq:certified-KG-bound}
\end{equation}
\end{theorem}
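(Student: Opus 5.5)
The proof assembles the pieces already established in this section. First we use the certified input of Proposition~\ref{prop:certified-numerical-input} together with Lemma~\ref{lem:third-order-tail} at $N=251$. This gives the bound \eqref{eq:certified-nonlinear-mass}, namely $\Delta_H<1.5904724\cdot10^{-5}$. In particular $\Delta_H<\infty$. We also have $b_1\ge 0.881573822049>0$, so the hypotheses of Proposition~\ref{prop:inverse-certificate-near-linearity} hold.

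Next we check the near-linearity condition \eqref{eq:near-linearity-condition} at $\gamma=0.881545409$. We have
\[
  \gamma+\Delta_H < 0.881545409+0.000015904724 = 0.881561313724 < 0.881573822049 \le b_1,
\]
with a margin of about $1.25\cdot 10^{-5}$. This is a finite decimal comparison and can be checked exactly. Proposition~\ref{prop:inverse-certificate-near-linearity} then yields $R>\gamma$ such that the inverse $G=H^{-1}$ is holomorphic on $R\mathbb D$ and maps it into $\mathbb D$, together with
\[
  \mathcal M(\gamma)\le \frac{\gamma+\Delta_H}{b_1}<1.
\]
This is the first claim of the theorem.

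To pass to $K_G$, we invoke Theorem~\ref{thm:upper_bound_limiting_scheme}, or more directly its proof via Lemma~\ref{lem:finite-inverse-majorant-transfer}. The cubic--quintic scheme is a genuine limiting Krivine scheme: $\rho$ is allowable and the threshold boundaries are null sets. So its correlation function $H$ is the locally uniform limit on $\mathbb D$ of the correlation functions $H_N$ of classical finite-dimensional schemes.

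The certificate is strict, with $\sum|a_n|\gamma^n<1$ and $G$ holomorphic into $\mathbb D$ on a disk of radius $R_0$ with $\gamma<R_0<R$. Lemma~\ref{lem:finite-inverse-majorant-transfer} therefore applies directly at radius $\gamma$, with no need to shrink to a smaller $\gamma'$. For large $N$, the inverse branch of $H_N$ satisfies $\sum_n|a_{n,N}|\gamma^n<1$. Lemma~\ref{lemma:local-inverse-expansion} then gives $K_G\le \pi/(2\gamma)$.

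The final numerical step is to verify $\pi/(2\gamma)\le 1.7818666069360661$. This uses an interval enclosure of $\pi$, for instance $\pi<3.14159265358979324$, divided by $2\gamma=1.763090818$. We also confirm the stated comparison with Krivine's constant, again using rigorous enclosures of $\log(1+\sqrt2)$.

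Essentially all of the difficulty sits inside Proposition~\ref{prop:certified-numerical-input}, which is deferred to the appendix. The critical part is the bound $\|D^3H\|_{L^2(\mathbb T)}<14.443$. It requires controlling the boundary behaviour of $H$ on the circle, where the Hermite series only converges in an $L^2$ sense. Given that input, the proof of the present theorem is a short chain of the comparisons above. The only delicate bookkeeping is to use the strict form of the certificate, so that the finite-dimensional transfer lemma applies.
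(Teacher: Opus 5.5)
Your proposal is correct and follows the paper's proof essentially step for step. You bound $\Delta_H$ using the certified head together with the $D^3H$ tail estimate at $N=251$, check $\gamma+\Delta_H<b_1$, and invoke the near-linearity proposition to get $\mathcal M(\gamma)<1$ along with a radius $R>\gamma$. That strict certificate with room, passed through the finite-dimensional transfer lemma directly at $\gamma$ and then the classical Krivine preprocessing bound, is exactly the ``analytic room'' route the paper takes, rather than the limiting-scheme theorem as stated with its $M(\gamma)=1$ hypothesis.
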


\begin{proof}
By \eqref{eq:certified-nonlinear-mass},
\[
  \gamma+\Delta_H
  <0.881545409+1.5904724\cdot10^{-5}
   =0.881561313724
   <0.881573822049
   \le b_1
\]
Proposition~\ref{prop:inverse-certificate-near-linearity} therefore gives
$\mathcal M(\gamma)<1$, together with the analytic room required by the
finite-dimensional transfer theorem.  That theorem passes the strict
certificate to all sufficiently large ordinary finite-dimensional
approximations.  The classical Krivine preprocessing theorem then gives
\eqref{eq:certified-KG-bound}.
\end{proof}

\paragraph{Reproducibility.}
  The scripts, certificates, and Arb outputs
  are available at \url{https://github.com/trishullab/grothendieck-bounds}.

%% file: paper/lower_bound_general.tex
\section{Proof Overview}
\label{sec:proof-idea}


\begin{theorem}\label{thm:lower-bound}
The Grothendieck constant satisfies
\[
        K_G\ge\frac{6\pi}{11} = 1.7136\ldots.
\]
\end{theorem}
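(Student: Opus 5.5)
The plan follows the route sketched in the technical overview. There are three ingredients: the Naor--Regev optimality of mixed Krivine schemes, an affine constraint on the first two Taylor coefficients of any Krivine correlation function, and a short computation turning that constraint into a universal bound \(\gamma\le 11/12\).

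\textbf{Step 1 (reduction via Naor--Regev).} From~\cite{naor2014krivine} I will extract the quantitative form recorded in the overview. There are mixed Krivine correlation functions \(H_k\), that is, convex combinations or limits of \(H_{f,g}\) over odd sign functions \(f,g\), with admissible parameters \(\gamma_k\to \pi/(2K_G)\). Each \(H_k\) has an inverse \(H_k^{-1}=\sum a_n^{(k)}z^n\) with \(\sum_n |a^{(k)}_n|\gamma_k^n\le 1\). It therefore suffices to show that every such \(H\) admitting a \(\gamma\)-admissible inverse has \(\gamma\le 11/12\). That gives \(\pi/(2K_G)\le 11/12\), i.e.\ \(K_G\ge 6\pi/11\).

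\textbf{Step 2 (from the coefficient inequality to \(\gamma\le 11/12\)).} Assume the affine inequality \(b_3\ge 2b_1-\tfrac{11}{6}\) for every pure scheme \(H_{f,g}=b_1t+b_3t^3+\cdots\). Because the inequality is affine, it survives convex mixtures and coefficientwise limits, so it holds for each \(H_k\). Inverting the series gives \(a_1=1/b_1\) and \(a_3=-b_3/b_1^4\). Admissibility yields
\[
1\ge a_1\gamma+|a_3|\gamma^3\ge \frac{\gamma}{b_1}-\frac{b_3\gamma^3}{b_1^4}\ge\frac{\gamma}{b_1}+\frac{(11/6-2b_1)\gamma^3}{b_1^4}.
\]
Now write \(\gamma=\lambda b_1\), so that \(\lambda\le 1\), and minimize over \(b_1\). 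I expect the constraint to reduce to an elementary one-variable inequality whose extremum gives exactly \(\gamma\le 11/12\). This is a routine calculus check. In case the naive bound is too weak, I will also use that \(|a_3|\ge -a_3\) together with the positivity of the higher terms.

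\textbf{Step 3 (the affine inequality; main obstacle).} Expand \(f,g\) in Hermite polynomials, so that \(b_1=\tfrac{\pi}{2}\langle P_1f,P_1g\rangle\) and \(b_3=\tfrac{\pi}{2}\langle P_3f,P_3g\rangle\). The target becomes \(2\langle P_1f,P_1g\rangle-\langle P_3f,P_3g\rangle\le \tfrac{11}{6}\nu^2\). Substituting \(h=(f+g)/2\) and \(k=(f-g)/2\) reduces this to the bound \(2\|P_1h\|^2-\|P_3h\|^2+\|P_3k\|^2\le\tfrac{11}{6}\nu^2\), after discarding \(-2\|P_1k\|^2\).

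For the agreement part, I rotate so that \(P_1h\) points along the coordinate \(S\) and set \(x=\|P_1h\|/\nu\). I then average \(h\) over the transverse coordinates to get \(A(s)\) with \(|A|\le1\), and use a rearrangement or bathtub argument: fixing \(\mathbb E[A(S)S]=\nu x\) forces a lower bound on \(|\mathbb E[A(S)\mathrm{He}_3(S)]|\), which bounds \(\|P_3h\|^2\) from below.

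For the disagreement part, I slice along \(S\) and bound \(\|P_3k\|^2\le \mathbb E_T V(u_T)\) using the Hermite moments of degree at most three. I then prove the one-dimensional ternary estimate \(V(u)\le 3\nu\beta-\beta^2\) via Lagrangian duality. The extremizers should be threshold functions \(\operatorname{sgn}(p)\mathbf 1_{|p|\ge c|z|}\), and the remaining checks would be done with interval arithmetic. The relation \(h^2+k^2=1\) links \(\mathbb E\beta(u_T)\) to \(x\), and the final step is a scalar inequality in \(x\in[0,1]\).

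I expect this one-dimensional ternary inequality, and obtaining constants sharp enough to reach \(11/6\), to be the hardest part of the proof.
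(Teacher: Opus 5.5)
Your overall architecture (Naor--Regev transfer, then the affine inequality via the agreement/disagreement split and the ternary fiber inequality, then a two-coefficient barrier) is the paper's. However, Step~2 as written fails.

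In your chain you use $|a_3|\ge a_3=-b_3/b_1^4$. You then need $-b_3\ge\tfrac{11}{6}-2b_1$, i.e.\ $b_3\le 2b_1-\tfrac{11}{6}$, which is the \emph{reverse} of the affine inequality. With your sign, the cubic term is negative exactly when $b_1>\tfrac{11}{12}$. So even if the chain were valid, it would give nothing better than $\gamma\le b_1$ in the only regime that matters.

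The correct use is
\[
|a_3|\ge -a_3=\frac{b_3}{b_1^4}\ge \frac{2\bigl(b_1-\tfrac{11}{12}\bigr)}{b_1^4}.
\]
This is informative only for $b_1>\tfrac{11}{12}$, so you must split into two cases. If $b_1\le\tfrac{11}{12}$, the linear term alone gives $\gamma\le b_1\le\tfrac{11}{12}$. If $b_1>\tfrac{11}{12}$, evaluate the two-term majorant at $\tfrac{11}{12}$,
\[
\frac{11}{12b_1}+\frac{2(b_1-\frac{11}{12})}{b_1^4}\Bigl(\frac{11}{12}\Bigr)^3
=1+\frac{(b_1-\frac{11}{12})\bigl(2(\frac{11}{12})^3-b_1^3\bigr)}{b_1^4},
\]
and then use monotonicity of the majorant.

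Second, you never bound $b_1$ from above, and the two constraints you use do not bound $\gamma$ without it. Writing $\gamma=\lambda b_1$, the correct constraint is $\lambda+\bigl(2-\tfrac{11}{6b_1}\bigr)\lambda^3\le1$. For large $b_1$ this allows $\lambda\approx0.59$, so $\gamma$ is unbounded, and ``minimizing over $b_1$'' cannot produce $\tfrac{11}{12}$.

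The missing input is $b_1\le1$. For any sign function, $\|P_1f\|=\sup_{\|e\|=1}\mathbb E[f(X)\langle e,X\rangle]\le\mathbb E|Z|=\nu$, so $|b_1|\le\tfrac{\pi}{2}\nu^2=1$. Like the affine inequality, this bound passes to mixtures and coefficientwise limits. The right-hand side of the display above exceeds $1$ exactly when $b_1^3<2(\tfrac{11}{12})^3\approx1.54$, so $b_1\le1$ closes the case $b_1>\tfrac{11}{12}$. Your fallback $|a_3|\ge -a_3$ is the right inequality, but it finishes the argument only together with $b_1\le1$.

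A smaller point in Step~3: the fiber bound with $\mathbb E_T\beta(u_T)\le\nu(1-x)$ gives
\[
\|P_3k\|^2\le\nu^2\bigl(3(1-x)-(1-x)^2\bigr).
\]
At $x=0$ this equals $2\nu^2>\tfrac{11}{6}\nu^2$, and the cubic correction $\Delta_+$ vanishes there. So for $x<\tfrac12(1-1/\sqrt3)$ the scalar inequality fails unless you add a second bound. The trivial bound $\|P_3k\|^2\le\|k\|_2^2\le1=\tfrac{\pi}{2}\nu^2$ suffices, since $2x^2+\tfrac{\pi}{2}<\tfrac{11}{6}$ on that range.
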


A sketch of the proof is as follows. First, recall the inverse-series step in Krivine's rounding argument. If
\[
        H^{-1}(z)=\sum_{j\ge0}a_{2j+1}z^{2j+1},
\]
define the absolute-coefficient majorant $M_H(s):=\sum_{j\ge0}|a_{2j+1}|s^{2j+1}$. The standard tensor preprocessing gives the implication
\begin{equation}
        M_H(\gamma)\le1
        \quad\Longrightarrow\quad
        K_G\le\frac{\pi}{2\gamma}.
        \label{eq:inverse-majorant-bound}
\end{equation}
We refer to~\eqref{eq:inverse-majorant-bound} as the \emph{inverse-majorant bound}.

The conceptual starting point is the theorem of Naor and Regev that Krivine schemes are optimal \cite{naor2014krivine}. By reworking the construction in their proof, we show that the inverse-majorant bound is itself asymptotically optimal after allowing mixtures and increasing dimension: there are limiting mixed correlation functions with admissible parameters converging to $\pi/(2K_G)$. This transfer is established in Section~\ref{sec:majorant_optimal}.

It therefore remains to prove a universal upper bound on such admissible parameters. For arbitrary odd measurable sign functions $f,g:\R^d\to\{\pm1\}$, write
\[
        H_{f,g}(t)=b_1t+b_3t^3+b_5t^5+\cdots.
\]
The technical core of the argument is the affine coefficient inequality
\[
        b_3\ge 2b_1-\frac{11}{6}.
\]
Section~\ref{sec:preliminaries} sets up the Gaussian--Hermite formulation, and Section~\ref{sec:affine-strip} proves this inequality by decomposing $f$ and $g$ into their agreement and disagreement parts and reducing the required estimates to one-dimensional inequalities. Its linearity in $(b_1,b_3)$ is essential, because it makes the inequality stable under mixtures and coefficientwise limits.

Section~\ref{sec:strip_majorant} then shows that the affine coefficient inequality controls the inverse-majorant bound: every admissible value must satisfy $\gamma\le11/12$. Applying this barrier to the asymptotically optimal sequence constructed in Section~\ref{sec:majorant_optimal} gives
\[
        \frac{\pi}{2K_G}\le\frac{11}{12},
\]
which is exactly Theorem~\ref{thm:lower-bound}.

\section{Preliminaries on Gaussian Hermite Expansions}
\label{sec:preliminaries}

Let
\[
        X\sim N(0,I_d)
\]
be a standard Gaussian vector in $\R^d$. We define the \textit{inner product} as the Gaussian expectation:
\[
        \ip{F}{G}=\E[F(X)G(X)].
\]
We use the orthonormal 1D Hermites
\[
        \HermBasis_0=1,\qquad
        \HermBasis_1(s)=s,\qquad
        \HermBasis_2(s)=\frac{s^2-1}{\sqrt2},\qquad
        \HermBasis_3(s)=\frac{s^3-3s}{\sqrt6}.
\]

In $d$ dimensions the Hermite basis is indexed by multi-indices. If
\[
        \alpha=(\alpha_1,\ldots,\alpha_d),
        \qquad |\alpha|=\alpha_1+\cdots+\alpha_d,
\]
then
\[
        \HermBasis_\alpha(X)=\prod_{i=1}^d \HermBasis_{\alpha_i}(X_i).
\]
Any square integrable function has a Hermite expansion, written as
\[
        F=P_0F+P_1F+P_2F+P_3F+\ldots,
\]
where $P_iF$ is the orthogonal projection of $F$ onto the span of all $\HermBasis_\alpha$ with total degree $|\alpha|=i$.

In particular, the degree-one Hermites span
\[
        \cH_1=\operatorname{span}\{X_1,\ldots,X_d\}.
\]
Here,
\[
        P_1F=\ip{c_F}{X},
        \qquad
        c_F=\E[F(X)X]\in\R^d.
\]
Its norm is
\[
        \norm{P_1F}^2=\norm{c_F}_{\ell_2}^2.
\]

Similarly, $P_3F$ is the projection onto the full total degree-three Hermite span. We will not list all basis elements. The important point is that it includes pure cubic directions such as $\HermBasis_3(X_i)$ and mixed directions such as $\HermBasis_2(X_i)\HermBasis_1(X_j)$ and $\HermBasis_1(X_i)\HermBasis_1(X_j)\HermBasis_1(X_\ell)$.

For signs $f,g:\R^d\to\{\pm1\}$, if
\[
        H_{f,g}(t)=\frac\pi2\E[f(X)g(X'_t)]
\]
with $(X,X'_t)$ a pair of correlated standard Gaussian vectors, then Mehler's identity gives
\[
        H_{f,g}(t)=\frac\pi2\sum_{m\ge0}\ip{P_mf}{P_mg}t^m.
\]
Here
\[
        b_1=[t]H_{f,g}=\frac\pi2\ip{P_1f}{P_1g},
        \qquad
        b_3=[t^3]H_{f,g}=\frac\pi2\ip{P_3f}{P_3g}.
\]

For the rest of the paper, we will use the shorthand
\[
        \nu:=\sqrt{\frac{2}{\pi}}=\E|Z|,
        \qquad Z\sim N(0,1).
\]

We want to prove
\begin{equation}
        b_3\ge 2b_1-\frac{11}{6}.
        \label{eq:affine-strip}
\end{equation}

For example, the hyperplane partitions lie on this line. For $f=g=\sgn(X_1)$,
\[
        \norm{P_1f}=\nu,
        \qquad
        \norm{P_3f}^2=\frac{\nu^2}{6}.
\]
Thus
\[
        (b_1,b_3)=\left(1,\frac16\right),
\]
and this point lies exactly on the line since $1/6=2-11/6$.

Instead of proving the strip directly, it suffices to prove
\begin{equation}
        B_1(f,g)
        :=\norm{P_1f}\norm{P_1g}
          +\ip{P_1f}{P_1g}
          -\ip{P_3f}{P_3g}
        \le \frac{11}{6}\nu^2.
        \label{eq:b1-bound}
\end{equation}
Indeed, by Cauchy--Schwarz,
\[
        \norm{P_1f}\norm{P_1g}\ge \ip{P_1f}{P_1g}.
\]
So~\eqref{eq:b1-bound} implies
\[
        2\ip{P_1f}{P_1g}-\ip{P_3f}{P_3g}\le \frac{11}{6}\nu^2.
\]
Multiplying by $\pi/2$ and using $\nu^2=2/\pi$ gives~\eqref{eq:affine-strip}.

\section{Proof of the Affine Coefficient Inequality}
\label{sec:affine-strip}

\subsection{The agreement--disagreement substitution}
\label{sec:agreement-disagreement}

We define a change of variables based on where $f$ and $g$ agree or disagree. Set the functions $h,k$ to be
\[
        h=\frac{f+g}{2},\qquad k=\frac{f-g}{2}.
\]
Then
\[
        f=h+k,
        \qquad
        g=h-k.
\]
Since $f,g$ are signs,
\[
        h,k\in\{-1,0,1\},
        \qquad
        hk=0,
        \qquad
        h^2+k^2=1.
\]
Think of $h$ as the agreement part and $k$ as the disagreement part between $f$ and $g$. This change of variables considerably simplifies the proof.

Let
\[
        u=P_1h,
        \qquad
        v=P_1k.
\]
Then
\[
        P_1f=u+v,
        \qquad
        P_1g=u-v.
\]
Hence
\[
        \ip{P_1f}{P_1g}=\norm{u}^2-\norm{v}^2.
\]
Also
\[
        \norm{P_1f}\norm{P_1g}
        =\norm{u+v}\norm{u-v}
        \le \norm{u}^2+\norm{v}^2,
\]
using $ab\le(a^2+b^2)/2$ after expanding $\norm{u\pm v}^2$.

For degree three,
\[
        P_3f=P_3h+P_3k,
        \qquad
        P_3g=P_3h-P_3k,
\]
so
\[
        \ip{P_3f}{P_3g}=\norm{P_3h}^2-\norm{P_3k}^2.
\]
Putting these together,
\begin{align*}
        B_1(f,g)
        &\le
        \bigl(\norm{u}^2+\norm{v}^2\bigr)
        +\bigl(\norm{u}^2-\norm{v}^2\bigr)
        -\bigl(\norm{P_3h}^2-\norm{P_3k}^2\bigr)\\
        &=2\norm{P_1h}^2-\norm{P_3h}^2+\norm{P_3k}^2.
\end{align*}

So it is enough to prove
\begin{equation}
\boxed{\displaystyle
        D_1(h,k):=2\norm{P_1h}^2-\norm{P_3h}^2+\norm{P_3k}^2
        \le \frac{11}{6}\nu^2
}
\label{eq:hk-reduction}
\end{equation}

In order to prove this, we have to handle each of the three expressions $\norm{P_1h}^2$, $\norm{P_3h}^2$, $\norm{P_3k}^2$ carefully. In particular, we need to prove upper bounds on $\norm{P_1h}^2$, $\norm{P_3k}^2$ and a lower bound on $\norm{P_3h}^2$.

\begin{remark}
    Note that in the inequality before~\eqref{eq:hk-reduction} the term $\norm{v}^2=\norm{P_1k}^2$ conveniently cancels. This is a result of the choice of the strip $b_3\ge 2b_1-11/6$. Stronger lower bounds can be proved with the same method, but this is more complicated because one also has to bound $\norm{P_1k}^2$.
\end{remark}

\subsection{Reduction to one Gaussian coordinate}
\label{sec:rotation}

The linear part of $h$ has the form
\[
        P_1h(X)=\ip{c_h}{X}.
\]
The coefficient vector is $c_h\in\R^d$, and
\[
        \norm{P_1h}=\norm{c_h}_{\ell_2}.
\]
If $c_h\neq0$, rotate coordinates so the first coordinate points in the $c_h$ direction. After this rotation, and after renaming the coordinates, we write
\[
        X=(S,T),
        \qquad S\sim N(0,1),\quad T\sim N(0,I_{d-1}),
\]
with $S$ independent of $T$, and
\[
        P_1h(S,T)=\norm{P_1h}\,S.
\]
If $P_1h=0$, choose any direction as $S$.

Define
\[
        x:=\frac{\norm{P_1h}}{\nu}\in[0,1].
\]
The bound $x\le1$ follows from
\[
        \norm{P_1h}=\sup_{\norm{e}=1}\E[h(X)\ip{e}{X}]
        \le \E|Z|=\nu.
\]
After rotation,
\[
        \ip{h}{S}=\norm{P_1h}=\nu x.
\]

\subsection{The agreement terms}
\label{sec:agreement}

The agreement part of $D_1$ is $2\norm{P_1h}^2-\norm{P_3h}^2$. The helpful negative term is
\[
        -\norm{P_3h}^2.
\]
To use it, we only need a lower bound on $\norm{P_3h}^2$.

In any dimension,
\[
        \norm{P_3h}^2
        \ge \ip{h}{\HermBasis_3(S)}^2,
\]
because $\HermBasis_3(S)$ is one orthonormal degree-three Hermite direction. The trick we will use is to condition on the transverse variables, so define
\[
        A(s):=\E_T[h(s,T)].
\]
This is the average value of $h$ along the slice $S=s$. Since $h\in[-1,1]$,
\[
        |A(s)|\le1.
\]
Also, for every function $\varphi$ depending only on $S$, it is a straightforward conditioning to show that
\[
        \E[h(S,T)\varphi(S)]=\E[A(S)\varphi(S)].
\]
In particular,
\[
        \E[A(S)S]=\ip{h}{S}=\nu x
\]
and
\[
        \ip{h}{\HermBasis_3(S)}=\E[A(S)\HermBasis_3(S)].
\]

This is the same trick as in the two-dimensional note. The problem is now a variational problem in the one variable function $A$, even though we are working in $d$ dimensions. To bound the agreement part, it is sufficient to relax $A$, and answer the following question:

\begin{quote}
``Among all $A:\R\to[-1,1]$ with $\E[A(S)S]=\nu x$, how small can the cubic coefficient be?''
\end{quote}

The answer is the rearrangement lemma.

Define
\[
        W(S):=S^3-3S=\sqrt6\,\HermBasis_3(S).
\]
Also define
\[
        \Delta(x):=x+2(1+x)\log\frac{1+x}{2},
        \qquad
        \Delta_+(x):=\max\{\Delta(x),0\}.
\]

\begin{lemma}[One-dimensional rearrangement]
Let $S\sim N(0,1)$ and $x\in[0,1]$. Then
\[
        \sup_{\substack{|A|\le1\\ \E[A(S)S]=\nu x}}
        \E[A(S)(S^3-3S)]
        =-\nu\Delta(x).
\]
\end{lemma}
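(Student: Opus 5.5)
The supremum in the lemma is a linear program in $A$: we maximize a linear functional over the convex set $\{|A|\le1\}$ subject to one linear constraint. I would prove it with a Lagrange multiplier. Weak duality gives the upper bound; an explicit threshold function attains it.

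\emph{Upper bound.} For any real $\lambda$ and any feasible $A$,
\[
\E[A(S)W(S)]=\E\bigl[A(S)\bigl(W(S)-\lambda S\bigr)\bigr]+\lambda\nu x\le \E\bigl|S^3-(3+\lambda)S\bigr|+\lambda\nu x ,
\]
because $|A|\le1$. Equality holds exactly when $A=\sgn(S^3-(3+\lambda)S)$ almost everywhere.

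\emph{Choice of multiplier.} Write $3+\lambda=a^2$ with $a\ge0$. The candidate optimizer is
\[
A_a(s)=\sgn\bigl(s(s^2-a^2)\bigr),
\]
which equals $\sgn(s)$ for $|s|>a$ and $-\sgn(s)$ for $|s|<a$. The only Gaussian facts needed are
\[
\E\bigl[|S|\mathbf 1_{|S|>a}\bigr]=\nu e^{-a^2/2},\qquad \E\bigl[|S|^3\mathbf 1_{|S|>a}\bigr]=\nu(a^2+2)e^{-a^2/2},
\]
both obtained by direct integration. They give
\[
\E[A_aS]=\nu\bigl(2e^{-a^2/2}-1\bigr).
\]
To make $A_a$ feasible we solve $2e^{-a^2/2}-1=x$, i.e.
\[
a^2=-2\log\frac{1+x}{2}.
\]
This is a valid choice with $a^2\in[0,2\log2]$ for every $x\in[0,1]$, and $a^2$ decreases from $2\log 2$ at $x=0$ to $0$ at $x=1$.

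\emph{Evaluation.} Next compute $\E[A_aW]$. Since $\E[|S|(S^2-3)]=2\nu-3\nu=-\nu$ and $\E[|S|(S^2-3)\mathbf 1_{|S|>a}]=\nu e^{-a^2/2}(a^2-1)$, we get
\[
\E[A_aW]=2\nu e^{-a^2/2}(a^2-1)+\nu=\nu\bigl[(1+x)(a^2-1)+1\bigr].
\]
Substituting $a^2=-2\log\frac{1+x}{2}$ turns this into
\[
\nu\Bigl[-x-2(1+x)\log\tfrac{1+x}{2}\Bigr]=-\nu\Delta(x).
\]
As sanity checks, $x=1$ gives $A=\sgn(S)$ with value $-\nu$, and $x=0$ gives value $2\nu\log2$.

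\emph{Conclusion and main difficulty.} Since $A_a$ is feasible and attains equality in the weak-duality bound for $\lambda=a^2-3$, the supremum equals $-\nu\Delta(x)$; no separate evaluation of the dual bound is required. The only place needing care is identifying the correct multiplier, i.e.\ guessing the threshold shape of the optimizer and checking that its feasibility equation has a solution $a^2\ge0$ on all of $[0,1]$. Once that is in place, the remaining Gaussian moment computations are routine.
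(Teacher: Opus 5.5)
Your proposal is correct and follows the paper's proof essentially step for step. It uses the same weak-duality bound $\E[AW]\le\E|W-\mu S|+\mu\nu x$ with multiplier $\mu=r^2-3$, the same threshold optimizer $\sgn(S)\sgn(S^2-r^2)$ made feasible by $e^{-r^2/2}=(1+x)/2$, and the same Gaussian tail moments to evaluate the value as $-\nu\Delta(x)$.
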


\begin{proof}
Fix $r\ge0$ by
\[
        e^{-r^2/2}=\frac{1+x}{2},
\]
and put
\[
        \mu=r^2-3.
\]
Let $W=S^3-3S$. Then
\[
        W-\mu S=S^3-3S-(r^2-3)S=S(S^2-r^2).
\]

For any feasible $A$,
\begin{align*}
        \E[AW]
        &=\E[A(W-\mu S)]+\mu\E[AS]\\
        &=\E[A(W-\mu S)]+\mu\nu x\\
        &\le \E|W-\mu S|+\mu\nu x,
\end{align*}
using $|A|\le1$.

Equality is attained by
\[
        A_*(S)=\sgn(W-\mu S)=\sgn(S)\sgn(S^2-r^2),
\]
provided $A_*$ has the right first moment. It does, because
\[
        A_*(S)S=|S|\sgn(S^2-r^2),
\]
so
\begin{align*}
        \E[A_*(S)S]
        &=2\E[|S|\one_{|S|>r}]-\E|S|\\
        &=2\nu e^{-r^2/2}-\nu\\
        &=\nu x.
\end{align*}
Thus $A_*$ is feasible and attains the dual upper bound.

It remains to compute its value. Since
\[
        \sgn(S)(S^3-3S)=|S|^3-3|S|,
\]
we get
\begin{align*}
        \E[A_*W]
        &=\E[(|S|^3-3|S|)\sgn(S^2-r^2)]\\
        &=2\E[(|S|^3-3|S|)\one_{|S|>r}]
          -\E[|S|^3-3|S|].
\end{align*}
The Gaussian tail moments are
\[
        \E[|S|\one_{|S|>r}]=\nu e^{-r^2/2},
        \qquad
        \E[|S|^3\one_{|S|>r}]=\nu(r^2+2)e^{-r^2/2}.
\]
Also
\[
        \E|S|=\nu,
        \qquad
        \E|S|^3=2\nu.
\]
Therefore
\begin{align*}
        \E[A_*W]
        &=2\nu(r^2-1)e^{-r^2/2}+\nu\\
        &=\nu\left(1+(1+x)(r^2-1)\right)\\
        &=\nu\left((1+x)r^2-x\right).
\end{align*}
Since $r^2=-2\log((1+x)/2)$,
\[
        \E[A_*W]
        =-\nu\left(x+2(1+x)\log\frac{1+x}{2}\right)
        =-\nu\Delta(x).
\]
This proves the lemma.
\end{proof}


\begin{corollary}[Diagonal bound]
For the general-dimensional ternary $h$ above,
\[
        \norm{P_3h}^2\ge \frac{\nu^2}{6}\Delta_+(x)^2.
\]
Consequently
\[
        2\norm{P_1h}^2-\norm{P_3h}^2
        \le
        \nu^2\left(2x^2-\frac16\Delta_+(x)^2\right).
\]
\end{corollary}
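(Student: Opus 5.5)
The plan is to combine three facts: the projection inequality $\norm{P_3h}^2\ge\ip{h}{\HermBasis_3(S)}^2$, the conditioning identity $\ip{h}{\HermBasis_3(S)}=\E[A(S)\HermBasis_3(S)]$ with $A(s)=\E_T[h(s,T)]$, and the one-dimensional rearrangement lemma.

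First, I will check that $A$ is feasible for the lemma. Since $h\in[-1,1]$, we have $|A|\le 1$. By the choice of rotation, $\E[A(S)S]=\ip{h}{S}=\nu x$, with $x\in[0,1]$. The lemma therefore applies and gives
\[
\E[A(S)W(S)]\le-\nu\Delta(x),
\]
where $W=S^3-3S=\sqrt6\,\HermBasis_3$. Dividing by $\sqrt6$ gives
\[
\ip{h}{\HermBasis_3(S)}\le-\frac{\nu}{\sqrt6}\,\Delta(x).
\]

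Second, I will convert this one-sided bound into a bound on the square. When $\Delta(x)\ge0$, the quantity $\ip{h}{\HermBasis_3(S)}$ is at most $-\frac{\nu}{\sqrt6}\Delta(x)\le0$. Its absolute value is therefore at least $\frac{\nu}{\sqrt6}\Delta(x)$, so squaring gives
\[
\ip{h}{\HermBasis_3(S)}^2\ge\frac{\nu^2}{6}\Delta(x)^2=\frac{\nu^2}{6}\Delta_+(x)^2.
\]
When $\Delta(x)<0$, we have $\Delta_+(x)=0$, and the bound is the trivial $\norm{P_3h}^2\ge0$. Combining the two cases with $\norm{P_3h}^2\ge\ip{h}{\HermBasis_3(S)}^2$ proves the first claim.

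For the consequence, I will substitute $\norm{P_1h}^2=\nu^2x^2$, which holds by the definition of $x$, and subtract the lower bound just proved.

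The only delicate point is the sign step. The lemma controls only the supremum of $\E[AW]$, so the argument needs the inner product to be forced negative. This is exactly why $\Delta_+$ appears in the statement rather than $\Delta$. No other step should cause difficulty.
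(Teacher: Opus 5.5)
Your proposal is correct and follows the paper's argument: the projection bound $\norm{P_3h}^2\ge\ip{h}{\HermBasis_3(S)}^2$, the reduction to the slice average $A$, the rearrangement lemma, and $\norm{P_1h}^2=\nu^2x^2$ for the consequence. The paper leaves the sign step implicit, and your case split on whether $\Delta(x)\ge0$ supplies it correctly.
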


\subsection{The disagreement term}
\label{sec:disagreement}

Now we have to bound the disagreement term
\(
        \norm{P_3k}^2.
\)
For each fixed vector \(t\), restrict \(k\) to the one-dimensional line obtained by varying \(S\) while holding \(T=t\). We call this restriction the \textbf{fiber} of \(k\) over \(t\), and write
\[
    u_t(s):=k(s,t).
\]
Since \(k\in\{-1,0,1\}\), every fiber \(u_t\) is a one-dimensional ternary function. For $j=0,1,2,3$, define
\[
        K_j(t):=\E_S[k(S,t)\HermBasis_j(S)].
\]
Thus $K_j$ is the $j$th Hermite coefficient of the $S$-fiber, as a function of the transverse variable $t$. In two dimensions, the four degree-three basis elements were
\[
        \HermBasis_3(S),\qquad
        \HermBasis_2(S)\HermBasis_1(T),\qquad
        \HermBasis_1(S)\HermBasis_2(T),\qquad
        \HermBasis_3(T).
\]
In general dimensions, this becomes the orthogonal decomposition
\[
        \cH_3(S,T)
        =\bigoplus_{j=0}^3 \cH_j(S)\otimes \cH_{3-j}(T).
\]
This just says that a total degree-three Hermite can spend $j$ degrees in $S$ and $3-j$ degrees in $T$.

Therefore
\[
        P_3k
        =\sum_{j=0}^3 \HermBasis_j(S)\,\Pi^T_{3-j}K_j(T),
\]
where $\Pi^T_{3-j}$ is the orthogonal projection in the $T$ variable onto transverse Hermite degree $3-j$. Hence
\begin{equation}
        \norm{P_3k}^2
        =\sum_{j=0}^3 \norm{\Pi^T_{3-j}K_j}^2.
        \label{eq:p3k-decomposition}
\end{equation}
Since the $\Pi^T_{3-j}$ are orthogonal projections onto closed subspaces, they decrease norm, so we get the simpler upper bound
\[
        \norm{P_3k}^2
        \le \sum_{j=0}^3\norm{K_j}^2
        =\E_T\sum_{j=0}^3\left(\E_S[k(S,T)\HermBasis_j(S)]\right)^2.
\]
For a one dimensional ternary function $u$, define
\[
        V(u):=\sum_{j=0}^3\left(\E[u(Z)\HermBasis_j(Z)]\right)^2.
\]
Then the previous inequality becomes
\begin{equation}
        \boxed{\norm{P_3k}^2\le \E_T V(u_T).}
        \label{eq:p3k-fiber-bound}
\end{equation}

For a ternary function $u:\R\to\{-1,0,1\}$, define its weighted support budget
\[
        \beta(u):=\E[|Z|u(Z)^2].
\]
Since $u^2=\one_{u\neq0}$,
\[
        \beta(u)=\E[|Z|\one_{u(Z)\neq0}].
\]
So $\beta(u)$ is the $|Z|$-weighted Gaussian size of the support of the fiber.

\begin{theorem}[Fiber inequality]
For every ternary $u:\R\to\{-1,0,1\}$,
\begin{equation}
        V(u)\le 3\nu\beta(u)-\beta(u)^2.
        \label{eq:fiber-inequality}
\end{equation}
\end{theorem}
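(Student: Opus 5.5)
We prove the fiber inequality by Lagrangian duality, following the hint in the technical overview. Write $V(u)=\max_{\|a\|=1}\E[u(Z)p_a(Z)]^2$, where $p_a:=\sum_{j=0}^3 a_j\HermBasis_j$ ranges over unit-norm polynomials of degree at most three. Cauchy--Schwarz in $\R^4$ gives this identity. It therefore suffices to show that for every unit $a$ and every ternary $u$,
\[
        \E[u p_a]^2\le 3\nu\beta(u)-\beta(u)^2 .
\]
Replacing $u$ by $-u$ if needed, we may assume $\E[up_a]\ge0$. Set $L:=\E[up_a]$ and $\beta:=\beta(u)$.

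\textbf{Step 1 (dual reduction).} Introduce a support price $c\ge0$. For fixed $a$ and $c$, the pointwise maximizer of $\E[u p_a]-c\,\E[|Z|u^2]$ over ternary $u$ is
\[
        u_{a,c}=\sgn(p_a)\one_{\{|p_a(z)|\ge c|z|\}},
\]
and the maximum equals $\Phi(a,c):=\E[(|p_a|-c|Z|)_+]$. Hence $L\le c\beta+\Phi(a,c)$ for every $c$, which gives
\[
        L^2\le \inf_{c\ge0}\bigl(c\beta+\Phi(a,c)\bigr)^2 .
\]
The goal becomes a family of inequalities in the five real parameters $(a,c)$ together with $\beta$. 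We need
\[
        \bigl(c\beta+\Phi(a,c)\bigr)^2\le 3\nu\beta-\beta^2
\]
for some admissible $c$ depending on $\beta$ and $a$.

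\textbf{Step 2 (concave right-hand side).} The right-hand side $g(\beta):=3\nu\beta-\beta^2$ is concave. Its square root $\sqrt{g}$ is also concave, and its supporting lines take the form $\ell(\beta)=c\beta+d(c)$, with $d(c)$ the Legendre-type intercept
\[
        d(c)=\sup_\beta\bigl(\sqrt{g(\beta)}-c\beta\bigr).
\]
So it suffices to prove, for every unit $a$ and every slope $c$ in the relevant range,
\[
        \Phi(a,c)\le d(c),
\]
that is, $\sup_{\|a\|=1}\E[(|p_a|-c|Z|)_+]\le d(c)$. Here $\beta$ is confined to $[0,\nu]$ since $u^2\le1$, so only slopes $c$ touching $\sqrt g$ on that interval matter. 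For small $\beta$, the slope of $\sqrt g$ blows up, and we handle this by taking $c$ large. For $\beta$ near $\nu$, we have $\sqrt{g(\nu)}=\sqrt2\,\nu$, which we must check against the trivial bound $L\le\|p_a\|\,\|u\|\le1$. Since $\sqrt2\,\nu=2/\sqrt\pi\approx1.128>1$, this endpoint is fine.

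\textbf{Step 3 (one-dimensional certification).} The remaining inequality, $\sup_a\E[(|p_a|-c|Z|)_+]\le d(c)$, is a maximization over the unit sphere $S^3$ of an integral with piecewise-polynomial integrand. Its breakpoints are the real roots of $p_a(z)=\pm cz$, which are at most cubic equations. We would cover $c\in[0,c_{\max}]$ by a grid and the sphere by boxes. On each box we evaluate Gaussian integrals over root-bounded intervals using outward-rounded interval arithmetic, with Lipschitz bounds in $(a,c)$ to pass between grid points. Parity splits $p_a$ into even and odd parts, and this can reduce the search. For large $c$ we would give a separate analytic tail bound. For example, $(|p|-c|z|)_+$ vanishes except on a set whose Gaussian measure decays in $c$.

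\textbf{Main obstacle.} The main difficulty is Step 3. The supremum over $a$ is nonconvex, and the inequality is presumably tight near the extremal fibers, namely the hyperplane fiber $u=\sgn(z)$ with $\beta=\nu$ and the small-support limit. The interval covering must therefore be fine there, and the degenerate regimes $c\to0$ and $c\to\infty$ need hand-proved asymptotics. There we would first try expansions in $\beta$ that show $V(u)\le 3\nu\beta+O(\beta^2)$ directly, using $|\E[u\HermBasis_j]|\le\E[|\HermBasis_j|\one_{u\ne0}]$ and the fact that the support concentrates near the origin when $\beta$ is small.
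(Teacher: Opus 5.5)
\textbf{Gap: the large-$c$ (small-$\beta$) regime.} Your Steps 1--2 match the paper's dual regime. That regime uses the priced threshold bound $\sqrt{V(u)}\le J(c,p)+c\,\beta(u)$ with $J(c,p)=\mathbb{E}[(|p(Z)|-c|Z|)_+]$, spliced against the concave conjugate $d(c)=\tfrac{3\nu}{2}(\sqrt{1+c^2}-c)$ of the semicircle $\beta\mapsto\sqrt{3\nu\beta-\beta^2}$. The genuine gap is Step 3 at large $c$, which you treat as a routine tail. It is the one place where the statement is sharp.

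For large $c$ the central active set is a window of width about $|p(0)|/c$, so $J(c,p)\approx\varphi(0)p(0)^2/c$. Over unit cubics, $p(0)^2\le\sum_{j\le3}\psi_j(0)^2=\tfrac32$, with equality at $p_0=(3-s^2)/\sqrt6$. Hence $\sup_pJ(c,p)\sim\tfrac{3\nu}{4c}\sim d(c)$. One order further, $d(c)-J(c,p_0)=\tfrac{5\nu}{32}c^{-3}+O(c^{-5})$. So no grid-plus-Lipschitz cover closes at $c=\infty$. Any tail estimate must be exact at order $c^{-1}$ and still win at order $c^{-3}$.

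Your suggested tail bound (small Gaussian measure of the active set) fails. It loses a factor $2$ at leading order: $\int_{|s|\le A/c}|p|\,d\gamma\approx\tfrac{3\nu}{2c}$, with $A=\sqrt{3/2}$. Even the sharper envelope $|p(s)|\le\sqrt{K(s)}$, with $K=\sum_{j\le3}\psi_j^2=\tfrac32+\tfrac32s^2-\cdots$, fails. It gives $\tfrac{3\nu}{4c}+\tfrac{9\nu}{32}c^{-3}$, which exceeds $d(c)=\tfrac{3\nu}{4c}-\tfrac{6\nu}{32}c^{-3}+\cdots$.

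\textbf{Your fallback fails for the same reason.} The constant $3\nu$ is sharp. For $u=\mathbf 1_{[-r,r]}$ one has $V=\varphi(0)^2(6r^2-\tfrac{10}{3}r^4+\cdots)$, against $3\nu\beta-\beta^2=\varphi(0)^2(6r^2-\tfrac52r^4+\cdots)$. So $V\le3\nu\beta+O(\beta^2)$ with uncontrolled sign proves nothing. You need a negative $\beta^2$ term with coefficient at least one, uniformly in $u$.

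Worse, the moment bound $|\mathbb{E}[u\psi_j]|\le\mathbb{E}[|\psi_j|\mathbf 1_{u\neq0}]$ violates the target on that very support. It lets $\psi_1,\psi_3$ see $u$ as odd while $\psi_0,\psi_2$ see it as even, and produces $\varphi(0)^2(6r^2-\tfrac56r^4+\cdots)$.

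Also, small $\beta$ does not force support near $0$: for example, $\beta(\mathbf 1_{|z|>M})=\nu e^{-M^2/2}$. In the dual picture this is the re-entry of $\{|p(s)|\ge c|s|\}$ at $s$ of order $\sqrt c$.

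\textbf{What the paper does there.} It supplies exactly this second-order analysis:
\begin{itemize}
\item it rescales by $\lambda=1/c$, $x=cs$;
\item it works in a chart around $p_0$, plus an away chart with monotone domination;
\item it freezes the cutoff at $A$ via the moving-cutoff lemma;
\item it proves localization, with a far-tail allowance $0.03\,c^{-3}$ below the $\approx0.125\,c^{-3}$ margin;
\item it certifies $(D-I)/\lambda^2>0$ uniformly on $(0,\tfrac1{12}]$ by Taylor jets.
\end{itemize}

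\textbf{Smaller points.} Neither $u=\operatorname{sgn}$ ($V=\tfrac76\nu^2$ versus $2\nu^2$) nor $c\to0$ is critical. Only slopes $c\ge1/(2\sqrt2)$ matter.

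At the high-$\beta$ end, the paper avoids certifying the dual bound below $c_M\approx0.9934$. It uses Bessel plus bathtub, $V\le\Pr(u\neq0)\le2\Phi(r)-1$, for $\beta\ge\beta_0\approx0.3533$. Your trivial bound $V\le1$ covers only $\beta\ge0.539$. Your route therefore needs extra certification of the dual bound on slopes between roughly $0.66$ and $c_M$.
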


\paragraph{Proof sketch}. This is proved in Appendix~\ref{app:fiber-certificate}, which reduces ~\eqref{eq:fiber-inequality} to a finite computational check. Here, we briefly sketch out the proof. The inequality is genuinely finite-dimensional: $V(u)$ depends only on the four moments $\E[u\HermBasis_j]$, $0\le j\le3$. By the dual formula for the Euclidean norm,
\[
        \sqrt{V(u)}=\sup_{\norm{a}_2=1}\E[u(Z)p_a(Z)],
        \qquad
        p_a(Z)=\sum_{j=0}^3a_j\HermBasis_j(Z).
\]
For fixed $p_a$ and a fixed support price $\tau\ge0$, pointwise maximization over the three values of $u$ gives
\[
        \max_{v\in\{-1,0,1\}}\bigl(v\,p_a(z)-\tau|z|\,v^2\bigr)
        =\bigl(|p_a(z)|-\tau|z|\bigr)_+,
\]
with maximizer the threshold rule
\[
        u(z)=\sgn(p_a(z))\one_{\{|p_a(z)|\ge \tau |z|\}};
\]
threshold functions are thus the exact optimizers of the priced problem, not merely plausible candidates. The infinite-dimensional search over $u$ therefore collapses to the four-parameter family $(a,\tau)$, whose active sets are cut out by one-dimensional polynomial inequalities; the content of~\eqref{eq:fiber-inequality} is the inequality $3\nu\beta-m^2\ge\beta^2$ along this family, where $m=\int_E|p_a|\,d\gamma$ and $\beta=\int_E|z|\,d\gamma$ over the active set $E$. The constant fiber $u\equiv1$ (that is, $p_a=\HermBasis_0$, $\tau=0$) realizes $\rho:=(3\nu\beta-m^2)/\beta^2=3-\pi/2\approx1.4292$, and numerical scans over the family find no smaller value. This suggests that~\eqref{eq:fiber-inequality} --- which asserts only $\rho\ge1$ --- holds with about $0.43$ of true headroom, though we do not prove this sharper assertion.

Appendix~\ref{app:fiber-certificate} turns this into a proof: it reduces~\eqref{eq:fiber-inequality} to two certified regimes (a direction-free high-budget bound and a Fenchel-dual low-budget bound), each consisting of finitely many one-dimensional inequalities that are verified in outward-rounded interval arithmetic. 

\paragraph{Reproducibility.}
  The scripts, certificates, and Arb outputs for this 
  are available at \url{https://github.com/trishullab/grothendieck-bounds}. 

\vspace{\baselineskip}

Apply~\eqref{eq:fiber-inequality} to each fiber $u_T$. From~\eqref{eq:p3k-fiber-bound},
\[
        \norm{P_3k}^2
        \le
        \E_T\left[3\nu\beta(u_T)-\beta(u_T)^2\right].
\]
We now prove the following bound. Note that this step is dimension-free.

\begin{theorem}[Fiber Residual Bound \#1]
We have the fiber bound
\begin{equation}
        \norm{P_3k}^2
        \le
        \nu^2\left(3(1-x)-(1-x)^2\right).
        \label{eq:fiber-residual-one}
\end{equation}
\end{theorem}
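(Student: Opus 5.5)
The plan is to start from the averaged fiber bound already obtained from \eqref{eq:p3k-fiber-bound} and the Fiber inequality \eqref{eq:fiber-inequality}, namely $\norm{P_3k}^2\le \E_T\left[3\nu\beta(u_T)-\beta(u_T)^2\right]$. I would then reduce everything to the single scalar $B:=\E_T\,\beta(u_T)$, the average weighted support of the fibers, using concavity and monotonicity of $\phi(\beta):=3\nu\beta-\beta^2$.

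\emph{Step 1 (Jensen).} Since $\phi$ is concave, Jensen's inequality over the law of $T$ gives
\[
\E_T\,\phi(\beta(u_T))\le \phi(B).
\]

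\emph{Step 2 (compute $B$ through $h^2+k^2=1$).} By Fubini,
\[
B=\E_T\E_S\bigl[|S|\,k(S,T)^2\bigr]=\E\bigl[|S|\,k^2\bigr]=\E\bigl[|S|(1-h^2)\bigr]=\nu-\E\bigl[|S|\,h^2\bigr].
\]
Because $h\in\{-1,0,1\}$, we have $h^2=|h|\ge \sgn(S)\,h$ pointwise. Hence
\[
\E\bigl[|S|\,h^2\bigr]\ge \E\bigl[|S|\sgn(S)\,h\bigr]=\E[S\,h]=\ip{h}{S}=\nu x,
\]
where the last identity is the rotation convention of Section~\ref{sec:rotation}. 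This gives $0\le B\le \nu(1-x)$, and this is exactly where the agreement parameter $x$ enters the disagreement bound.

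\emph{Step 3 (monotonicity).} The function $\phi$ is increasing on $[0,3\nu/2]$, and $B\le\nu(1-x)\le\nu<3\nu/2$. Therefore
\[
\phi(B)\le\phi\bigl(\nu(1-x)\bigr)=\nu^2\bigl(3(1-x)-(1-x)^2\bigr),
\]
which is \eqref{eq:fiber-residual-one}.

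Nothing here depends on $d$, which matches the remark that this step is dimension-free. The step needing the most care is Step 2: we must use the \emph{same} coordinate $S$ for the fibers as the direction of $P_1h$, so that $\ip{h}{S}=\nu x$, and we must use the pointwise inequality $|h|\ge\sgn(S)h$ to bridge the $|S|$-weighted support of $h$ and its linear moment. Once $B\le\nu(1-x)$ is established, the remaining steps are immediate. We also need $B$ to lie in the increasing range of $\phi$, so that an upper bound on $B$ yields an upper bound on $\phi(B)$; this holds because $B\le\nu<3\nu/2$.
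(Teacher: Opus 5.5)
Your proposal is correct and follows the paper's proof essentially step for step. You use the same pointwise bound $hS\le |S|h^2$ together with $h^2+k^2=1$ to get $\E_T\beta(u_T)\le\nu(1-x)$, then apply Jensen to the concave function $3\nu t-t^2$ and use its monotonicity on the relevant range.
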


\begin{proof}
We do this by relating the average budget to $x$. Since
\[
        \nu x=\E[hS]
\]
and
\[
        hS\le |S|h^2,
\]
we get
\begin{align*}
        \nu x
        &\le \E[|S|h^2]\\
        &=\E[|S|(1-k^2)]\\
        &=\nu-\E[|S|k^2].
\end{align*}
Therefore
\[
        \E[|S|k^2]\le \nu(1-x).
\]
But
\[
        \E[|S|k^2]
        =\E_T\E_S[|S|k(S,T)^2]
        =\E_T\beta(u_T).
\]
Let
\[
        b:=\E_T\beta(u_T).
\]
Then
\[
        b\le \nu(1-x).
\]

The function
\[
        \phi(t)=3\nu t-t^2
\]
is concave, so Jensen gives
\[
        \E_T\phi(\beta(u_T))\le \phi(\E_T\beta(u_T))=3\nu b-b^2.
\]
On $[0,\nu]$, $\phi$ is increasing. Hence, using $b\le \nu(1-x)$, we obtain~\eqref{eq:fiber-residual-one}.
\end{proof}

This is our first fiber residual bound. But this bound is not best for small $x$. We also use a second cruder bound.

\begin{theorem}[Fiber Residual Bound \#2]
    \begin{equation}
        \norm{P_3k}^2
        \le
        2\PhiG(\sqrt{-2\log x})-1.
        \label{eq:fiber-residual-two}
\end{equation}
\end{theorem}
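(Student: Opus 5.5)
The approach is to discard all fine Hermite structure and bound $\norm{P_3k}^2$ by the total mass of $k$. The point is that a large linear agreement $x$ forces the disagreement set $\{k\neq0\}$ to be small in Gaussian measure. Since $P_3$ is an orthogonal projection and $k^2=\one_{\{k\neq0\}}$,
\[
        \norm{P_3k}^2\le\norm{k}^2=\E[k^2]=\Pr[k(X)\neq0]=:q .
\]
So it suffices to show $q\le 2\PhiG(\sqrt{-2\log x})-1$. Here $\PhiG$ is the standard Gaussian CDF, and for $x=0$ the right side is interpreted as $1$, which makes the bound trivial.

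First, as in the proof of Fiber Residual Bound \#1, I use the rotated coordinate $S$ together with $hS\le|S|h^2$ and $h^2=1-k^2=\one_{\{k=0\}}$:
\[
        \nu x=\E[hS]\le \E\bigl[|S|\one_{\{k=0\}}\bigr].
\]
Second, I apply a bathtub (rearrangement) step. Among all events $E$ with $\Pr[E^c]=q$, the quantity $\E[|S|\one_E]$ is maximized by discarding the region where $|S|$ is smallest. That is, the optimal complement is $\{|S|<r\}$ with $2\PhiG(r)-1=q$. This is the standard level-set argument: swapping mass from a set where $|S|\ge r$ into one where $|S|<r$ cannot increase the integral. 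The argument only involves the distribution of $|S|$, so the transverse variables $T$ play no role and the step is dimension-free.

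Third, I evaluate the extremal value using the Gaussian tail moment already computed in the rearrangement lemma:
\[
        \E\bigl[|S|\one_{\{|S|\ge r\}}\bigr]=\nu e^{-r^2/2}.
\]
Combining gives $\nu x\le \nu e^{-r^2/2}$, hence $r\le\sqrt{-2\log x}$ for $x\in(0,1]$. Since $r\mapsto 2\PhiG(r)-1$ is increasing, this yields $q=2\PhiG(r)-1\le 2\PhiG(\sqrt{-2\log x})-1$, which is \eqref{eq:fiber-residual-two}.

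The only step needing care is the bathtub step. The cleanest way to write it is to compare $\one_{\{k=0\}}$ with $\one_{\{|S|\ge r\}}$, where both events have probability $1-q$. The difference of their $|S|$-weighted integrals is
\[
        \E\bigl[(|S|-r)\bigl(\one_{\{k=0\}}-\one_{\{|S|\ge r\}}\bigr)\bigr].
\]
This equality holds because the two indicators have equal expectation, so subtracting the constant $r$ changes nothing. The integrand is pointwise nonpositive: where $|S|>r$ the indicator difference is $\le 0$, and where $|S|<r$ it is $\ge 0$. Hence $\E[|S|\one_{\{k=0\}}]\le\E[|S|\one_{\{|S|\ge r\}}]$. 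This bound is crude compared with Bound \#1 for moderate $x$, but it is the one that is effective as $x\to0$.
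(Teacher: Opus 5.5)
Your proof is correct and is essentially the paper's argument. Both proofs bound $\norm{P_3k}^2$ by $\norm{k}^2$, use $hS\le|S|h^2$ to turn the linear agreement $\nu x$ into a $|S|$-weighted support budget, and apply the bathtub principle with the centered interval $\{|S|\le r\}$. The only difference is cosmetic: the paper averages out $T$ to the profile $\rho(s)=\E_T[k(s,T)^2]$ and maximizes the mass subject to $\E[|S|k^2]\le\nu(1-x)$, whereas you fix the mass $q$, maximize $\E[|S|\one_{\{k=0\}}]$ directly in $d$ dimensions, and then invert. These are equivalent formulations, and your explicit pointwise comparison is fine once you note that $x>0$ forces $q<1$, so $r$ is finite.
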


\begin{proof}
Let
\[
        \rho(s):=\E_T[k(s,T)^2].
\]
Then $0\le\rho\le1$ and
\[
        \norm{k}^2_2=\E\rho(S).
\]
Also
\[
        \E[|S|\rho(S)]=\E[|S|k(S,T)^2]\le \nu(1-x).
\]
Since $P_3$ is an orthogonal projection,
\[
        \norm{P_3k}^2\le \norm{k}^2_2=\E\rho(S).
\]

Given a budget on $\E[|S|\rho(S)]$, the way to maximize $\E\rho(S)$ is to spend the budget where $|S|$ is smallest. Thus the extremal support is centered around zero.

Let
\[
        r_x:=\sqrt{-2\log x}
\]
for $x\in(0,1]$, with the convention $r_0=\infty$. Then
\[
        \E[|S|\one_{|S|\le r_x}]=\nu(1-e^{-r_x^2/2})=\nu(1-x).
\]
Therefore the bathtub principle gives
\[
        \norm{P_3k}^2
        \le
        \mathbb P(|S|\le r_x)
        =2\PhiG(\sqrt{-2\log x})-1,
\]
which proves~\eqref{eq:fiber-residual-two}.
\end{proof}

Combining the two fiber residual bounds, define
\begin{equation}
        R(x):=
        \min\left\{
        3(1-x)-(1-x)^2,
        \frac{1}{\nu^2}\left(2\PhiG(\sqrt{-2\log x})-1\right)
        \right\}.
        \label{eq:residual-function}
\end{equation}
Then
\[
        \norm{P_3k}^2\le \nu^2R(x).
\]

\subsection{The final inequality}
\label{sec:putting-together}

From the diagonal bound and the residual bound,
\[
        D_1(h,k)
        \le
        \nu^2\left(2x^2-\frac16\Delta_+(x)^2+R(x)\right).
\]
So it remains to prove the scalar inequality
\begin{equation}
        E_1(x):=2x^2-\frac16\Delta_+(x)^2+R(x)
        \le \frac{11}{6}
        \qquad(0\le x\le1).
        \label{eq:scalar-reduction}
\end{equation}

Let
\[
        r_-:=\frac{1-1/\sqrt3}{2},
        \qquad
        r_+:=\frac{1+1/\sqrt3}{2}.
\]
These are the roots of
\[
        x^2-x+\frac16=0.
\]

Now we do some simple casework based on the range in which $x$ lies. Everything is simple algebra.

\subsection*{Low range: $0\le x\le r_-$}

Use the crude bound $R(x)\le 1/\nu^2=\pi/2$ and drop the negative term:
\[
        E_1(x)\le 2x^2+\frac\pi2
        \le 2r_-^2+\frac\pi2.
\]
Now
\[
        2r_-^2+\frac\pi2
        =\frac23-\frac1{\sqrt3}+\frac\pi2
        <\frac{11}{6}.
\]

\subsection*{Middle range: $r_-\le x\le r_+$}

Use the fiber branch of $R$ and again drop the negative term:
\begin{align*}
        E_1(x)
        &\le 2x^2+3(1-x)-(1-x)^2\\
        &=x^2-x+2.
\end{align*}
Since $x\in[r_-,r_+]$,
\[
        x^2-x+\frac16\le0.
\]
Therefore
\[
        x^2-x+2\le \frac{11}{6}.
\]

\subsection*{High range: $r_+\le x\le1$}

Use the fiber branch again. We need the cubic correction.
It suffices to show
\begin{equation}
        \Delta(x)^2\ge 6x^2-6x+1.
        \label{eq:high-range-correction}
\end{equation}
Indeed, then
\begin{align*}
        E_1(x)
        &\le 2x^2+3(1-x)-(1-x)^2-\frac16(6x^2-6x+1)\\
        &=\frac{11}{6}.
\end{align*}

To prove~\eqref{eq:high-range-correction}, first note that
\[
        \Delta(x)-(3x-2)
        =2\left((1-x)+(1+x)\log\frac{1+x}{2}\right).
\]
Writing $t=(1+x)/2$, the bracket becomes
\[
        2\bigl((1-t)+t\log t\bigr)\ge0
\]
for $0<t\le1$. Hence
\[
        \Delta(x)\ge 3x-2.
\]
For $x\ge r_+$, both sides needed below are nonnegative, and
\[
        (3x-2)^2-(6x^2-6x+1)=3(1-x)^2\ge0.
\]
Thus
\[
        \Delta(x)^2\ge (3x-2)^2\ge 6x^2-6x+1.
\]
This proves the high range.

Combining the three ranges proves~\eqref{eq:scalar-reduction}.

We have thus shown
\[
        D_1(h,k)\le \frac{11}{6}\nu^2.
\]
As explained earlier, this implies the affine coefficient inequality
\[
        b_3\ge 2b_1-\frac{11}{6}.
\]
Since this is affine, it is preserved under mixtures.

\section{The Affine Coefficient Inequality Bounds the Inverse Majorant}
\label{sec:strip_majorant}

\begin{definition}[Admissible majorant value]\label{def:admissible-majorant}
Let $H$ be odd and analytic near the origin, with $H'(0)>0$. Let $H^{-1}$ denote the local inverse branch satisfying $H^{-1}(0)=0$, and write
\[
        H^{-1}(z)=\sum_{j\ge0}a_{2j+1}z^{2j+1}.
\]
A number $\gamma>0$ is \emph{admissible} for $H$ if this series converges absolutely at $\gamma$ and
\[
        \sum_{j\ge0}|a_{2j+1}|\gamma^{2j+1}\le1.
\]
\end{definition}

We now turn the affine coefficient inequality proved in Section~\ref{sec:affine-strip} into a universal barrier for the inverse-majorant parameter. The coefficients $b_1$ and $b_3$ depend linearly on the correlation function, so the strip is preserved under mixtures and coefficientwise limits. Section~\ref{sec:majorant_optimal} will apply this barrier to the limiting mixed functions supplied by the Naor--Regev construction.

\begin{proposition}[The majorant barrier]\label{prop:majorant-barrier}
Let
\[
        H(t)=b_1t+b_3t^3+b_5t^5+\cdots
\]
be odd and analytic near the origin. Suppose that
\[
        0<b_1\le1,
        \qquad
        b_3\ge2b_1-\frac{11}{6}.
\]
Then every admissible value for $H$ satisfies
\[
        \gamma\le\frac{11}{12}.
\]
\end{proposition}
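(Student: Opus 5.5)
We plan to use only the first two coefficients of the inverse series, discarding all higher terms of the majorant, since they are nonnegative. Inverting $H(t)=b_1t+b_3t^3+\cdots$ term by term gives
\[
a_1=\frac1{b_1},\qquad a_3=-\frac{b_3}{b_1^4}.
\]
So for any admissible $\gamma$ we have
\[
T(\gamma):=\frac{\gamma}{b_1}+\frac{|b_3|}{b_1^4}\gamma^3\le\sum_{j\ge0}|a_{2j+1}|\gamma^{2j+1}\le1 .
\]
The goal is therefore to show that $T(\gamma)>1$ whenever $\gamma>11/12$, for every pair $(b_1,b_3)$ allowed by the hypotheses.

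Suppose for contradiction that $\gamma>11/12$ is admissible. We treat the two ranges of $b_1$ separately.

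\textbf{Case $b_1\le 11/12$.} Here the linear term alone gives $\gamma/b_1>1$, a contradiction.

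\textbf{Case $11/12<b_1\le1$.} Now $2b_1-\tfrac{11}{6}>0$, so the hypothesis yields $|b_3|\ge b_3\ge 2b_1-\tfrac{11}{6}$. Since $T$ is strictly increasing in $\gamma$, it suffices to prove $T(11/12)\ge 1$ with this lower bound for $|b_3|$ inserted. Fix $c:=11/12$, so that $2c=\tfrac{11}{6}$. After multiplying by $b_1^4$, the claim becomes that
\[
g(\beta):=c\beta^3+(2\beta-2c)c^3-\beta^4
\]
is nonnegative for $\beta\in[c,1]$.

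We check this with three elementary facts:
\begin{itemize}
\item At the left endpoint, $g(c)=c^4-c^4=0$.
\item At the right endpoint, $g(1)=c+\tfrac16c^3-1=\tfrac{11}{12}+\tfrac{1331}{10368}-1>0$.
\item On the interval, $g''(\beta)=6\beta(c-2\beta)<0$, so $g$ is concave.
\end{itemize}
A concave function that is nonnegative at both endpoints is nonnegative on the whole interval, so $g\ge0$ on $[c,1]$. Hence $T(c)\ge1$. Strict monotonicity of $T$ then gives $T(\gamma)>1$, contradicting admissibility.

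The only delicate points are the following. First, we must justify that $a_1$ and $a_3$ have the stated form; this follows from Lagrange inversion or directly from comparing coefficients in $H(H^{-1}(z))=z$. Second, we must justify the case split. The case split is forced: the affine bound only controls $|b_3|$ once $2b_1-\tfrac{11}{6}$ is positive, and the linear term alone already handles small $b_1$. The endpoint computation $g(c)=0$ shows that the barrier $11/12$ is exactly where the two regimes meet, so the main thing to get right is the sign and concavity check of $g$ on $[11/12,1]$.
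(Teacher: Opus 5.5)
Your proof is correct and follows the paper's argument: the same inverse coefficients $a_1=1/b_1$, $a_3=-b_3/b_1^4$, the same case split at $b_1=11/12$, and the same two-term lower bound on the majorant at $11/12$. The only difference is cosmetic. You verify $g\ge 0$ on $[c,1]$ by concavity and endpoint values, while the paper factors $g(\beta)=(\beta-c)(2c^3-\beta^3)$ and uses $\beta^3\le 1<2c^3$.
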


\begin{proof}
Write
\[
        H^{-1}(z)=a_1z+a_3z^3+\cdots.
\]
Comparing the coefficients of $z$ and $z^3$ in $H(H^{-1}(z))=z$ gives
\[
        a_1=\frac1{b_1},
        \qquad
        a_3=-\frac{b_3}{b_1^4}.
\]

If $b_1\le11/12$, admissibility gives
\[
        1\ge |a_1|\gamma=\frac{\gamma}{b_1},
\]
so $\gamma\le11/12$.

Suppose now that $b_1>11/12$. The strip gives
\[
        b_3\ge2\left(b_1-\frac{11}{12}\right)>0.
\]
Let
\[
        M_H(s):=\sum_{j\ge0}|a_{2j+1}|s^{2j+1},
\]
allowing the value $+\infty$. Using only the first two terms,
\begin{align*}
        M_H\left(\frac{11}{12}\right)
        &\ge
        \frac{11}{12b_1}
        +\frac{2(b_1-\frac{11}{12})}{b_1^4}
          \left(\frac{11}{12}\right)^3\\
        &=1+
        \frac{(b_1-\frac{11}{12})
        \left(2(\frac{11}{12})^3-b_1^3\right)}{b_1^4}
        >1.
\end{align*}
The last inequality uses $b_1\le1$ and $2(11/12)^3>1$. If some $\gamma\ge11/12$ were admissible, monotonicity of $M_H$ would give
\[
        M_H\left(\frac{11}{12}\right)\le M_H(\gamma)\le1,
\]
a contradiction.
\end{proof}

\section{The Inverse Majorant Bound is Optimal}
\label{sec:majorant_optimal}

The key global input is obtained by reworking the construction in the proof of Naor and Regev's optimality theorem. We first record the notion of mixing needed to state it.

\begin{definition}[Mixed rounding scheme]\label{def:mixed-rounding}
Fix $d\in\mathbb N$. A $d$-dimensional \emph{mixed rounding scheme} is a probability distribution $\mu$ on pairs of measurable sign functions
\[
        f,g:\R^d\to\{\pm1\}.
\]
One pair $(f,g)$ is sampled from $\mu$ and then used for every vector in the rounding procedure. Its averaged correlation function is
\[
        H_\mu(t):=\E_{(f,g)\sim\mu}H_{f,g}(t).
\]
A \emph{limiting mixed correlation function} is a coefficientwise limit of averaged correlation functions of mixed rounding schemes, possibly in increasing dimensions.
\end{definition}

The admissible values from Definition~\ref{def:admissible-majorant} are the parameters appearing in the inverse-majorant bound~\eqref{eq:inverse-majorant-bound}. For an actual mixed scheme, the standard tensor preprocessing argument turns an admissible value $\gamma$ into the upper bound $K_G\le\pi/(2\gamma)$. The next lemma shows that, after allowing mixtures and increasing dimension, these bounds can approach $K_G$ itself.

\begin{lemma}[Optimality of the inverse-majorant bound]\label{lem:inverse-majorant-optimal}
There exist odd analytic limiting mixed correlation functions $H_k$ and admissible values $\gamma_k$ such that
\[
        \gamma_k\le\frac{\pi}{2K_G}
        \qquad\text{and}\qquad
        \gamma_k\longrightarrow\frac{\pi}{2K_G}.
\]
Equivalently, the resulting upper bounds satisfy
\[
        \frac{\pi}{2\gamma_k}\longrightarrow K_G.
\]
Thus, after allowing mixtures and increasing dimension, the inverse-majorant bound is asymptotically optimal.
\end{lemma}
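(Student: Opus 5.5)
The plan is to build the $H_k$ by hand from the convex-hull form of Grothendieck's inequality, in the spirit of Naor--Regev, rather than by citing their theorem as a black box. The target is limiting mixed correlation functions that are essentially linear with slope close to $\pi/(2K_G)$. Admissibility will then follow from Proposition~\ref{prop:inverse-certificate-near-linearity}: if $H(t)=b_1t+\sum_{m\ge3}b_mt^m$ with $\gamma+\Delta_H<b_1$, then $\gamma$ is admissible.

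\textbf{Step 1 (finite duality).} Fix $d$ and $\varepsilon>0$, and let $N_\varepsilon\subset S^{d-1}$ be a finite, symmetric $\varepsilon$-net. Grothendieck's inequality \eqref{eq:grothendieck-ineq} with constant $K_G$ says that $\ip{u_i}{v_j}/K_G$ lies in the absolute convex hull of the matrices $(\eps_i\delta_j)$. We will use it in its dual form, via LP duality or the minimax theorem on the finite-dimensional cut polytope. Applied to the matrix $(\ip{x}{y}/K_G)_{x,y\in N_\varepsilon}$, this gives a probability distribution over sign assignments $\eps,\delta:N_\varepsilon\to\{\pm1\}$ with
\[
  \E[\eps(x)\delta(y)]=\ip{x}{y}/K_G \qquad\text{for all } x,y\in N_\varepsilon.
\]
The absolute convex hull allows a signed combination, but since $-(\eps_i\delta_j)=((-\eps_i)\delta_j)$ the negative terms are absorbed into ordinary convex combinations. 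We then symmetrize by mixing with $(x\mapsto-\eps(-x),\,y\mapsto-\delta(-y))$. The right-hand side is odd, so the identity is preserved, and oddness holds with respect to the symmetric net.

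\textbf{Step 2 (extension to Gaussian space).} Let $\pi_\varepsilon$ be an odd nearest-point map $S^{d-1}\to N_\varepsilon$; fix a tie-breaking rule that commutes with $x\mapsto -x$, possible since ties have measure zero. Define $f(X)=\eps(\pi_\varepsilon(X/|X|))$ and $g(Y)=\delta(\pi_\varepsilon(Y/|Y|))$. These are odd measurable sign functions, and the induced mixed scheme $\mu$ has
\[
  H_\mu(t)=\frac{\pi}{2K_G}\,\E\bigl[\ip{\pi_\varepsilon(X/|X|)}{\pi_\varepsilon(Y/|Y|)}\bigr].
\]
Write $\phi(x)=x/|x|$, so that $\E\ip{\phi(X)}{\phi(Y)}=\sum_m \|P_m\phi\|^2t^m$, a series whose coefficients sum to $1$. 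The effect of $\pi_\varepsilon$ is controlled through the vector Hermite coefficients: the $L^2$ distance $\|\pi_\varepsilon\circ\phi-\phi\|\le\varepsilon$ perturbs the coefficient sequence by $O(\varepsilon)$ in $\ell^1$, using Cauchy--Schwarz across degrees exactly as in the allowable-map lemma. For $\phi$ itself, the linear coefficient is $\|P_1\phi\|^2=(\E|X|)^2/d^2\cdot d\to1$ as $d\to\infty$. Since all coefficients are nonnegative and sum to $1$, the nonlinear mass is at most $1-\|P_1\phi\|^2\to0$.

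\textbf{Step 3 (conclusion).} Take $d=d_k\to\infty$ and $\varepsilon=\varepsilon_k\to0$ and let $H_k=H_{\mu_k}$. Then $b_1^{(k)}\to\pi/(2K_G)$ and $\Delta_{H_k}\to0$. Set $\gamma_k:=b_1^{(k)}-2\Delta_{H_k}-1/k$. Proposition~\ref{prop:inverse-certificate-near-linearity} makes $\gamma_k$ admissible, and $\gamma_k\to\pi/(2K_G)$. The constraint $\gamma_k\le\pi/(2K_G)$ is automatic: an admissible $\gamma$ for an actual mixed scheme gives $K_G\le\pi/(2\gamma)$ by the tensor preprocessing argument. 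If preferred, one can instead just replace $\gamma_k$ by $\min(\gamma_k,\pi/(2K_G))$, which keeps admissibility by monotonicity of the majorant.

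\textbf{Main obstacle.} The delicate step is Step 1 combined with measurability: Grothendieck's inequality only yields the convex-hull representation for finitely many vectors. The finite net plus Voronoi extension is designed to avoid any compactness argument over uncountable families. What must be checked carefully is the uniform $\ell^1$ control, in $d$, of the coefficient error introduced by $\pi_\varepsilon$. I would first attempt this via the bound $\sum_m|\ip{P_mF}{P_mG}|\le\|F\|\|G\|$ applied to the difference $\pi_\varepsilon\circ\phi-\phi$.
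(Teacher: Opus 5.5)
Your proposal is correct, and it reaches the lemma by a genuinely different route from the paper.

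\textbf{The paper's route.} The paper rebuilds nothing. It takes Naor--Regev's measure on sign pairs over the whole sphere $S^{k-1}$ (their equation~(4), which already hides a duality/compactness argument on an infinite index set). It radially extends to get $H_k=\frac{\pi}{2K_G}f_k$, and only sketches why this is a limiting mixed correlation function, via step functions on small Borel sets. Admissibility and $c_k\to1$ are then imported wholesale from their Lemma~2.1, which analyzes the inverse series of that specific $f_k$.

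\textbf{Your route.} You discretize first. Finite-net duality plus the odd nearest-point extension gives an \emph{actual} finite mixed scheme, in effect a discretization of the same radial construction. Its averaged correlation function is exactly $\frac{\pi}{2K_G}\sum_m\|P_m\Psi\|^2t^m$ with $\Psi=\pi_\varepsilon\circ\phi$, and these coefficients are nonnegative with sum $\frac{\pi}{2K_G}$. Near-linearity then follows from two facts:
\begin{itemize}
\item the computation $\|P_1\phi\|^2=(\mathbb{E}|X|)^2/d\to1$;
\item the dimension-free bound $\sum_m\bigl|\|P_m\Psi\|^2-\|P_m\phi\|^2\bigr|\le2\|\Psi-\phi\|\le2\varepsilon$.
\end{itemize}
Proposition~\ref{prop:inverse-certificate-near-linearity} then replaces Naor--Regev's Lemma~2.1 entirely.

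\textbf{Comparison.} The paper's route is short given the citation. Yours is self-contained: it uses only the finite form of Grothendieck's inequality and needs no analysis of an explicit special function. It also makes rigorous the step-function approximation the paper only sketches. Finally, it gives $\gamma_k\le\pi/(2K_G)$ immediately, since admissibility forces $\gamma_k\le 1/a_1=b_1^{(k)}=\frac{\pi}{2K_G}\|P_1\Psi\|^2\le\frac{\pi}{2K_G}$; you do not need the preprocessing theorem for this.

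\textbf{One small correction.} Symmetrizing by mixing with $x\mapsto-\epsilon(-x)$, $y\mapsto-\delta(-y)$ does not make the individual sign assignments $(\epsilon,\delta)$ odd. It does not even change the correlation function: for $\tilde f(x)=-f(-x)$ and $\tilde g(y)=-g(-y)$ one has $P_m\tilde f=-(-1)^mP_mf$, hence $H_{\tilde f,\tilde g}=H_{f,g}$.

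The step is also unnecessary for the lemma. Oddness of $H_\mu$ already follows from the averaged kernel identity together with oddness of $\pi_\varepsilon$.

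If you want each sampled pair to be odd (the paper later invokes the affine inequality ``for every pair of odd sign functions'', although its proof never uses oddness), do this instead. Apply the duality only on a half-net containing one point from each antipodal pair, then extend each assignment oddly. The identity $\mathbb{E}[\epsilon(x)\delta(y)]=\langle x,y\rangle/K_G$ then holds on all of $N_\varepsilon\times N_\varepsilon$, because $\langle x,y\rangle$ is odd in each argument separately.
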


\begin{proof}
We use the construction in the proof of Theorem~1.1 of \cite{naor2014krivine}. We do not use their theorem statement directly.

For each $k$, their equation~(4) gives a probability measure $\nu_k$ on pairs of sign assignments on $S^{k-1}$ whose averaged kernel is
\[
        \frac{1}{K_G}\ip{x}{y}.
\]
After radial extension, their equations~(5) and~(6) give an odd analytic normalized correlation function
\[
        H_k(t)=\frac{\pi}{2K_G}f_k(t),
\]
where $f_k$ is the function defined in their equation~(6). By partitioning the sphere into finitely many small Borel sets and extending the sampled signs as step functions, one obtains ordinary mixed rounding schemes whose averaged kernels converge uniformly to the kernel above. Hence $H_k$ is a limiting mixed correlation function in the sense of Definition~\ref{def:mixed-rounding}.

Naor and Regev write
\[
        f_k^{-1}(z)=\sum_{n\ge0}b_n(k)z^{2n+1}.
\]
Immediately before their Lemma~2.1, they choose a number $c_k\le1$ such that
\[
        \sum_{n\ge0}|b_n(k)|c_k^{2n+1}\le1,
\]
and Lemma~2.1 proves that $c_k\to1$. Define
\[
        \gamma_k:=\frac{\pi c_k}{2K_G}.
\]
Since
\[
        H_k^{-1}(z)=f_k^{-1}\left(\frac{2K_G}{\pi}z\right),
\]
the absolute inverse majorant of $H_k$ at $\gamma_k$ is exactly
\[
        \sum_{n\ge0}|b_n(k)|c_k^{2n+1}\le1.
\]
Thus $\gamma_k$ is admissible. Since $c_k\le1$ and $c_k\to1$,
\[
        \gamma_k\le\frac{\pi}{2K_G}
        \qquad\text{and}\qquad
        \gamma_k\longrightarrow\frac{\pi}{2K_G}.
\]
The equivalent statement about the resulting upper bounds follows by taking reciprocals.
\end{proof}

\begin{proof}[Proof of Theorem~\ref{thm:lower-bound}]
Let $H_k$ and $\gamma_k$ be supplied by Lemma~\ref{lem:inverse-majorant-optimal}. Sections~\ref{sec:preliminaries} and~\ref{sec:affine-strip} establish the affine coefficient inequality for every pair of odd sign functions in every dimension. It therefore holds for every mixed scheme, because $b_1$ and $b_3$ average linearly. Passing to the coefficientwise limit shows that
\[
        b_3(H_k)\ge2b_1(H_k)-\frac{11}{6}.
\]
This is the point at which preservation under mixing is used.

We also have $0<b_1(H_k)\le1$. Positivity follows because $\gamma_k$ is admissible. For the upper bound, every sign function $f$ satisfies
\[
        \norm{P_1f}
        =\sup_{\norm{u}=1}\left|\E[f(X)\ip{X}{u}]\right|
        \le\E|S|=\nu,
\]
so $|b_1(f,g)|\le(\pi/2)\nu^2=1$. This bound is also preserved under mixtures and coefficientwise limits.

Proposition~\ref{prop:majorant-barrier} now gives
\[
        \gamma_k\le\frac{11}{12}.
\]
Letting $k\to\infty$ and using Lemma~\ref{lem:inverse-majorant-optimal},
\[
        \frac{\pi}{2K_G}\le\frac{11}{12}.
\]
Rearranging proves the result.
\end{proof}

%% file: paper/statement_ai_use.tex
  \paragraph{Statement on AI use.}
  The research reported in this paper made extensive use of large language      
  models: GPT-5.5-Pro and GPT-5.6-Sol (OpenAI) served as the reasoning          
  models, and Claude Opus and Claude Fable~5 (Anthropic) as the coding          
  agents, operating within the AI research system described in the companion    paper~\cite{lisaha2026longhorizon}.               
  AI assistance was also used in preparing parts of this manuscript.            
  Every result stated as a theorem has been independently verified by the       
  authors, and the authors take full responsibility for the contents of this    
  paper. 

%% file: paper/appendix.tex
\newpage
\appendix

\section{Krivine Preprocessing and Mixed Rounding}
\label[appendix]{app:krivine-preprocessing}

This appendix records the preprocessing and projection arguments behind
Krivine rounding schemes.
The argument is the standard rounding construction
~\cite{krivine1977constante, braverman2011grothendieckconstantstrictlysmaller}.

For a real matrix \(A=(a_{ij})\in\mathbb R^{m\times n}\), write
\[
    \operatorname{SDP}(A)
    :=
    \sup_{\|x_i\|=\|y_j\|=1}
    \left|
        \sum_{i=1}^m\sum_{j=1}^n a_{ij}\langle x_i,y_j\rangle
    \right|
\]
and
\[
    \operatorname{OPT}(A)
    :=
    \max_{\varepsilon_i,\delta_j\in\{\pm 1\}}
    \left|
        \sum_{i=1}^m\sum_{j=1}^n a_{ij}\varepsilon_i\delta_j
    \right|.
\]
Thus proving \(\operatorname{SDP}(A)\le K\operatorname{OPT}(A)\) for all
\(A\) proves \(K_G\le K\).

\begin{theorem}[Krivine preprocessing and projection]
\label{thm:appendix-pure-krivine-rounding}
Let \(f,g:\mathbb R^k\to\{-1,1\}\) be odd measurable functions, and let
\(H=H_{f,g}\) be the arcsine-normalized correlation function
\[
    H(t)
    =
    \frac{\pi}{2}
    \mathbb E\left[
        f\left(G_1\right)
        g\left(tG_1+\sqrt{1-t^2}G_2\right)
    \right],
    \qquad -1<t<1,
\]
where \(G_1,G_2\) are independent standard Gaussian vectors in
\(\mathbb R^k\). Suppose that \(H\) has a local inverse at the origin,
\[
    H^{-1}(\zeta)=\sum_{r\ge 1} a_r\zeta^r,
\]
which is holomorphic on a neighborhood of \(\gamma\overline{\mathbb D}\),
and suppose that
\[
    \sum_{r\ge 1}|a_r|\gamma^r\le 1.
\]
Then
\[
    K_G\le \frac{\pi}{2\gamma}.
\]
\end{theorem}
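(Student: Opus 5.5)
We will follow Krivine's two-stage argument: first a tensor preprocessing that turns the unit vectors \(x_i,y_j\) into new unit vectors \(X_i,Y_j\) whose inner products are \(\langle X_i,Y_j\rangle=H^{-1}(\gamma\langle x_i,y_j\rangle)\), and then a Gaussian projection rounding with \(f,g\). Fix \(A\) and unit vectors \(x_i,y_j\) nearly attaining \(\operatorname{SDP}(A)\). Since \(|\langle x_i,y_j\rangle|\le1\) and \(H^{-1}\) is holomorphic on a neighborhood of \(\gamma\overline{\mathbb D}\), the series \(\sum_r a_r\gamma^r\langle x_i,y_j\rangle^r\) converges absolutely.

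\textbf{Preprocessing.} In the Hilbert space \(\mathcal H=\bigoplus_{r\ge1}(\mathbb R^{m+n})^{\otimes r}\) we set
\[
X_i=\bigoplus_{r\ge1}\operatorname{sgn}(a_r)\sqrt{|a_r|\gamma^r}\,x_i^{\otimes r},
\qquad
Y_j=\bigoplus_{r\ge1}\sqrt{|a_r|\gamma^r}\,y_j^{\otimes r}.
\]
Then \(\langle X_i,Y_j\rangle=\sum_r a_r\gamma^r\langle x_i,y_j\rangle^r=H^{-1}(\gamma\langle x_i,y_j\rangle)\), and \(\|X_i\|^2=\|Y_j\|^2=\sum_r|a_r|\gamma^r\le1\). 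To get unit vectors, we append orthogonal padding components: \(X_i\) receives \(\sqrt{1-\|X_i\|^2}\,e\) and \(Y_j\) receives \(\sqrt{1-\|Y_j\|^2}\,e'\), with \(e\perp e'\) and both orthogonal to everything else. This leaves all cross inner products unchanged. Since only finitely many vectors are involved, we then restrict to their finite-dimensional span \(\mathbb R^N\).

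\textbf{Projection rounding.} Let \(G\) be an \(N\times k\) matrix with i.i.d.\ standard Gaussian entries, and set \(\varepsilon_i=f(G^\top X_i)\), \(\delta_j=g(G^\top Y_j)\). For unit vectors with \(\langle X_i,Y_j\rangle=s\in(-1,1)\), the pair \((G^\top X_i,G^\top Y_j)\) consists of standard Gaussian vectors in \(\mathbb R^k\) with coordinatewise correlation \(s\). Hence \(\mathbb E[\varepsilon_i\delta_j]=\tfrac{2}{\pi}H(s)\). The step needing care is the identity \(H(H^{-1}(\zeta))=\zeta\) for \(\zeta=\gamma\langle x_i,y_j\rangle\) real in \([-\gamma,\gamma]\). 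This holds near the origin by definition of the local inverse. It extends to the whole of \(\gamma\overline{\mathbb D}\) by the identity theorem, because both sides are holomorphic there: we have \(|H^{-1}(\zeta)|\le\sum_r|a_r|\gamma^r\le1\), so \(H^{-1}\) takes values in the closed unit disk.

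\textbf{Main obstacle.} The boundary case \(|H^{-1}(\zeta)|=1\) is the delicate point, since \(H\) is defined on the open strip and \(s=\pm1\) lies outside it. We handle it by replacing \(\gamma\) with \(\gamma'<\gamma\). Then \(\sum_r|a_r|\gamma'^r<1\), because \(a_1=1/H'(0)\ne0\), so all values of \(H^{-1}\) on \(\gamma'\overline{\mathbb D}\) lie strictly inside \((-1,1)\) on the real axis, within the strip where \(H\) is holomorphic. At the end we let \(\gamma'\uparrow\gamma\).

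\textbf{Conclusion.} With this in place, \(\mathbb E[\varepsilon_i\delta_j]=\tfrac{2}{\pi}\gamma'\langle x_i,y_j\rangle\), so
\[
\operatorname{OPT}(A)\ \ge\ \mathbb E\sum_{i,j}a_{ij}\varepsilon_i\delta_j\ =\ \tfrac{2\gamma'}{\pi}\sum_{i,j}a_{ij}\langle x_i,y_j\rangle.
\]
If the SDP sum is negative, we replace \(y_j\) by \(-y_j\). Taking the supremum over the vectors and letting \(\gamma'\to\gamma\) yields \(\operatorname{SDP}(A)\le\tfrac{\pi}{2\gamma}\operatorname{OPT}(A)\), i.e.\ \(K_G\le\pi/(2\gamma)\). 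Oddness of \(f\) and \(g\) is used only to make \(H\) odd, which is consistent with \(H^{-1}\) being odd; the argument itself does not otherwise rely on it.
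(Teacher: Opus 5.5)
Your proposal is correct and follows the paper's proof essentially step for step: the same tensor-power preprocessing with orthogonal padding vectors, the same Gaussian projection rounding with $f,g$, the same passage to $\gamma'<\gamma$, and the same flip $y_j\mapsto -y_j$ to make the SDP sum nonnegative. Two details are made explicit in your version but left implicit in the paper's appendix proof: the reason $M_{\gamma'}<1$ (namely $a_1\neq 0$), and the identity-theorem justification of $H(H^{-1}(\zeta))=\zeta$ on $\gamma'\overline{\mathbb D}$, which uses that $H^{-1}$ maps this disk into the open unit disk, where $H$ is holomorphic.
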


\begin{proof}
It is enough to prove the conclusion with any \(0<\gamma'<\gamma\) in
place of \(\gamma\), and then let \(\gamma'\uparrow\gamma\). We therefore
fix such a \(\gamma'\). Put
\[
    M_{\gamma'}:=\sum_{r\ge 1}|a_r|(\gamma')^r < 1.
\]
Let \(A=(a_{ij})\in\mathbb R^{m\times n}\). Choose unit vectors
\(x_1,\ldots,x_m,y_1,\ldots,y_n\) such that
\[
    \sum_{i,j} a_{ij}\langle x_i,y_j\rangle
\]
is arbitrarily close to \(\operatorname{SDP}(A)\); if necessary, replace all
\(y_j\) by \(-y_j\) so that this quantity is nonnegative. We suppress this
arbitrarily small error, since it disappears at the end.

Let
\[
    \mathcal K
    :=
    \left(\bigoplus_{r\ge 1} \mathcal H^{\otimes r}\right)
    \oplus \mathbb R e_L\oplus \mathbb R e_R,
\]
where \(\mathcal H\) is the Hilbert space containing the original vectors,
and where \(e_L,e_R\) are unit vectors orthogonal to each other and to all
tensor summands. Define
\[
    I(x)
    :=
    \bigoplus_{r\ge 1}
        \sqrt{|a_r|}\,(\gamma')^{r/2}x^{\otimes r}
    \oplus \sqrt{1-M_{\gamma'}}\,e_L
    \oplus 0
\]
and
\[
    J(y)
    :=
    \bigoplus_{r\ge 1}
        \operatorname{sgn}(a_r)\sqrt{|a_r|}\,(\gamma')^{r/2}y^{\otimes r}
    \oplus 0
    \oplus \sqrt{1-M_{\gamma'}}\,e_R,
\]
with the convention that the summand corresponding to \(a_r=0\) is zero.
Then \(I(x)\) and \(J(y)\) are unit vectors whenever \(x,y\) are unit
vectors. Moreover,
\[
    \langle I(x),J(y)\rangle
    =
    \sum_{r\ge 1} a_r(\gamma')^r\langle x,y\rangle^r
    =
    H^{-1}\bigl(\gamma'\langle x,y\rangle\bigr).
\]
The first equality uses
\[
    \langle x^{\otimes r},y^{\otimes r}\rangle
    =
    \langle x,y\rangle^r,
\]
and the second equality uses the Taylor expansion of the inverse on
\(\gamma'\mathbb D\).

Set
\[
    u_i:=I(x_i),
    \qquad
    v_j:=J(y_j).
\]
The finitely many vectors \(u_i,v_j\) span a finite-dimensional subspace of
\(\mathcal K\). Identify this subspace with \(\mathbb R^N\). Let
\(\Gamma:\mathbb R^N\to\mathbb R^k\) be a random Gaussian linear map, i.e.
a \(k\times N\) matrix whose entries are independent standard Gaussian
random variables. Define random signs
\[
    \varepsilon_i
    :=
    f\left(\Gamma u_i\right),
    \qquad
    \delta_j
    :=
    g\left(\Gamma v_j\right).
\]
For each pair \((i,j)\), the vectors \(\Gamma u_i\) and \(\Gamma v_j\) are
standard Gaussian vectors in \(\mathbb R^k\) with coordinatewise correlation
\(\langle u_i,v_j\rangle\). Hence, by the definition of \(H\),
\[
    \mathbb E[\varepsilon_i\delta_j]
    =
    \frac{2}{\pi}
    H(\langle u_i,v_j\rangle).
\]
Using the preprocessing identity,
\[
    H(\langle u_i,v_j\rangle)
    =
    H\left(H^{-1}\bigl(\gamma'\langle x_i,y_j\rangle\bigr)\right)
    =
    \gamma'\langle x_i,y_j\rangle.
\]
Therefore
\[
\begin{aligned}
    \mathbb E\left[\sum_{i,j} a_{ij}\varepsilon_i\delta_j\right]
    &=
    \frac{2}{\pi}\sum_{i,j}a_{ij}
        H(\langle u_i,v_j\rangle) \\
    &=
    \frac{2\gamma'}{\pi}
    \sum_{i,j}a_{ij}\langle x_i,y_j\rangle.
\end{aligned}
\]
Since the maximum over deterministic signs is at least the expectation of
the randomized signs, we get
\[
    \operatorname{OPT}(A)
    \ge
    \frac{2\gamma'}{\pi}\operatorname{SDP}(A).
\]
Thus
\[
    \operatorname{SDP}(A)
    \le
    \frac{\pi}{2\gamma'}\operatorname{OPT}(A).
\]
Letting \(\gamma'\uparrow\gamma\) gives
\[
    \operatorname{SDP}(A)
    \le
    \frac{\pi}{2\gamma}\operatorname{OPT}(A).
\]
Since \(A\) was arbitrary, \(K_G\le \pi/(2\gamma)\).
\end{proof}

\begin{corollary}[Classical Krivine hyperplane bound]
\label{cor:appendix-classical-krivine}
Let \(h:\mathbb R\to\{-1,1\}\) be \(h(x)=\operatorname{sgn}(x)\). Then the
hyperplane scheme \((h,h)\) gives
\[
    K_G\le \frac{\pi}{2\log(1+\sqrt 2)}.
\]
\end{corollary}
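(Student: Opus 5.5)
The statement is Corollary~\ref{cor:appendix-classical-krivine}. I will deduce it from Theorem~\ref{thm:appendix-pure-krivine-rounding} applied with $k=1$ and $f=g=\sgn$. The plan has three steps: compute $H$ explicitly, invert it, and find the largest $\gamma$ with inverse majorant at most one.

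\textbf{Computing $H$.} By Sheppard's formula, for standard Gaussians with correlation $t$,
\[
\mathbb E[\sgn(G_1)\sgn(tG_1+\sqrt{1-t^2}G_2)]=\tfrac{2}{\pi}\arcsin t .
\]
Hence $H(t)=\arcsin t$. This is odd, holomorphic near the origin, and satisfies $H'(0)=1\neq0$.

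\textbf{Inverting $H$.} The local inverse is $H^{-1}(\zeta)=\sin\zeta=\sum_{j\ge0}(-1)^j\zeta^{2j+1}/(2j+1)!$. This is entire, so the holomorphy hypothesis of Theorem~\ref{thm:appendix-pure-krivine-rounding} holds on every disk $\gamma\overline{\mathbb D}$.

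\textbf{Finding $\gamma$.} The inverse majorant is
\[
M(s)=\sum_{j\ge0}\frac{s^{2j+1}}{(2j+1)!}=\sinh s .
\]
Take $\gamma=\operatorname{arcsinh}(1)=\log(1+\sqrt2)$. Then $M(\gamma)=\sinh(\log(1+\sqrt2))=\tfrac12\bigl((1+\sqrt2)-(\sqrt2-1)\bigr)=1$. Theorem~\ref{thm:appendix-pure-krivine-rounding} now gives $K_G\le \pi/(2\log(1+\sqrt2))$.

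The only non-routine input is the arcsine identity, and that is classical. To include a derivation, one route is the Gaussian-measure argument: $\Pr[\sgn G_1\neq\sgn Y]=\theta/\pi$ with $\cos\theta=t$, so $\mathbb E[\sgn\,\sgn]=1-2\theta/\pi=\tfrac2\pi\arcsin t$. An alternative is Mehler's identity, using the Hermite coefficients of $\sgn$, which also matches the series $\sum_m\langle P_m f,P_m g\rangle t^m$ from Section~\ref{sec:preliminaries}. Apart from this identity, I expect no obstacle.
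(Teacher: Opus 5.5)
Your proposal is correct and is essentially the paper's proof. Like the paper, you identify $H_{h,h}=\arcsin$ via the arcsine (Grothendieck/Sheppard) identity, invert it to $\sin$, observe that the inverse majorant is $\sinh$, and apply Theorem~\ref{thm:appendix-pure-krivine-rounding} at $\gamma=\operatorname{arsinh}(1)=\log(1+\sqrt2)$; your note that $\sin$ is entire (so the holomorphy hypothesis on $\gamma\overline{\mathbb D}$ holds) and your derivation of the arcsine identity fill in details the paper leaves implicit.
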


\begin{proof}
For the hyperplane scheme, Grothendieck's identity gives
\[
    \mathbb E[\operatorname{sgn}(X)\operatorname{sgn}(Y)]
    =
    \frac{2}{\pi}\arcsin(t)
\]
whenever \(X,Y\) are standard Gaussians with correlation \(t\). In our
normalization,
\[
    H_{h,h}(t)=\arcsin(t).
\]
Thus
\[
    H_{h,h}^{-1}(\zeta)=\sin \zeta
    =
    \sum_{q\ge 0}\frac{(-1)^q}{(2q+1)!}\zeta^{2q+1}.
\]
Let
\[
    \rho_*:=\operatorname{arsinh}(1)=\log(1+\sqrt 2).
\]
Then
\[
    \sum_{q\ge 0}
    \frac{\rho_*^{2q+1}}{(2q+1)!}
    =
    \sinh(\rho_*)
    =
    1.
\]
Theorem~\ref{thm:appendix-pure-krivine-rounding} therefore applies with
\(\gamma=\rho_*\), giving
\[
    K_G\le \frac{\pi}{2\rho_*}
    =
    \frac{\pi}{2\log(1+\sqrt 2)}.
\]
\end{proof}

\begin{theorem}[Mixed Krivine preprocessing and projection]
\label{thm:appendix-mixed-krivine-rounding}
Let \(L\in\mathbb N\). For each \(\ell\in\{1,\ldots,L\}\), let
\(f_\ell,g_\ell:\mathbb R^{k_\ell}\to\{-1,1\}\) be odd measurable functions,
and let \(H_\ell\) be the corresponding arcsine-normalized correlation
function:
\[
    H_\ell(t)
    =
    \frac{\pi}{2}
    \mathbb E\left[
        f_\ell\left(G_1\right)
        g_\ell\left(tG_1+\sqrt{1-t^2}G_2\right)
    \right].
\]
Let \(\lambda_1,\ldots,\lambda_L\ge 0\) satisfy
\(\sum_{\ell=1}^L\lambda_\ell=1\), and define the mixed function
\[
    H_\lambda(t):=\sum_{\ell=1}^L\lambda_\ell H_\ell(t).
\]
Suppose that \(H_\lambda\) has a local inverse at the origin,
\[
    H_\lambda^{-1}(\zeta)=\sum_{r\ge 1} a_r\zeta^r,
\]
which is holomorphic on a neighborhood of \(\gamma\overline{\mathbb D}\),
and suppose that
\[
    M_\lambda(\gamma)
    :=
    \sum_{r\ge 1}|a_r|\gamma^r
    \le 1.
\]
Then
\[
    K_G\le \frac{\pi}{2\gamma}.
\]
\end{theorem}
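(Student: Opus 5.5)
The plan is to reduce the mixed statement to the pure one, Theorem~\ref{thm:appendix-pure-krivine-rounding}, by reusing its preprocessing step verbatim and replacing only the projection step by a randomized choice of which scheme to use. As there, it suffices to argue with a fixed \(0<\gamma'<\gamma\), so that \(M_\lambda(\gamma')<1\), and let \(\gamma'\uparrow\gamma\) at the end. Given \(A\) and near-optimal unit vectors \(x_i,y_j\), we build the same tensor embeddings \(I,J\) using the coefficients \(a_r\) of \(H_\lambda^{-1}\), padded with the slack vectors \(\sqrt{1-M_\lambda(\gamma')}\,e_L\) and \(\sqrt{1-M_\lambda(\gamma')}\,e_R\). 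This gives unit vectors \(u_i=I(x_i)\) and \(v_j=J(y_j)\) with
\[
    \langle u_i,v_j\rangle=H_\lambda^{-1}\bigl(\gamma'\langle x_i,y_j\rangle\bigr).
\]
This identity uses only absolute convergence of the inverse series on \(\gamma'\overline{\mathbb D}\), which is part of the hypothesis.

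For rounding, we first sample an index \(\ell\) with probability \(\lambda_\ell\), independently of everything else. Given \(\ell\), we draw a \(k_\ell\times N\) Gaussian matrix \(\Gamma\) and set \(\varepsilon_i=f_\ell(\Gamma u_i)\) and \(\delta_j=g_\ell(\Gamma v_j)\). The key point is that one shared pair \((f_\ell,g_\ell)\) is used for all vectors, so the signs are genuine \(\pm1\) assignments. Conditionally on \(\ell\), the argument of the pure theorem gives
\[
    \mathbb E[\varepsilon_i\delta_j\mid\ell]=\tfrac{2}{\pi}H_\ell(\langle u_i,v_j\rangle).
\]
Averaging over \(\ell\) then yields \(\tfrac{2}{\pi}H_\lambda(\langle u_i,v_j\rangle)\).

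It remains to show \(H_\lambda(H_\lambda^{-1}(\gamma' s))=\gamma' s\) for real \(s=\langle x_i,y_j\rangle\in[-1,1]\). This is the one step needing care, since the inverse is a priori only local. I would argue as in the proof of Theorem~\ref{thm:upper_bound_limiting_scheme}. Since \(|H_\lambda^{-1}(\zeta)|\le M_\lambda(\gamma')<1\) on \(\gamma'\overline{\mathbb D}\), the composition \(H_\lambda\circ H_\lambda^{-1}\) is defined and holomorphic there. Here \(H_\lambda\) is holomorphic on the strip \(|\operatorname{Re} z|<1\), being a finite convex combination of functions with that property by Lemma~\ref{lemma:correlation-profile-analytic}. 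The composition equals the identity near \(0\), hence on the whole disk by the identity theorem.

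Summing with weights \(a_{ij}\) gives
\[
    \mathbb E\sum_{i,j}a_{ij}\varepsilon_i\delta_j=\tfrac{2\gamma'}{\pi}\sum_{i,j}a_{ij}\langle x_i,y_j\rangle.
\]
So some deterministic choice of \(\ell,\Gamma\) achieves at least this value, giving \(\operatorname{OPT}(A)\ge\tfrac{2\gamma'}{\pi}\operatorname{SDP}(A)\). Letting \(\gamma'\uparrow\gamma\) finishes the proof. The main obstacle is the analytic-continuation step in the previous paragraph; everything else is a direct copy of the pure case plus linearity of expectation over \(\ell\).
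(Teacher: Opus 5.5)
Your proposal is correct and follows the paper's proof essentially step by step: fix $\gamma'<\gamma$, build the same tensor embedding from the coefficients of $H_\lambda^{-1}$ with slack vectors, and sample a single global index $\ell$ with probability $\lambda_\ell$ shared by all vectors. You then condition on $\ell$ and average to get $\frac{2}{\pi}H_\lambda(\langle u_i,v_j\rangle)$. Your additional identity-theorem argument that $H_\lambda(H_\lambda^{-1}(\gamma' s))=\gamma' s$ on the whole disk, using $|H_\lambda^{-1}|\le M_\lambda(\gamma')<1$ and holomorphy of $H_\lambda$ on the strip, fills in a step the paper uses without comment.
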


\begin{proof}
The preprocessing is identical to the proof of
Theorem~\ref{thm:appendix-pure-krivine-rounding}, but using the inverse
coefficients of \(H_\lambda^{-1}\). As before, first fix
\(0<\gamma'<\gamma\), and set
\[
    M_{\lambda,\gamma'}
    :=
    \sum_{r\ge 1}|a_r|(\gamma')^r
    <1.
\]
For unit vectors \(x,y\), define
\[
    I(x)
    :=
    \bigoplus_{r\ge 1}
        \sqrt{|a_r|}\,(\gamma')^{r/2}x^{\otimes r}
    \oplus \sqrt{1-M_{\lambda,\gamma'}}\,e_L
    \oplus 0
\]
and
\[
    J(y)
    :=
    \bigoplus_{r\ge 1}
        \operatorname{sgn}(a_r)\sqrt{|a_r|}\,(\gamma')^{r/2}y^{\otimes r}
    \oplus 0
    \oplus \sqrt{1-M_{\lambda,\gamma'}}\,e_R.
\]
Then \(I(x)\) and \(J(y)\) are unit vectors and
\[
    \langle I(x),J(y)\rangle
    =
    H_\lambda^{-1}\bigl(\gamma'\langle x,y\rangle\bigr).
\]

Now let \(A=(a_{ij})\) be arbitrary, and choose unit vectors
\(x_i,y_j\) that nearly attain \(\operatorname{SDP}(A)\), oriented so that
\(\sum_{i,j}a_{ij}\langle x_i,y_j\rangle\ge 0\). Put
\[
    u_i:=I(x_i),
    \qquad
    v_j:=J(y_j).
\]
Identify the span of the finitely many \(u_i,v_j\) with \(\mathbb R^N\).
For each \(\ell\), independently sample a Gaussian linear map
\[
    \Gamma_\ell:\mathbb R^N\to\mathbb R^{k_\ell}.
\]
Also sample a single global index \(\Lambda\in\{1,\ldots,L\}\), independent
of the Gaussian maps, with
\[
    \mathbb P[\Lambda=\ell]=\lambda_\ell.
\]
The word ``global'' is important: the same index \(\Lambda\) is used for
all \(i,j\). Define
\[
    \varepsilon_i
    :=
    f_\Lambda\left(\Gamma_\Lambda u_i\right),
    \qquad
    \delta_j
    :=
    g_\Lambda\left(\Gamma_\Lambda v_j\right).
\]
Conditioning on \(\Lambda=\ell\), the same rotation-invariance calculation
as in the pure case gives
\[
    \mathbb E[\varepsilon_i\delta_j\mid \Lambda=\ell]
    =
    \frac{2}{\pi}H_\ell(\langle u_i,v_j\rangle).
\]
Averaging over \(\Lambda\),
\[
    \mathbb E[\varepsilon_i\delta_j]
    =
    \frac{2}{\pi}
    \sum_{\ell=1}^L\lambda_\ell H_\ell(\langle u_i,v_j\rangle)
    =
    \frac{2}{\pi}H_\lambda(\langle u_i,v_j\rangle).
\]
Since
\[
    \langle u_i,v_j\rangle
    =
    H_\lambda^{-1}\bigl(\gamma'\langle x_i,y_j\rangle\bigr),
\]
we obtain
\[
    \mathbb E[\varepsilon_i\delta_j]
    =
    \frac{2\gamma'}{\pi}\langle x_i,y_j\rangle.
\]
Therefore
\[
\begin{aligned}
    \operatorname{OPT}(A)
    &\ge
    \mathbb E\left[
        \sum_{i,j}a_{ij}\varepsilon_i\delta_j
    \right] \\
    &=
    \frac{2\gamma'}{\pi}
    \sum_{i,j}a_{ij}\langle x_i,y_j\rangle.
\end{aligned}
\]
Taking the supremum over the original vectors gives
\[
    \operatorname{OPT}(A)
    \ge
    \frac{2\gamma'}{\pi}\operatorname{SDP}(A),
\]
and hence
\[
    \operatorname{SDP}(A)
    \le
    \frac{\pi}{2\gamma'}\operatorname{OPT}(A).
\]
Letting \(\gamma'\uparrow\gamma\) proves
\[
    \operatorname{SDP}(A)
    \le
    \frac{\pi}{2\gamma}\operatorname{OPT}(A).
\]
Since \(A\) was arbitrary, \(K_G\le \pi/(2\gamma)\).
\end{proof}

\begin{remark}
Theorem~\ref{thm:appendix-pure-krivine-rounding} is the special case
\(L=1\) of Theorem~\ref{thm:appendix-mixed-krivine-rounding}. We separated
the two statements because the pure case is the usual Krivine scheme, while
the mixed case emphasizes the randomized choice of a common scheme
\(\Lambda\) before projection.
\end{remark}

\section{General Coefficient Formulas}
\label{app:coefficient-formulas}

This appendix records two standard calculations used in
Section~\ref{sec:coefficients}. We first prove the closed form for the
one-dimensional threshold coefficients. We then give the coefficient
formula for a more general limiting scheme built from several Hermite
correlation coordinates. The latter is not needed for the main theorem,
but it explains how the calculation extends to other odd degrees.

\begin{lemma}[One-dimensional threshold coefficients]
\label{lem:qk}
Let \(Z\) be a standard Gaussian and define
\(
q_k(u)=\E[\sgn(Z+u)\He_k(Z)]
\).
Then
\[
    q_0(u)
    =
    \erf\left(\frac{u}{\sqrt{2}}\right),
\]
and, for \(k\geq 1\),
\[
    q_k(u)
    =
    \frac{2\varphi(u)}{\sqrt{k}}\He_{k-1}(-u),
\]
where \(\varphi(u)=(2\pi)^{-1/2}e^{-u^2/2}\).
\end{lemma}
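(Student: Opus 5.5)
We prove the two formulas separately. In both we use the Gaussian integration-by-parts identity \(\E[F(Z)\He_k(Z)]=\E[F'(Z)\He_{k-1}(Z)]/\sqrt{k}\), applied in the distributional sense to the step function \(F(z)=\sgn(z+u)\). Here \(\He_k\) denotes the orthonormal Hermite polynomials used in Section~\ref{sec:coefficients}.

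For \(q_0\) there is no integration by parts; we compute directly:
\[
q_0(u)=\mathbb P(Z>-u)-\mathbb P(Z<-u)=2\Phi(u)-1=\erf(u/\sqrt2).
\]

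For \(k\ge1\) we start from the Rodrigues-type identity for the orthonormal Hermite polynomials,
\[
\He_k(z)\varphi(z)=-\frac{1}{\sqrt k}\,\frac{d}{dz}\bigl(\He_{k-1}(z)\varphi(z)\bigr).
\]
This follows from \(\frac{d}{dz}(\mathrm{He}_{k-1}\varphi)=-\mathrm{He}_k\varphi\) for the monic probabilists' polynomials, together with the normalization \(\He_k=\mathrm{He}_k/\sqrt{k!}\), since \(\sqrt{k!}/\sqrt{(k-1)!}=\sqrt k\). Then we split the integral at \(z=-u\):
\[
q_k(u)=\int_{-u}^\infty \He_k\varphi\,dz-\int_{-\infty}^{-u}\He_k\varphi\,dz .
\]
Each piece is an exact derivative, so it evaluates to a boundary term. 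The contributions at \(\pm\infty\) vanish by Gaussian decay. The first integral equals \(\frac{1}{\sqrt k}\He_{k-1}(-u)\varphi(-u)\), and the second equals \(-\frac{1}{\sqrt k}\He_{k-1}(-u)\varphi(-u)\). Subtracting gives
\[
q_k(u)=\frac{2}{\sqrt k}\varphi(u)\He_{k-1}(-u),
\]
where we used \(\varphi(-u)=\varphi(u)\).

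The only step needing care is the normalization constant in the derivative identity. We would confirm it on the case \(k=1\): there \(\He_0=1\), and the formula gives \(q_1(u)=\E[\sgn(Z+u)Z]=2\varphi(u)\), which matches the direct computation \(\E[|Z|]=2\varphi(0)\) at \(u=0\). No other step poses an obstacle.
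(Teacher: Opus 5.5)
Your proof is correct and is essentially the paper's argument: the paper also gets $q_0$ directly from the Gaussian distribution function, and for $k\ge1$ integrates the identity $\frac{d}{dz}\bigl(\varphi\He_{k-1}\bigr)=-\sqrt{k}\,\varphi\He_k$ over a half-line. The only cosmetic difference is that the paper first uses orthogonality of $\He_k$ to constants to write $q_k(u)=2\int_{-u}^{\infty}\He_k\varphi\,dz$, whereas you evaluate both half-line pieces separately; the distributional integration-by-parts identity you announce at the start is never actually used.
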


\begin{proof}
The formula for \(q_0\) follows directly from the Gaussian distribution
function. For \(k\geq 1\), orthogonality to constants gives
\[
    q_k(u)
    =
    2\int_{-u}^{\infty}\He_k(z)\varphi(z)\,dz.
\]
The orthonormal probabilists' Hermite polynomials satisfy
\[
    \frac{d}{dz}
    \bigl(\varphi(z)\He_{k-1}(z)\bigr)
    =
    -\sqrt{k}\,\varphi(z)\He_k(z).
\]
Integrating from \(-u\) to \(\infty\) proves the claim.
\end{proof}

We next record the general formula. Let \(D\) be a finite set of odd
integers. Fix weights \((s_d)_{d\in D}\), set
\[
    S
    =
    \left(\sum_{d\in D}s_d^2\right)^{1/2},
    \qquad
    \alpha_d
    =
    \frac{s_d}{S},
\]
and let \(\sigma_d\in\{-1,1\}\). Consider the limiting Krivine scheme
with partitions
\[
    f_D(z,x,(r_d)_{d\in D})
    =
    \sgn\!\left(
        z+\eta\He_3(x)+\sum_{d\in D}s_dr_d
    \right),
\]
\[
    g_D(z,x,(r_d)_{d\in D})
    =
    \sgn\!\left(
        z-\eta\He_3(x)+\sum_{d\in D}\sigma_ds_dr_d
    \right),
\]
whose coordinate correlation maps are \(t,t\), and \(t^d\) for
\(d\in D\).

For \(a,b,k\geq 0\), define
\begin{equation}
    C_{a,b,k}
    =
    \E\!\left[
        \He_a(Z)\He_b(X)
        q_k\!\left(
            \frac{Z+\eta\He_3(X)}{S}
        \right)
    \right],
    \label{eq:general-Cabk}
\end{equation}
where \(Z,X\) are independent standard Gaussians. For a multi-index
\(\nu=(\nu_d)_{d\in D}\), write
\(
|\nu|=\sum_d\nu_d
\),
\(
\nu!=\prod_d\nu_d!
\), and
\(
\alpha^\nu=\prod_d\alpha_d^{\nu_d}
\).

\begin{proposition}[General degree formula]
\label{prop:general-degree-formula}
Let
\(
H_D(t)=\sum_{m\geq 1}b_m t^m
\)
be the correlation function of the scheme above. Then
\begin{equation}
    b_m
    =
    \frac{\pi}{2}
    \sum_{\substack{a,b\geq 0,\ \nu\in\mathbb{N}_0^D\\
          a+b+\sum_{d\in D}d\nu_d=m}}
    (-1)^b
    \left(\prod_{d\in D}\sigma_d^{\nu_d}\right)
    \frac{k!}{\nu!}
    \left(\prod_{d\in D}\alpha_d^{2\nu_d}\right)
    C_{a,b,k}^2,
    \qquad
    k=|\nu|.
    \label{eq:general-bm-appendix}
\end{equation}
Only odd \(m\) occur.
\end{proposition}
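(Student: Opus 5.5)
The plan is to compute the full multivariate Hermite expansion of \(f_D\) and read off that of \(g_D\) by a symmetry. Then apply the Hermite covariance identity coordinate by coordinate, exactly as in the proof of Theorem~\ref{thm:coefficient-formula}.

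\emph{Expansion of \(f_D\).} Write \(r=(r_d)_{d\in D}\) and \(R:=\sum_{d\in D}\alpha_d r_d\). Since \(\sum_d\alpha_d^2=1\), \(R\) is a standard Gaussian, and
\[
f_D(z,x,r)=\sgn\bigl(R+u(z,x)\bigr),\qquad u(z,x):=\frac{z+\eta\He_3(x)}{S}.
\]
Fix \((z,x)\). As a function of \(R\) alone, \(\sgn(R+u)\) has the one-dimensional expansion \(\sum_{k\ge0}q_k(u)\He_k(R)\), with \(q_k\) as in Lemma~\ref{lem:qk}. The key algebraic input is the addition formula for orthonormal Hermite polynomials along a unit direction:
\[
\He_k\Bigl(\sum_d\alpha_d r_d\Bigr)=\sum_{|\nu|=k}\sqrt{\tfrac{k!}{\nu!}}\,\alpha^\nu\,\He_\nu(r),\qquad \sum_d\alpha_d^2=1 .
\]
I would prove it by comparing coefficients of \(\lambda^k\) in the generating function. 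The identity \(e^{\lambda y-\lambda^2/2}=\sum_k\lambda^k\He_k(y)/\sqrt{k!}\) factorizes as \(\prod_d e^{\lambda\alpha_d r_d-\lambda^2\alpha_d^2/2}\), which uses \(\sum\alpha_d^2=1\). This step is the main point needing care, since the normalizations \(\sqrt{k!/\nu!}\) must come out right.

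Substituting and then projecting onto \(\He_a(z)\He_b(x)\) gives the coefficient of \(\He_a(z)\He_b(x)\He_\nu(r)\) in \(f_D\):
\[
\sqrt{k!/\nu!}\,\alpha^\nu\,C_{a,b,k},\qquad k=|\nu|,
\]
with \(C_{a,b,k}\) as in \eqref{eq:general-Cabk}. The interchange of the \(R\)-expansion with the \((z,x)\)-projection is justified in \(L^2\) by Fubini, since \(f_D\) is bounded.

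\emph{Expansion of \(g_D\).} Observe that \(g_D(z,x,r)=f_D(z,-x,(\sigma_dr_d)_d)\). Using \(\He_b(-x)=(-1)^b\He_b(x)\) and \(\He_{\nu_d}(\sigma_dr_d)=\sigma_d^{\nu_d}\He_{\nu_d}(r_d)\), the corresponding coefficient of \(g_D\) is
\[
(-1)^b\prod_d\sigma_d^{\nu_d}\,\sqrt{k!/\nu!}\,\alpha^\nu C_{a,b,k}.
\]

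\emph{Correlation and coefficient extraction.} The coordinates are independent across blocks, and the correlation maps are \(t\) for \(z\) and \(x\), and \(t^d\) for \(r_d\). Orthogonality and the Hermite covariance identity therefore give
\[
H_D(t)=\frac\pi2\sum_{a,b,\nu}(-1)^b\Bigl(\prod_d\sigma_d^{\nu_d}\Bigr)\frac{k!}{\nu!}\alpha^{2\nu}C_{a,b,k}^2\,t^{a+b+\sum_d d\nu_d}.
\]
For \(|t|<1\) this series converges absolutely by Cauchy--Schwarz, as in the proof of Theorem~\ref{thm:coefficient-formula}, because \(\|f_D\|_2=\|g_D\|_2=1\). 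Grouping the terms by the exponent \(m\) yields \eqref{eq:general-bm-appendix}.

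\emph{Odd degrees.} Since \(f_D\) is odd, its coefficient vanishes unless \(a+b+|\nu|\) is odd. Each \(d\) is odd, so \(\sum_d d\nu_d\equiv|\nu|\pmod 2\), and hence \(m\equiv a+b+|\nu|\) is odd.
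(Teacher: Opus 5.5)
Your proposal is correct and follows essentially the same route as the paper. You condition on $(z,x)$ to expand $\operatorname{sgn}(R+u)$ in Hermite polynomials of the unit direction $R=\sum_d\alpha_d r_d$. You then apply the multivariate addition formula to get the coefficients $\sqrt{k!/\nu!}\,\alpha^\nu C_{a,b,k}$, pass to $g_D$ via the factor $(-1)^b\prod_d\sigma_d^{\nu_d}$, pair the terms with $t^{a+b+\sum_d d\nu_d}$, and finish with the same parity argument. The extra details you give simply fill in steps the paper states without proof: the generating-function proof of the addition formula, the symmetry $g_D(z,x,r)=f_D(z,-x,(\sigma_d r_d)_d)$, and absolute convergence for $|t|<1$.
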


\begin{proof}
Set
\[
    T(Z,X)=Z+\eta\He_3(X),
    \qquad
    R_\alpha=\sum_{d\in D}\alpha_dR_d.
\]
Then \(f_D=\sgn(T+SR_\alpha)\). Conditioning on \(Z,X\), the coefficient
of \(\He_k(R_\alpha)\) is
\(
q_k(T(Z,X)/S)
\). Taking the inner product with
\(\He_a(Z)\He_b(X)\) gives \(C_{a,b,k}\).

The multivariate Hermite addition formula is
\[
    \He_k\!\left(\sum_{d\in D}\alpha_dR_d\right)
    =
    \sum_{\substack{\nu\in\mathbb{N}_0^D\\|\nu|=k}}
    \sqrt{\frac{k!}{\nu!}}\,
    \alpha^\nu
    \prod_{d\in D}\He_{\nu_d}(R_d).
\]
Hence the coefficient of
\(
\He_a(Z)\He_b(X)\prod_d\He_{\nu_d}(R_d)
\)
in \(f_D\) is
\[
    C_{a,b,k}
    \sqrt{\frac{k!}{\nu!}}\,
    \alpha^\nu.
\]
The corresponding coefficient of \(g_D\) differs by the factor
\(
(-1)^b\prod_d\sigma_d^{\nu_d}
\).

Under the limiting scheme, the matching Hermite term contributes
\[
    t^a t^b\prod_{d\in D}(t^d)^{\nu_d}
    =
    t^{a+b+\sum_d d\nu_d}.
\]
Multiplying the two Hermite coefficients and collecting the coefficient
of \(t^m\) gives \eqref{eq:general-bm-appendix}. Finally, both partitions
are odd, so their Hermite coefficients vanish unless
\(a+b+|\nu|\) is odd. Since every \(d\in D\) is odd, this has the same
parity as \(a+b+\sum_d d\nu_d\).
\end{proof}

\section{Certification Details}
\label{app:certification-details}

This appendix proves Proposition~\ref{prop:certified-numerical-input}.  The
first computation encloses the coefficient head through degree $251$.  The
second bounds the single boundary norm
$\|D^3H\|_{L^2(\mathbb T)}$ used in
Lemma~\ref{lem:third-order-tail}.

\subsection{The coefficient head}
\label{app:certification-coefficient-head}

Let $W,X$ be independent standard Gaussians and write
\[
  f(w,x)=\operatorname{sgn}\bigl(w+\vartheta\operatorname{He}_3(x)\bigr).
\]
For $a,b\ge0$, define
\[
  A_{a,b}
  :=\mathbb E\bigl[f(W,X)\operatorname{He}_a(W)
  \operatorname{He}_b(X)\bigr].
\]
As shown in Section~4, if
\[
  \rho(t)=\frac{t-s_3^2t^3+s_5^2t^5}{1+s_3^2+s_5^2},
\]
then
\begin{equation}
  H(t)=\frac{\pi}{2}\sum_{a,b\ge0}A_{a,b}^2\rho(t)^a(-t)^b,
  \label{eq:appendix-correlation-expansion}
\end{equation}
and hence
\begin{equation}
  b_m=\frac{\pi}{2}
  \sum_{\substack{a,b\ge0\\a+b\le m}}
  (-1)^bA_{a,b}^2[t^{m-b}]\rho(t)^a.
  \label{eq:appendix-coefficient-formula}
\end{equation}

The coefficients $A_{a,b}$ reduce to one-dimensional Gaussian integrals.  If
$Z$ is standard Gaussian and
\[
  q_a(u):=\mathbb E[\operatorname{sgn}(Z+u)\operatorname{He}_a(Z)],
\]
then
\[
  A_{a,b}
  =\mathbb E\bigl[\operatorname{He}_b(X)
  q_a(\vartheta\operatorname{He}_3(X))\bigr],
\]
where
\[
  q_0(u)=\operatorname{erf}\!\left(\frac{u}{\sqrt2}\right),
  \qquad
  q_a(u)=\frac{2\varphi(u)}{\sqrt a}
  \operatorname{He}_{a-1}(-u)
  \quad(a\ge1).
\]
Each integral is enclosed by outward-rounded adaptive quadrature on a finite
interval, together with a closed Gaussian-tail estimate.  Finite coefficient
extraction in \eqref{eq:appendix-coefficient-formula} then gives
\begin{equation}
  b_1\ge0.881573822049,
  \qquad
  \sum_{\substack{3\le m\le251\\m\ \mathrm{odd}}}|b_m|
  \le1.1328860\cdot10^{-5}.
  \label{eq:appendix-head-output}
\end{equation}

\subsection{A boundary representation of $D^3H$}
\label{app:certification-boundary-representation}

For real $|r|<1$, let
\[
  K_r(u,v)
  :=\frac{1}{2\pi\sqrt{1-r^2}}
  \exp\!\left(-\frac{u^2-2ruv+v^2}{2(1-r^2)}\right)
\]
be the centered bivariate Gaussian density with correlation $r$.  Let
$C_r(x,y)$ denote the conditional sign-correlation kernel
\[
  C_r(x,y)
  :=\frac{\pi}{2}\,
  \mathbb E\!\left[
  \operatorname{sgn}(W+\vartheta\operatorname{He}_3(x))
  \operatorname{sgn}(W'+\vartheta\operatorname{He}_3(y))
  \right],
\]
where $(W,W')$ is a standard Gaussian pair with correlation $r$.
The two derivative identities needed below follow directly from the Hermite
expansion of this kernel.

\begin{lemma}[Differentiating the threshold-correlation kernel]
\label{lem:appendix-kernel-differentiation}
For real $|r|<1$,
\begin{align}
  \partial_rC_r(x,y)
  &=2\pi K_r\bigl(\vartheta\operatorname{He}_3(x),
                   \vartheta\operatorname{He}_3(y)\bigr),
  \label{eq:appendix-kernel-r}\\
  \partial_x\partial_yC_r(x,y)
  &=2\pi\vartheta^2\operatorname{He}_3'(x)
  \operatorname{He}_3'(y)
  K_r\bigl(\vartheta\operatorname{He}_3(x),
           \vartheta\operatorname{He}_3(y)\bigr).
  \label{eq:appendix-kernel-xy}
\end{align}
\end{lemma}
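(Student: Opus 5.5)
The plan is to prove both identities by differentiating the Hermite expansion of the kernel term by term. For real $|r|<1$, write $U=\vartheta\He_3(x)$ and $U'=\vartheta\He_3(y)$. By Lemma~\ref{lem:qk} (with $q_a$ as in \eqref{eq:threshold-hermite-coefficients}), the function $w\mapsto\sgn(w+U)$ has Hermite coefficients $q_a(U)$. Mehler's identity then gives
\[
  C_r(x,y)=\frac{\pi}{2}\sum_{a\ge0} q_a(U)\,q_a(U')\,r^a .
\]
Because $q_a(U)^2\le$ its $L^2$ budget and $\sum_a q_a(U)^2=1$, this series converges absolutely and locally uniformly for $|r|<1$. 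So it is a power series in $r$ with coefficients that are smooth in $(x,y)$, and termwise differentiation in $r$ is legitimate.

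\textbf{Identity \eqref{eq:appendix-kernel-r}.} Differentiating in $r$ gives $\frac{\pi}{2}\sum_{a\ge1}a\,q_a(U)q_a(U')r^{a-1}$. I then substitute the closed form $q_a(u)=\frac{2\varphi(u)}{\sqrt a}\He_{a-1}(-u)$ for $a\ge1$. The product becomes $a\,q_aq_a'=4\varphi(U)\varphi(U')\He_{a-1}(-U)\He_{a-1}(-U')$. Reindexing with $n=a-1$, the sum is
\[
  2\pi\,\varphi(U)\varphi(U')\sum_{n\ge0}\He_n(-U)\He_n(-U')\,r^n .
\]
Mehler's formula for the bivariate density says $\sum_n \He_n(u)\He_n(v)r^n\varphi(u)\varphi(v)=K_r(u,v)$ when $|r|<1$. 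Since $K_r$ is even under $(u,v)\mapsto(-u,-v)$, the sign flips drop out, and I obtain $2\pi K_r(U,U')$. This identity is also the classical Plackett/Price identity: $\partial_r\E[F(W)G(W')]=\E[F'(W)G'(W')]$, and here $F'=2\delta_{-U}$ and $G'=2\delta_{-U'}$. That gives an independent check on the constant $\frac{\pi}{2}\cdot4=2\pi$.

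\textbf{Identity \eqref{eq:appendix-kernel-xy}.} I differentiate in $x$ and $y$ instead. From the first line of Lemma~\ref{lem:qk} and the relation $\frac{d}{du}\E[\sgn(Z+u)\He_a(Z)]$, differentiating $\sgn(w+U)$ in $U$ gives $2\delta(w+U)$. Hence
\[
  \partial_x\,\E[\sgn(W+U)\sgn(W'+U')]=2\vartheta\He_3'(x)\,\E[\delta(W+U)\sgn(W'+U')],
\]
and doing the same in $y$ gives
\[
  4\vartheta^2\He_3'(x)\He_3'(y)\,K_r(-U,-U')=4\vartheta^2\He_3'(x)\He_3'(y)\,K_r(U,U').
\]
Multiplying by $\pi/2$ yields \eqref{eq:appendix-kernel-xy}.

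To stay rigorous and avoid distributions, I would run this step through the series as well. Using $q_a'(u)=2\varphi(u)\He_a(-u)\cdot(\pm)$, which follows from Lemma~\ref{lem:qk} together with $\frac{d}{du}\int_{-u}^\infty\He_a\varphi=\He_a(-u)\varphi(u)$, one gets $\frac{d}{du}q_a(u)=2\He_a(-u)\varphi(u)$. Then $\partial_x\partial_yC_r=\frac{\pi}{2}\vartheta^2\He_3'(x)\He_3'(y)\sum_a 4\varphi\varphi'\He_a(-U)\He_a(-U')r^a$, and Mehler's formula finishes exactly as before.

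The only real obstacle is justifying termwise differentiation. For $|r|\le r_0<1$, Cramér's bound $|\He_n(u)|\varphi(u)^{1/2}\lesssim1$ makes the differentiated series dominated by $\sum (n+1)r_0^n$, uniformly in $(x,y)$. That justifies both interchanges.
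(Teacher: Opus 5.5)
Your proposal is correct and is essentially the paper's proof: expand $C_r$ by the Hermite covariance identity, differentiate termwise, substitute $q_a(u)=\frac{2\varphi(u)}{\sqrt a}\operatorname{He}_{a-1}(-u)$ for the $r$-derivative and $q_a'(u)=2\varphi(u)\operatorname{He}_a(-u)$ for the $x,y$-derivative, and resum with Mehler's formula. Your Cram\'er-bound domination, which also justifies the termwise $x,y$ differentiation that the paper leaves implicit, and your Price-theorem check of the constant $2\pi$ are useful additions rather than a different route.
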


\begin{proof}
Put
\[
  u=\vartheta\operatorname{He}_3(x),
  \qquad
  v=\vartheta\operatorname{He}_3(y).
\]
The Hermite covariance identity gives
\begin{equation}
  C_r(x,y)=\frac{\pi}{2}\sum_{a\ge0}r^a q_a(u)q_a(v),
  \label{eq:appendix-Cr-Hermite}
\end{equation}
where $q_a$ is the one-dimensional threshold coefficient defined above.  The
series and its $r$-derivative converge locally uniformly for $|r|<1$.  For
$a\ge1$,
\[
  q_a(u)=\frac{2\varphi(u)}{\sqrt a}\operatorname{He}_{a-1}(-u).
\]
Consequently,
\begin{align*}
  \partial_rC_r(x,y)
  &=\frac{\pi}{2}\sum_{a\ge1}a r^{a-1}q_a(u)q_a(v)\\
  &=2\pi\varphi(u)\varphi(v)
    \sum_{n\ge0}r^n\operatorname{He}_n(-u)
                         \operatorname{He}_n(-v).
\end{align*}
Mehler's identity identifies the last expression with $2\pi K_r(u,v)$,
proving \eqref{eq:appendix-kernel-r}.

Differentiating the defining integral for $q_a$ gives, for every $a\ge0$,
\[
  q_a'(u)=2\varphi(u)\operatorname{He}_a(-u).
\]
Differentiating \eqref{eq:appendix-Cr-Hermite} in $x$ and $y$ and applying
Mehler's identity once more yields
\begin{align*}
  \partial_x\partial_yC_r(x,y)
  &=\frac{\pi}{2}\vartheta^2\operatorname{He}_3'(x)
    \operatorname{He}_3'(y)
    \sum_{a\ge0}r^a q_a'(u)q_a'(v)\\
  &=2\pi\vartheta^2\operatorname{He}_3'(x)
    \operatorname{He}_3'(y)K_r(u,v),
\end{align*}
which proves \eqref{eq:appendix-kernel-xy}.
\end{proof}

Thus, after one derivative, the discontinuous sign functions are replaced by
explicit smooth Gaussian kernels.  The formulas above initially hold for
$|r|<1$; the explicit Gaussian expressions provide the analytic continuation
used on the complex curve $r=\rho(e^{i\theta})$ below.

If $\Psi$ has Hermite coefficients $\widehat\Psi_{i,j}$, define the diagonal
trace
\[
  T_\lambda[\Psi]
  :=\widehat\Psi_{0,0}+\sum_{n\ge1}\lambda^n\widehat\Psi_{n,n},
  \qquad |\lambda|\le1.
\]
For $|\lambda|<1$, this is the expectation of $\Psi(X,Y)$ when $(X,Y)$ is a
standard Gaussian pair with correlation $\lambda$.

\begin{lemma}[Differentiating the diagonal trace]
\label{lem:appendix-trace-differentiation}
If $\Psi(t)$ is a sufficiently regular family of kernels, then for $|t|<1$,
\begin{equation}
  D_tT_{-t}[\Psi(t)]
  =T_{-t}\!\left[D_t\Psi(t)-t\partial_x\partial_y\Psi(t)\right],
  \qquad D_t=t\partial_t.
  \label{eq:appendix-trace-differentiation}
\end{equation}
\end{lemma}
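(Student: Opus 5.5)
The plan is to work entirely at the level of Hermite coefficients and compare the two sides term by term. By definition,
\[
  T_{-t}[\Psi(t)]=\widehat{\Psi(t)}_{0,0}+\sum_{n\ge1}(-t)^n\,\widehat{\Psi(t)}_{n,n}.
\]
Since $D_t=t\partial_t$ is a derivation and $D_t(-t)^n=n(-t)^n$, applying $D_t$ termwise gives
\[
  D_tT_{-t}[\Psi(t)]
  =\sum_{n\ge0}(-t)^n\,D_t\widehat{\Psi(t)}_{n,n}
  +\sum_{n\ge1}n(-t)^n\,\widehat{\Psi(t)}_{n,n}.
\]
Coefficient extraction is linear and commutes with $t$-differentiation, because $\widehat{\Psi}_{n,n}=\E[\Psi\,\HermBasis_n(X)\HermBasis_n(Y)]$ for independent $X,Y$. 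Hence the first sum is exactly $T_{-t}[D_t\Psi(t)]$. It therefore remains to identify the second sum with $T_{-t}[-t\,\partial_x\partial_y\Psi(t)]$.

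For this I would use the Gaussian integration-by-parts identity. For orthonormal Hermite polynomials, $\HermBasis_m'=\sqrt m\,\HermBasis_{m-1}$. The adjoint of $\partial_x$ in $L^2$ of the Gaussian is $x-\partial_x$, and it sends $\HermBasis_m$ to $\sqrt{m+1}\,\HermBasis_{m+1}$. Consequently
\[
  \widehat{(\partial_xF)}_m=\sqrt{m+1}\,\widehat F_{m+1}
\]
in each variable. Applying this in both $x$ and $y$ gives
\[
  \widehat{(\partial_x\partial_y\Psi)}_{m,m}=(m+1)\,\widehat\Psi_{m+1,m+1}.
\]
Therefore
\[
  T_{-t}[-t\,\partial_x\partial_y\Psi]
  =-t\sum_{m\ge0}(-t)^m(m+1)\widehat\Psi_{m+1,m+1}
  =\sum_{n\ge1}n(-t)^n\widehat\Psi_{n,n},
\]
after reindexing $n=m+1$ and noting $-t\,(-t)^{n-1}=(-t)^n$. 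This matches the second sum above, and combining the two identifications proves \eqref{eq:appendix-trace-differentiation}.

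The main obstacle is making ``sufficiently regular'' precise enough to justify the manipulations. There are three of them: termwise differentiation of the series in $t$, the integration by parts (which needs $\Psi$ and $\partial_x\Psi,\partial_y\Psi,\partial_x\partial_y\Psi$ in Gaussian $L^2$ with suitable decay), and convergence of the diagonal series. I would take as the standing hypothesis that $\Psi(t)$ and $D_t\Psi(t)$ have Hermite coefficients bounded uniformly for $t$ in compact subsets of $(-1,1)$, that the maps $t\mapsto\widehat{\Psi(t)}_{n,n}$ are $C^1$, and that $\partial_x\partial_y\Psi(t)\in L^2$. Then for $|t|\le r<1$ the series $\sum n r^n|\widehat\Psi_{n,n}|$ and $\sum r^n|D_t\widehat\Psi_{n,n}|$ converge uniformly, which justifies differentiating under the sum. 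This is the regime in which the lemma is applied, to kernels built from $C_r$ and its derivatives via Lemma~\ref{lem:appendix-kernel-differentiation}, whose Hermite coefficients are explicit Gaussian quantities.
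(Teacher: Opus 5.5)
Your proposal is correct and follows the paper's proof essentially verbatim. Both apply $D_t$ termwise using $D_t(-t)^n=n(-t)^n$, and then identify the extra diagonal sum with $-t\,T_{-t}[\partial_x\partial_y\Psi]$ via $\HermBasis_m'=\sqrt m\,\HermBasis_{m-1}$. Your derivation of $\widehat{(\partial_x\partial_y\Psi)}_{m,m}=(m+1)\widehat\Psi_{m+1,m+1}$ through the Gaussian adjoint $x-\partial_x$ is only a slightly more careful justification of the coefficient identity that the paper gets by differentiating the Hermite series termwise.
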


\begin{proof}
Write
\[
  \Psi(t;x,y)=\sum_{i,j\ge0}\widehat\Psi_{i,j}(t)
  \operatorname{He}_i(x)\operatorname{He}_j(y).
\]
Since $D_t(-t)^n=n(-t)^n$, termwise differentiation gives
\[
  D_tT_{-t}[\Psi(t)]
  =T_{-t}[D_t\Psi(t)]
   +\sum_{n\ge1}n(-t)^n\widehat\Psi_{n,n}(t).
\]
The identity
$\frac{d}{dx}\operatorname{He}_n(x)=\sqrt n\operatorname{He}_{n-1}(x)$
shows that
\[
  -tT_{-t}[\partial_x\partial_y\Psi(t)]
  =\sum_{n\ge1}n(-t)^n\widehat\Psi_{n,n}(t).
\]
Combining the last two displays proves
\eqref{eq:appendix-trace-differentiation}.
\end{proof}

The Cauchy--Schwarz inequality gives the boundary trace estimate
\begin{equation}
  |T_\lambda[\Psi]|
  \le
  \|\Psi\|_{L^2(\mu)}
  +\frac{\pi}{\sqrt6}
  \|\partial_x\partial_y\Psi\|_{L^2(\mu)},
  \label{eq:appendix-trace-bound}
\end{equation}
where $\mu$ is product standard Gaussian measure.

For $|t|<1$, one has
\[
  H(t)=T_{-t}[C_{\rho(t)}].
\]
Define
\[
  \Phi_0(t;x,y):=C_{\rho(t)}(x,y),
  \qquad
  \Phi_{k+1}:=D_t\Phi_k-t\partial_x\partial_y\Phi_k.
\]
Applying the differentiation identity three times gives
\begin{equation}
  D^3H(t)=T_{-t}[\Phi_3(t;\cdot,\cdot)].
  \label{eq:appendix-D3-trace}
\end{equation}
The interval bounds below provide a uniform integrable majorant, which
justifies taking radial limits in \eqref{eq:appendix-D3-trace} as
$|t|\uparrow1$.

\subsection{The nine smooth kernels}
\label{app:certification-nine-kernels}

Write $\rho_j=D_t^j\rho$.  For $p+a\ge1$, set
\[
  G_{p,a}(t;x,y)
  :=\left[
  \partial_r^p(\partial_x\partial_y)^aC_r(x,y)
  \right]_{r=\rho(t)}.
\]
Expanding the recursion before taking norms gives
\[
  \Phi_3(t;x,y)
  =\sum_{(p,a)\in\mathcal I}c_{p,a}(t)G_{p,a}(t;x,y),
\]
where
\[
  \mathcal I=
  \{(1,0),(2,0),(3,0),(0,1),(0,2),(0,3),
    (1,1),(2,1),(1,2)\},
\]
and
\begin{center}
\begin{tabular}{c c@{\qquad}c c}
\toprule
$(p,a)$ & $c_{p,a}(t)$ & $(p,a)$ & $c_{p,a}(t)$\\
\midrule
$(1,0)$ & $\rho_3$ & $(0,1)$ & $-t$\\
$(2,0)$ & $3\rho_1\rho_2$ & $(0,2)$ & $3t^2$\\
$(3,0)$ & $\rho_1^3$ & $(0,3)$ & $-t^3$\\
$(1,1)$ & $-3t(\rho_1+\rho_2)$ & $(2,1)$ & $-3t\rho_1^2$\\
$(1,2)$ & $3t^2\rho_1$ & & \\
\bottomrule
\end{tabular}
\end{center}

For $t=e^{i\theta}$ on the unit circle, define the combined energy
\[
  S(\theta)
  :=\bigl\|\Phi_3(e^{i\theta})\bigr\|_{L^2(\mu)}^2
  +4\,\bigl\|\partial_x\partial_y\Phi_3(e^{i\theta})\bigr\|_{L^2(\mu)}^2.
\]
By \eqref{eq:appendix-D3-trace}, the trace estimate
\eqref{eq:appendix-trace-bound}, and the Cauchy--Schwarz inequality
\[
  \Bigl(x+\tfrac{\pi}{\sqrt6}\,y\Bigr)^2
  \le\Bigl(1+\tfrac{\pi^2}{24}\Bigr)\bigl(x^2+4y^2\bigr)
  \le\tfrac{\pi^2}{6}\bigl(x^2+4y^2\bigr),
\]
one has the pointwise bound
$|D^3H(e^{i\theta})|^2\le\tfrac{\pi^2}{6}S(\theta)$.  Since every
construction coefficient is real, $S(-\theta)=S(\theta)$, and therefore
\begin{equation}
  \|D^3H\|_{L^2(\mathbb T)}^2
  =\frac{1}{2\pi}\int_0^{2\pi}
  \bigl|D^3H(e^{i\theta})\bigr|^2\,d\theta
  \le\frac{\pi^2}{6}\cdot\frac{1}{\pi}\int_0^{\pi}S(\theta)\,d\theta.
  \label{eq:appendix-combined-energy}
\end{equation}
The interval computation certifies
\[
  \frac{1}{\pi}\int_0^{\pi}S(\theta)\,d\theta
  \le126.80385221.
\]
Substitution into \eqref{eq:appendix-combined-energy} yields
\begin{equation}
  \|D^3H\|_{L^2(\mathbb T)}
  \le\frac{\pi}{\sqrt6}\,\bigl(126.80385221\bigr)^{1/2}
  <14.443.
  \label{eq:appendix-energy-output}
\end{equation}
Together, \eqref{eq:appendix-head-output} and
\eqref{eq:appendix-energy-output} prove
Proposition~\ref{prop:certified-numerical-input}.

\subsection{Interval-arithmetic implementation}
\label{app:certification-implementation}

For reproducibility, the implementation records the following data.

\begin{enumerate}
  \item Every decimal parameter is treated as an exact terminating rational,
  or is replaced by an explicit outward-rounded interval.
  \item The one-dimensional coefficient integrals are evaluated by adaptive
  interval quadrature on a finite interval, with a closed Gaussian-tail bound
  outside that interval.
  \item Every derivative of $K_r$ is reduced to a polynomially weighted
  Gaussian integral.  The mixed exponential is expanded into an absolutely
  convergent series with a certified tail.
  \item The circle average of $S$ is enclosed by a composite four-point
  Gauss--Legendre rule.  A panelwise eighth-derivative estimate supplies the
  rigorous remainder; by the symmetry $S(-\theta)=S(\theta)$ the average is
  taken over $[0,\pi]$, and the computation uses $1024$ panels.
  \item On every panel, the code verifies the uniform bounds
  $\lvert1-\rho(e^{i\theta})^2\rvert\ge0.3724$ and a Gaussian decay margin of
  at least $0.49$.  This also gives the domination needed for the radial
  boundary passage above.
\end{enumerate}

The certificate terminates by printing the outward-rounded statements
\[
  b_1\ge0.881573822049,
  \qquad
  \sum_{\substack{3\le m\le251\\m\ \mathrm{odd}}}|b_m|
  \le1.1328860\cdot10^{-5},
  \qquad
  \|D^3H\|_{L^2(\mathbb T)}<14.443,
\]
and
\[
  0.881545409+1.1328860\cdot10^{-5}
  +\frac{14.443}{\sqrt{10\cdot251^5}}
  <0.881573822049.
\]
The last inequality is the complete numerical bridge from this appendix to
Section~\ref{sec:certification}.

\section{Certification of the Fiber Inequality}
\label{app:fiber-certificate}

This appendix reduces the fiber inequality~\eqref{eq:fiber-inequality},
\[
        V(u)\le 3\nu\beta(u)-\beta(u)^2
        \qquad\text{for every ternary } u:\R\to\{-1,0,1\},
\]
to a finite family of one-dimensional inequalities, and documents the interval-arithmetic certificates that verify each of them. Throughout, $Z\sim N(0,1)$, $\gamma$ is its law, $\varphi(s)=e^{-s^2/2}/\sqrt{2\pi}$, and
\[
        V(u)=\sum_{j=0}^3\E[u\psi_j]^2,
        \qquad
        \beta(u)=\E\bigl[|Z|\,u(Z)^2\bigr]\in[0,\nu].
\]

All computations are performed in Arb ball arithmetic~\cite{johansson2017arb} (via \texttt{python-flint~0.8.0}, the Python interface to FLINT~3, which incorporates Arb; working precision $256$--$400$ bits): every computed quantity is an interval $[\text{mid}\pm\text{rad}]$ whose radius is a rigorous outward-rounded error bound, and an inequality counts as \emph{certified} only when the entire interval lies on the claimed side of $0$. All certificates are deterministic. The scripts, their logs, and a one-command driver are collected in the repository directory \texttt{lower\_bound\_reproducibility/}; script names below refer to its \texttt{scripts/} folder.

\subsection{Reduction to two certified regimes}
\label{app:fq1-reduction}

The budget axis $\beta\in[0,\nu]$ is covered by two regimes: a direction-free bound for large budgets, and a Fenchel-dual bound for small budgets. Write
\[
        \phi(\beta):=\sqrt{3\nu\beta-\beta^2},
\]
so that~\eqref{eq:fiber-inequality} is exactly $\sqrt{V(u)}\le\phi(\beta(u))$.

\begin{lemma}[High-budget regime]\label{lem:highbudget}
Let $u$ be ternary with $\beta(u)=\beta\in(0,\nu]$, and let $r\in(0,\infty]$ solve $\nu(1-e^{-r^2/2})=\beta$. Then
\[
        V(u)\;\le\;\Pr(u\neq0)\;\le\;2\PhiG(r)-1 .
\]
Consequently, if
\begin{equation}
        g(r):=3\nu\beta(r)-\beta(r)^2-\bigl(2\PhiG(r)-1\bigr)\;\ge\;0,
        \qquad \beta(r):=\nu(1-e^{-r^2/2}),
        \label{eq:g-def}
\end{equation}
then~\eqref{eq:fiber-inequality} holds for $u$.
\end{lemma}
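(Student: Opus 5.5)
The first inequality $V(u)\le\Pr(u\neq0)$ is a Bessel-type bound, and I would prove it directly. The functions $\HermBasis_0,\dots,\HermBasis_3$ are orthonormal in $L^2(\gamma)$, so $V(u)=\sum_{j=0}^3\E[u\HermBasis_j]^2$ is the squared norm of the orthogonal projection of $u$ onto their span. This gives $V(u)\le\norm{u}_2^2=\E[u^2]$. Since $u$ is ternary, $u^2=\one_{\{u\neq0\}}$, and therefore $\E[u^2]=\Pr(u\neq0)$.

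The second inequality is the same bathtub argument used in Fiber Residual Bound \#2. Let $E=\{u\neq0\}$. The weighted budget is then $\beta=\E[|Z|\one_E]$, and we want to maximize $\gamma(E)$ subject to this constraint. I would set $E_r=\{|z|\le r\}$; the choice of $r$ gives $\E[|Z|\one_{E_r}]=\nu(1-e^{-r^2/2})=\beta$. Such an $r$ exists and is unique because $r\mapsto\nu(1-e^{-r^2/2})$ increases continuously from $0$ to $\nu$, with $r=\infty$ when $\beta=\nu$. The comparison is pointwise: $(\one_E-\one_{E_r})(|z|-r)\ge0$ for every $z$, since the first factor is $\ge0$ exactly where $|z|>r$ and $\le0$ exactly where $|z|\le r$. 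Taking expectations,
\[
\E[|Z|(\one_E-\one_{E_r})]\ge r\bigl(\gamma(E)-\gamma(E_r)\bigr).
\]
The left side is $\beta-\beta=0$, so $\gamma(E)\le\gamma(E_r)=2\PhiG(r)-1$ whenever $r<\infty$. When $r=\infty$ the bound reads $\gamma(E)\le1$, which is trivial.

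For the consequence, combine the two inequalities with the hypothesis $g(r)\ge0$. Because $\beta(r)=\beta(u)$, this gives $V(u)\le 2\PhiG(r)-1\le 3\nu\beta(u)-\beta(u)^2$, which is exactly~\eqref{eq:fiber-inequality}.

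No step here is a real obstacle. The only points needing care are the existence and uniqueness of $r$, the endpoint convention $r=\infty$ at $\beta=\nu$, and the fact that the bathtub comparison needs no attainment assumption because it is pointwise.
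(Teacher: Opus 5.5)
Your proposal is correct and matches the paper's proof essentially step for step. Bessel's inequality gives $V(u)\le\E[u^2]=\Pr(u\neq0)$; integrating the pointwise comparison $(\mathbf 1_E-\mathbf 1_{E_r})(|z|-r)\ge0$ against the Gaussian gives the bathtub bound $\Pr(u\neq0)\le\Pr(|Z|\le r)$; and the hypothesis $g(r)\ge0$ finishes, with your checks on the existence and uniqueness of $r$ and the endpoint case $r=\infty$ being details the paper leaves implicit.
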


\begin{proof}
The first inequality is Bessel: $\psi_0,\ldots,\psi_3$ are orthonormal, so $V(u)\le\norm{u}_2^2=\E[u^2]=\Pr(u\neq0)$. The second is the bathtub principle already used in Section~\ref{sec:disagreement}: among all supports $E$ with $\int_E|s|\,d\gamma=\beta$, the Gaussian measure $\gamma(E)$ is maximized by the centered interval $E_r=\{|s|\le r\}$ with the same budget. Indeed $(\one_E-\one_{E_r})(|s|-r)\ge0$ pointwise, and integrating gives $0\le 0-r(\gamma(E)-\gamma(E_r))$, i.e.\ $\gamma(E)\le\gamma(E_r)=2\PhiG(r)-1$.
\end{proof}

\begin{lemma}[Dual regime and the Fenchel splice]\label{lem:dual-splice}
Call $p=\sum_{j=0}^3a_j\psi_j$ with $\norm{a}_2=1$ a \emph{unit cubic}, and set
\[
        J(c,p):=\E\bigl[(|p(Z)|-c|Z|)_+\bigr],
        \qquad
        d(c):=\frac{3\nu}{2}\bigl(\sqrt{1+c^2}-c\bigr),
        \qquad
        \beta_*(c):=\frac{3\nu}{2}\Bigl(1-\frac{c}{\sqrt{1+c^2}}\Bigr).
\]
Fix $c_M>0$ and suppose that
\begin{equation}
        J(c,p)\le d(c)
        \qquad\text{for every unit cubic } p \text{ and every } c\ge c_M.
        \label{eq:dual-bound}
\end{equation}
Then~\eqref{eq:fiber-inequality} holds for every ternary $u$ with $\beta(u)\le\beta_*(c_M)$.
\end{lemma}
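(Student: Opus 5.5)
The plan is to combine the pricing (Lagrangian) relaxation from the proof sketch of the fiber inequality with an explicit Legendre--Fenchel computation of the dual function $d$. Nothing beyond elementary calculus should be needed once~\eqref{eq:dual-bound} is assumed.

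\textbf{Step 1 (priced upper bound).} Fix a ternary $u$ with $\beta:=\beta(u)$, a unit cubic $p$, and any $c\ge c_M$. Since $u\in\{-1,0,1\}$, pointwise we have $u\,p-c|z|u^2\le(|p(z)|-c|z|)_+$. Indeed, if $u=0$ the left side is $0$, and if $u=\pm1$ it is at most $|p|-c|z|$. Taking expectations gives
\[
\E[u(Z)p(Z)]\le c\,\beta+J(c,p)\le c\,\beta+d(c),
\]
using~\eqref{eq:dual-bound}. By the dual formula for the Euclidean norm, $\sqrt{V(u)}=\sup_{p}\E[u\,p]$ over unit cubics. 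Therefore
\[
\sqrt{V(u)}\le \inf_{c\ge c_M}\bigl(d(c)+c\beta\bigr).
\]

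\textbf{Step 2 (evaluating the infimum).} If $\beta=0$, then $u=0$ almost surely and the claim is trivial, so assume $0<\beta\le\beta_*(c_M)$. The function $c\mapsto d(c)+c\beta$ is convex, since $\sqrt{1+c^2}$ is convex. Its derivative is
\[
\beta-\tfrac{3\nu}{2}\bigl(1-c/\sqrt{1+c^2}\bigr)=\beta-\beta_*(c).
\]
Because $\beta_*$ is continuous and strictly decreasing from $3\nu/2$ (at $c=0$) to $0$ (as $c\to\infty$), there is a unique $c_\beta$ with $\beta_*(c_\beta)=\beta$. The hypothesis $\beta\le\beta_*(c_M)$ forces $c_\beta\ge c_M$, so the constrained infimum equals the value at $c_\beta$.

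To compute that value, put $s:=c_\beta/\sqrt{1+c_\beta^2}=1-\tfrac{2\beta}{3\nu}\in(0,1)$. Then $\sqrt{1+c_\beta^2}=(1-s^2)^{-1/2}$ and $c_\beta=s(1-s^2)^{-1/2}$. Substituting,
\[
d(c_\beta)+c_\beta\beta=\frac{\tfrac{3\nu}{2}(1-s)+s\beta}{\sqrt{1-s^2}}=\frac{\beta(1+s)}{\sqrt{1-s^2}}=\beta\sqrt{\frac{1+s}{1-s}}.
\]
Using $1-s=2\beta/(3\nu)$ and $1+s=2-2\beta/(3\nu)$, this simplifies to $\sqrt{\beta(3\nu-\beta)}=\phi(\beta)$.

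\textbf{Step 3 (conclusion).} Combining Steps 1 and 2 gives $\sqrt{V(u)}\le\phi(\beta(u))$. Squaring yields exactly~\eqref{eq:fiber-inequality}.

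The only delicate point is Step 2: one must check that the unconstrained minimizer lies in the region $c\ge c_M$ where~\eqref{eq:dual-bound} is available. This is precisely why the statement imposes $\beta\le\beta_*(c_M)$, and monotonicity of $\beta_*$ settles it.
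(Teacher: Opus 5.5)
Your proof is correct and follows the paper's argument. You use the same pointwise priced inequality $\E[u\,p]\le J(c,p)+c\,\beta(u)$, followed by the evaluation $\inf_{c\ge c_M}\bigl(d(c)+c\beta\bigr)=\phi(\beta)$, which relies on the tangency slope $\beta_*^{-1}(\beta)$ being at least $c_M$. The only cosmetic difference is that the paper gets this infimum from the semicircle/concave-conjugate picture (and treats $\beta(u)=0$ by letting $c\to\infty$), whereas you verify it by direct calculus and observe that $\beta(u)=0$ forces $u=0$ almost surely.
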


\begin{proof}
Let $v_j=\E[u\psi_j]$ and $m=\norm{v}_2=\sqrt{V(u)}$. If $m=0$ there is nothing to prove, since $3\nu\beta-\beta^2=\beta(3\nu-\beta)\ge0$ on $[0,\nu]$. Otherwise $p=\sum_j(v_j/m)\psi_j$ is a unit cubic with $m=\E[u\,p]$, and for every $c\ge0$,
\[
        m=\E[u\,p]\le\int_{\{u\neq0\}}|p|\,d\gamma
        \le\int(|p|-c|s|)_+\,d\gamma+c\int_{\{u\neq0\}}|s|\,d\gamma
        =J(c,p)+c\,\beta(u).
\]
Next, $d$ is the concave conjugate of $\phi$: since $\phi(\beta)^2+(\beta-\tfrac{3\nu}{2})^2=(\tfrac{3\nu}{2})^2$, the graph of $\phi$ is the upper semicircle of radius $R=\tfrac{3\nu}{2}$ centered at $(R,0)$, whence
\[
        \sup_{0\le\beta\le2R}\bigl(\phi(\beta)-c\beta\bigr)=R\bigl(\sqrt{1+c^2}-c\bigr)=d(c),
\]
attained at the tangency budget $\beta=\beta_*(c)$. The map $c\mapsto\beta_*(c)$ is a strictly decreasing bijection from $[0,\infty)$ onto $(0,R]$, so for $\beta\le\beta_*(c_M)$ the tangent slope $c(\beta)=\beta_*^{-1}(\beta)$ satisfies $c(\beta)\ge c_M$, and biconjugation on the semicircle gives
\[
        \inf_{c\ge c_M}\bigl(d(c)+c\beta\bigr)=d\bigl(c(\beta)\bigr)+c(\beta)\,\beta=\phi(\beta).
\]
Combining, $m\le\phi(\beta(u))$, i.e.\ $V(u)=m^2\le3\nu\beta(u)-\beta(u)^2$. (For $\beta(u)=0$ the same chain gives $m\le\inf_{c\ge c_M}d(c)=0$, using~\eqref{eq:dual-bound} for arbitrarily large $c$.)
\end{proof}

\begin{remark}[Exactness of the threshold reduction]\label{rem:exactness}
The step $m\le J(c,p)+c\beta(u)$ is the easy half of an exact pointwise duality: for every $z$,
\[
        \max_{v\in\{-1,0,1\}}\bigl(v\,p(z)-c|z|\,v^2\bigr)=\bigl(|p(z)|-c|z|\bigr)_+,
\]
with maximizer $v(z)=\sgn(p(z))\one_{\{|p(z)|\ge c|z|\}}$, so that in fact
\[
        \sup_{u\ \text{ternary}}\bigl(\E[u\,p]-c\,\beta(u)\bigr)=J(c,p),
\]
attained by the threshold rule. The dual gate~\eqref{eq:dual-bound} is therefore not a lossy relaxation of the small-budget regime: it says precisely that every priced threshold optimum lies below the corresponding tangent line of the semicircle.
\end{remark}

\begin{proposition}[Coverage]\label{prop:fq1-coverage}
Take $c_M=0.993405$. The certificates of Sections~\ref{app:cert-highbeta}--\ref{app:cert-highc} establish:
\begin{enumerate}
\item[\textup{(C1)}] $g\ge0$ on $[r_1,\infty]$ with $r_1=1.081596$; hence, by Lemma~\ref{lem:highbudget}, \eqref{eq:fiber-inequality} holds whenever $\beta(u)\ge\beta_0:=\beta(r_1)$, with the certified enclosure $\beta_0=0.353345035094\pm4.4\times10^{-13}$;
\item[\textup{(C2)}, \textup{(C3)}] the dual bound~\eqref{eq:dual-bound} for all $c\in[c_M,12]$ and all $c\ge12$ respectively;
\item[\textup{(S)}] the splice inequality $\beta_0\le\beta_*(c_M)$, with certified enclosures $\beta_*(c_M)=0.353346937306\pm1.1\times10^{-13}$ and $\beta_*(c_M)-\beta_0=1.9022\times10^{-6}\pm3.9\times10^{-14}>0$.
\end{enumerate}
Since $[0,\beta_*(c_M)]\cup[\beta_0,\nu]=[0,\nu]$, the fiber inequality~\eqref{eq:fiber-inequality} holds for every ternary $u$.
\end{proposition}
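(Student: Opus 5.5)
The statement is the Coverage proposition. Its logical content is already reduced by Lemmas~\ref{lem:highbudget} and~\ref{lem:dual-splice}, so the proof has two parts. First, the covering argument: (C1) together with Lemma~\ref{lem:highbudget} handles every ternary $u$ with $\beta(u)\ge\beta_0$. (C2)--(C3) give the dual bound~\eqref{eq:dual-bound} for all $c\ge c_M$, and then Lemma~\ref{lem:dual-splice} handles every $u$ with $\beta(u)\le\beta_*(c_M)$. Finally (S) shows the two intervals overlap, so they cover $[0,\nu]$. Second, and this is the real work, I have to describe how each of the four certified facts is reduced to finitely many interval-arithmetic checks.

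\textbf{(C1) and (S).} These are one-variable statements. I would write
\[
g(r)=3\nu\beta(r)-\beta(r)^2-\bigl(2\PhiG(r)-1\bigr),\qquad \beta(r)=\nu\bigl(1-e^{-r^2/2}\bigr),
\]
in closed form.
\begin{itemize}
\item For large $r$, say $r\ge R$, an asymptotic argument suffices. We have $g(\infty)=3\nu^2-\nu^2-1=2\nu^2-1=4/\pi-1>0$, and $g$ converges to this value with Gaussian-tail error, which gives an explicit bound for $r\ge R$.
\item On $[r_1,R]$, I would subdivide into panels and enclose $g$ on each panel by ball arithmetic, using a derivative bound where plain enclosure is too coarse. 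The concern is near $r_1$, where $g$ is presumably only barely positive.
\item (S) is just the evaluation of $\beta(r_1)$ and $\beta_*(c_M)$ in Arb.
\end{itemize}

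\textbf{(C2) and (C3).} Here I must bound $J(c,p)$ uniformly over the unit sphere $a\in S^3$ and over $c$.

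For (C3), with $c\ge12$, I would argue analytically. Since $\sum_j|\psi_j(z)|^2$ is bounded by a polynomial $Q(z)$ of degree $6$, Cauchy--Schwarz gives $|p(z)|\le\sqrt{Q(z)}$. The set where $|p|\ge c|z|$ then splits into two parts.
\begin{itemize}
\item A small neighbourhood $|z|\le C/c$. There the constant and linear terms dominate, and we have $J\approx\E[(|a_0\psi_0+\cdots|-c|Z|)_+]$.
\item The region $|z|\gtrsim c^{1/2}$. Its contribution is Gaussian-small.
\end{itemize}
The near-zero part behaves like $\varphi(0)\,|p(0)|^2/c$. Meanwhile $d(c)\approx 3\nu/(4c)=3\nu\varphi(0)\sqrt{2\pi}/(4c)$ for large $c$. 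Since $|p(0)|^2\le 1+1/2$, the comparison reduces to an explicit inequality with margin. I would make every error term explicit in $c$ and then certify it for $c\ge12$.

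For (C2), the main obstacle, I would cover the compact parameter box $[c_M,12]\times S^3$ by cells. The sign symmetry $p\mapsto-p$ and the parity $z\mapsto-z$ let us halve $S^3$. On each cell I need a rigorous upper bound for $J$. Since $(|p|-c|z|)_+$ is $1$-Lipschitz in $|p|$ and in $c|z|$, I would use
\[
J(c',p')\le J(c,p)+\E|p'-p|+|c'-c|\,\nu\le J(c,p)+\|a'-a\|_2+|c'-c|\nu .
\]
This reduces each cell to one evaluation of $J$ at its centre plus a cell-radius penalty. Evaluating $J(c,p)$ exactly is also a one-dimensional computation: find the real roots of the polynomials $p(z)\mp cz$, isolated rigorously, and integrate the piecewise-polynomial-times-Gaussian terms in closed form using erf and $e^{-z^2/2}$. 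The difficulty is the margin $d(c)-J$. It is tightest near $c_M$, because the splice margin $\beta_*(c_M)-\beta_0$ is only about $2\times10^{-6}$. So I would refine the cells adaptively: subdivide a cell whenever the Lipschitz penalty exceeds the available margin, using a finer mesh near $c_M$ and near the maximizing directions, which I expect lie around $p=\psi_0$-like and $\psi_1$-mixed directions. If the cell count becomes prohibitive, I would replace the crude Lipschitz bound by a first-order gradient bound in $a$, since $\partial_aJ=\E[\sgn(p)\psi_j\one_{|p|>c|z|}]$ has an explicit form.
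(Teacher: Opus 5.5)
Your covering argument is the paper's, and your plans for (C1) and (S) are fine. The paper gets (C1) from two point checks, via $g'=2\varphi B$ with $B(r)=r\nu(1+2e^{-r^2/2})-1$ increasing for $r\ge1$, but a panel scan plus a tail bound also works.

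\textbf{The gap is in (C3).} You say that comparing $\varphi(0)p(0)^2/c$ with $d(c)\approx 3\nu/(4c)$ ``reduces to an explicit inequality with margin.'' But $\varphi(0)=\nu/2$, so $\varphi(0)\cdot\tfrac32=\tfrac{3\nu}{4}$, and the leading terms coincide exactly. Equality is attained by the unit cubic $p_0(s)=(3-s^2)/\sqrt6=\sqrt{2/3}\,\psi_0-\psi_2/\sqrt3$, the normalized reproducing kernel at $0$, which has $p_0(0)^2=\tfrac32$. The dual bound is therefore asymptotically tight, and no argument of the form ``leading term plus explicit error terms in $c$'' can close the tail. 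The sign is decided at order $c^{-3}$. A direct expansion gives
\[
  d(c)=\nu\Bigl(\frac{3}{4c}-\frac{3}{16c^{3}}\Bigr)+O(c^{-5}),
  \qquad
  J(c,p_0)=\nu\Bigl(\frac{3}{4c}-\frac{11}{32c^{3}}\Bigr)+O(c^{-5}),
\]
so the margin is only $\tfrac{5\nu}{32}c^{-3}$, about $7\times10^{-5}$ at $c=12$. Certifying (C3) thus needs the exact $c^{-3}$ term of $J(c,p)$, uniformly over all unit cubics. The missing idea is a local chart $p\propto p_0+\lambda w$, with $\lambda=1/c$ and $w\perp p_0$ having coordinates $y$. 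In this chart the odd components of $w$ cancel between the branches $s$ and $-s$ at first order. The scaled margin $c^{3}\bigl(d(c)-J(c,p)\bigr)$ then has the strictly positive limit $\tfrac{5\nu}{32}+\tfrac{3\nu}{4}\|y\|^2$. It is this scaled quantity, not $d-J$ itself, that one certifies, with directions far from $p_0$ handled in a separate chart.

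For the same reason, the remote active ray near $s\approx 1.55\sqrt c$ cannot be dismissed as Gaussian-small. Its natural bound $2\int_{1.55\sqrt c}^{\infty}s^{3}\varphi(s)\,ds$ is about $0.023\,c^{-3}$ at $c=12$, a sizable fraction of the roughly $0.125\,c^{-3}$ margin. It must be budgeted explicitly against the $c^{-3}$ term; the paper reserves $0.03\,c^{-3}$ for it.

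\textbf{Consequences for (C2).} The same asymptotics undercut your picture of this band. The splice gap $\beta_*(c_M)-\beta_0\approx2\times10^{-6}$ lives in budget space and says nothing about $d(c)-J(c,p)$. That margin decays like $\tfrac{5\nu}{32}c^{-3}$, so near the top of the band it is only about $7\times10^{-5}$. The near-extremal directions there cluster around the even cubic $p_0$, not around $\psi_1$-mixed directions. With your first-order penalty $\|a'-a\|_2$, this forces cells of radius roughly $10^{-4}$ on $S^3$ around $p_0$, which is prohibitive, so the gradient refinement you list as a fallback is actually required.

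The paper avoids this differently. It folds $J$ onto $s\ge0$ and bounds the integrand by a function $f(|e|,|o|)$ that is monotone in each argument, using Christoffel envelopes. One box in the two-dimensional even-coefficient disk then certifies all compatible odd parts at once. The monotonicities $J(c,p)\le J(c_L,p)$ and $d(c)\ge d(c_U)$ certify whole $c$-bands, for about $10^{5}$ boxes in total.
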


Everything that remains is finite: (C1) is a one-variable inequality on a half-line, (S) is a single sign check, and (C2)--(C3) are inequalities over compact parameter sets --- the medium regime is certified on $S^3\times[c_M,12]$, and in the tail regime the change of variables $\lambda=1/c$ turns $c\ge12$ into the compact range $S^3\times[0,\tfrac1{12}]$, with the endpoint $\lambda=0$ handled through the scaled Taylor expansion of Section~\ref{app:cert-highc}. The next three subsections describe exactly what is checked in each case.

\subsection{Certificate (C1): the high-budget inequality}
\label{app:cert-highbeta}

Differentiating~\eqref{eq:g-def} and using $2\varphi(r)=\nu e^{-r^2/2}$,
\[
        g'(r)=2\varphi(r)\,B(r),
        \qquad
        B(r)=r\nu\bigl(1+2e^{-r^2/2}\bigr)-1,
        \qquad
        B'(r)=\nu\Bigl(1-2e^{-r^2/2}(r^2-1)\Bigr).
\]
The function $2e^{-r^2/2}(r^2-1)$ is maximized at $r^2=3$, where it equals $4e^{-3/2}$; the certified check $4e^{-3/2}<1$ therefore gives $B'>0$ for all $r\ge1$. Consequently the two thin-ball checks
\[
        g(r_1)>0,\qquad B(r_1)>0 \qquad (r_1=1.081596)
\]
prove $g'\ge0$ hence $g\ge g(r_1)>0$ on all of $[r_1,\infty)$, and the certified limit $g(\infty)=2\nu^2-1>0$ covers $\beta=\nu$. The script \texttt{highbeta\_monotone.py} (prec.\ $300$) certifies all four facts, obtaining
\[
        g(r_1)=3.65\times10^{-4}\pm2.1\times10^{-7},
        \qquad
        B(r_1)=0.824613\pm3.7\times10^{-7},
        \qquad
        4e^{-3/2}=0.89252\ldots<1,
\]
together with an independent (logically redundant) interval scan certifying $B'>0$ on $[1,8)$. A second script, \texttt{highbeta\_refine.py} (prec.\ $400$), supplies the splice constants: it computes the certified enclosures of $\beta_0=\beta(r_1)$ and $\beta_*(c_M)$ quoted in Proposition~\ref{prop:fq1-coverage} and certifies the splice inequality (S), re-verifying $g\ge0$ on $[r_1,\,1.0816]$ with fine interval balls along the way.

\subsection{Certificate (C2): the dual envelope on the medium band $c\in[c_M,12]$}
\label{app:cert-envelope}

Fold $J$ onto the half-line and split $p=e+o$ into its even part $e=a_0\psi_0+a_2\psi_2$ and odd part $o=a_1\psi_1+a_3\psi_3$:
\begin{equation}
        J(c,p)=\int_0^\infty\Bigl[\bigl(|p(s)|-cs\bigr)_++\bigl(|p(-s)|-cs\bigr)_+\Bigr]\varphi(s)\,ds ,
        \label{eq:J-fold}
\end{equation}
and for each $s$ the unordered pair $\{|p(s)|,|p(-s)|\}$ equals $\{|e(s)|+|o(s)|,\;\bigl||e(s)|-|o(s)|\bigr|\}$ (because $|e\pm o|$ take exactly these two values in some order). The integrand of~\eqref{eq:J-fold} is therefore $f\bigl(|e(s)|,|o(s)|\bigr)$ with
\[
        f(x,y):=(x+y-t)_++\bigl(|x-y|-t\bigr)_+,\qquad t=cs\ge0 .
\]

\begin{lemma}[Envelope monotonicity]\label{lem:fold-mono}
For every $t\ge0$, the function $f$ is nondecreasing in each of $x\ge0$ and $y\ge0$. Hence $f\bigl(|e(s)|,|o(s)|\bigr)\le f\bigl(\bar A(s),\omega(s)\bigr)$ for any pointwise upper bounds $\bar A\ge|e|$, $\omega\ge|o|$.
\end{lemma}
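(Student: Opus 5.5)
The claim is elementary: $f(x,y)=(x+y-t)_++(|x-y|-t)_+$ with $t\ge0$ fixed. I will prove monotonicity in $x$ for fixed $y\ge0$. Monotonicity in $y$ then follows from the symmetry $f(x,y)=f(y,x)$, which holds because both $x+y$ and $|x-y|$ are symmetric. The first summand $(x+y-t)_+$ is a nondecreasing function of $x$, since $(\cdot)_+$ is nondecreasing. The only issue is the second summand, because $|x-y|$ decreases in $x$ on $[0,y]$.

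Fix $y$ and split into $x\ge y$ and $x\le y$. For $x\ge y$ the second term is $(x-y-t)_+$, which is nondecreasing in $x$, so $f$ is nondecreasing there. For $0\le x\le y$ the second term is $(y-x-t)_+$. Whenever it is positive we have $y-x>t$, so $x+y>t+2x\ge t$, and the first term is also positive. On that region the sum is $(x+y-t)+(y-x-t)=2y-2t$, which is constant in $x$. Where the second term vanishes, $f$ equals the first term alone, which is nondecreasing.

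So on $[0,y]$, as $x$ increases, $f$ is first constant (while $y-x>t$) and then nondecreasing. The value is continuous at the switch point $x=y-t$, since both pieces equal $2y-2t$ there. Hence $f(\cdot,y)$ is nondecreasing on $[0,y]$, and by the previous case also on $[y,\infty)$. Since $f$ is continuous at $x=y$, it is nondecreasing on all of $[0,\infty)$.

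An equivalent, more compact route is to note that $f$ is continuous and piecewise linear. On each piece the slope in $x$ is one of $0+0$, $1+0$, $1+1$, $1-1$, or $0-1$. The problematic slope $-1$ would need the first term inactive while the second is active with $y>x$, and this is impossible as shown above. I would present the case argument.

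For the consequence, take $0\le |e(s)|\le \bar A(s)$ and $0\le |o(s)|\le \omega(s)$. Applying monotonicity first in $x$ and then in $y$ gives
\[
f(|e|,|o|)\le f(\bar A,|o|)\le f(\bar A,\omega).
\]
The second inequality uses that $\bar A\ge0$, so the $y$-monotonicity applies at the point $x=\bar A$. There is no real obstacle; the only point needing care is the $x<y$ region, where the decrease of $|x-y|$ is exactly cancelled by the active first term.
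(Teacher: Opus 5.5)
Your proof is correct and follows essentially the same route as the paper. Both reduce to monotonicity in $x$ by symmetry and then split according to the regime of the second term, with the key observation that when $y-x>t$ the first term is also active, so the slopes cancel. Your explicit chain $f(|e|,|o|)\le f(\bar A,|o|)\le f(\bar A,\omega)$ spells out the ``hence'' step that the paper leaves implicit.
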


\begin{proof}
$f$ is symmetric in $(x,y)$, so it suffices to treat $x$; fix $y$ and distinguish the three regimes of the second term. If $x-y>t$, both terms are active (as $x+y\ge x-y>t$) and the $x$-slope is $1+1=2$. If $y-x>t$, both terms are again active ($x+y>2x+t\ge t$) and the slopes cancel, $1-1=0$. If $|x-y|\le t$, the second term vanishes and $f=(x+y-t)_+$ has $x$-slope $0$ or $1$. In every case the slope is $\ge0$.
\end{proof}

The certificate applies Lemma~\ref{lem:fold-mono} with the Cauchy--Schwarz (Christoffel) envelopes

\begin{align*}
        |o(s)|&\le r_o\sqrt{K_o(s)},
        & K_o&=\psi_1^2+\psi_3^2=\tfrac{5}{2}s^2-s^4+\tfrac16 s^6,\\
        |e(s)|&\le|e_{\bar a}(s)|+\rho\sqrt{K_e(s)},
        & K_e&=\psi_0^2+\psi_2^2=\tfrac32-s^2+\tfrac12 s^4,
\end{align*}
where $\bar a=(\bar a_0,\bar a_2)$ is the center of an even-coefficient box of circumradius $\rho$, and $r_o=\sqrt{1-l^2}$ with $l=\max(0,\norm{\bar a}-\rho)$ bounds the odd-coefficient norm of \emph{any} unit cubic whose even part lies in the box. In this way a single evaluation certifies the box against \emph{all} compatible odd parts, and the search space is the two-dimensional even disk rather than $S^3$ (by $J(c,-p)=J(c,p)$ one may take $a_0\ge0$).

The integral~\eqref{eq:J-fold} is enclosed by $480$ interval panels on $s\in[0,8]$ (per-panel ball enclosures of $f(\bar A,\omega)$ times upper bounds of the panel's Gaussian weight, all roundings outward) plus the closed-form tail majorant
\[
        \int_8^\infty\Bigl(1+\tfrac{s^2+1}{\sqrt2}+s^3\Bigr)\varphi(s)\,ds\;\le\;3.7\times10^{-13}
        \qquad(\text{the integral is }\approx3.64\times10^{-13}).
\]
The majorant is valid because $f(x,y)\le(x+y)+|x-y|=2\max\{x,y\}$, while for $s\ge8$ both $|e|\le1+(s^2+1)/\sqrt2\le s^3/2$ and $|o|\le\sqrt{K_o}\le s^3/\sqrt6\le s^3/2$, so the folded integrand is at most $s^3\le1+\tfrac{s^2+1}{\sqrt2}+s^3$; the scripts use the rigorous Arb enclosure of the tail integral, not the rounded decimal above. On a $c$-band $[c_L,c_U]$ the certificate uses the monotonicities $J(c,p)\le J(c_L,p)$ and $d(c)\ge d(c_U)$, so certifying $\overline{U}(\text{box},c_L)\le\underline{d}(c_U)$ certifies all $c$ in the band at once. The branch-and-bound starts from a grid of spacing $0.08$ on the even disk whose boxes have half-width equal to the spacing (so they cover the disk with twofold overlap), splits any uncertified box into four (down to a failure floor of half-width $6\times10^{-4}$, never reached), and subdivides $c$-bands geometrically on failure.

Run as eight parallel sub-bands whose union is $[c_M,12]$ with no hole (bands~6 and~7 overlap on $[4.0,4.083]$), the certificate resolves $95{,}784$ boxes in about $30$ seconds and certifies~\eqref{eq:dual-bound} on the whole band; per-band results are listed in Table~\ref{tab:fq1-bands}. The worst certified margin over the tiling, $+2.26\times10^{-5}$, measures the tightness of the \emph{envelope relaxation}, not of~\eqref{eq:fiber-inequality} (see the remark at the end of this appendix); it grows under finer $c$-banding.

\begin{table}[h]\centering
\begin{tabular}{lllll}
\hline
band & $c$-range & sub-bands & boxes & worst certified margin \\
\hline
1 & $[0.993405,\,1.19]$ & 12 & 12{,}164 & $+1.088\times10^{-4}$ \\
2 & $[1.19,\,1.30]$ & 14 & 12{,}416 & $+9.058\times10^{-5}$ \\
3 & $[1.30,\,1.45]$ & 12 & 10{,}712 & $+8.958\times10^{-5}$ \\
4 & $[1.45,\,1.75]$ & 8 & 7{,}988 & $+1.956\times10^{-4}$ \\
5 & $[1.75,\,3.5]$ & 24 & 24{,}200 & $+2.260\times10^{-5}$ \\
6 & $[3.5,\,4.083]$ & 10 & 7{,}584 & $+2.830\times10^{-5}$ \\
7 & $[4.0,\,6.0]$ & 16 & 11{,}840 & $+3.571\times10^{-5}$ \\
8 & $[6.0,\,12.0]$ & 14 & 8{,}880 & $+2.539\times10^{-5}$ \\
\hline
\end{tabular}
\caption{Certificate (C2): the eight sub-bands of the dual-envelope branch-and-bound (\texttt{envelope\_band.py}, precision $256$ bits). Each row certifies $J(c,p)\le d(c)$ for \emph{all} unit cubics $p$ and all $c$ in its range; ``sub-bands'' is the geometric $c$-tiling within the row.}
\label{tab:fq1-bands}
\end{table}

\subsection{Certificate (C3): the high-$c$ tail $c\ge12$}
\label{app:cert-highc}

For $c\ge12$ substitute $\lambda=1/c\in(0,\tfrac1{12}]$ and $x=cs$. Both sides of~\eqref{eq:dual-bound} then carry a factor $\lambda$: after dividing it out, the target becomes $D(\lambda)\ge I(\lambda,p)$ with
\[
        D(\lambda)=\frac{3\nu}{2}\cdot\frac{1}{1+\sqrt{1+\lambda^2}},
        \qquad
        I(\lambda,p)=\int_0^\infty\Bigl[\bigl(|p(\lambda x)|-x\bigr)_++\bigl(|p(-\lambda x)|-x\bigr)_+\Bigr]\varphi(\lambda x)\,dx .
\]
As $\lambda\downarrow0$ this degenerates, and it is worth seeing why. For large $c$ the active condition $|p(s)|\ge c|s|$ selects a small neighbourhood of the origin, where $J(c,p)=\varphi(0)\,p(0)^2/c+O(c^{-3})$. Among unit cubics the largest possible value of $p(0)^2$ is the evaluation-kernel norm $K_{\le3}(0,0)=\sum_{j\le3}\psi_j(0)^2=\tfrac32$, attained by the normalized reproducing kernel at the origin,
\[
        p_0(s)=\frac{K_{\le3}(s,0)}{\sqrt{K_{\le3}(0,0)}}=\frac{3-s^2}{\sqrt6}=\sqrt{\tfrac23}\,\psi_0-\tfrac{1}{\sqrt3}\,\psi_2 .
\]
Since $\varphi(0)=\nu/2$, this gives $\sup_pJ(c,p)\sim\tfrac{3\nu}{4c}$, while also $d(c)\sim\tfrac{3\nu}{4c}$: \emph{the dual bound is asymptotically tight}, with equality direction $p_0$. In the scaled variables, $D(0)=I(0,p_0)=\tfrac34\nu$, and the margin vanishes to second order in $\lambda$. This is why the tail cannot be closed by more branch-and-bound---near the limit every box fails---and why the certified object is instead the scaled margin $\bigl(D-I\bigr)/\lambda^2$, which has a strictly positive limit. For orientation (this computation is cross-checked numerically but not used below): in the local chart introduced next,
\[
        \frac{D(\lambda)-I(\lambda,p)}{\lambda^2}
        \;\longrightarrow\;
        \frac{5\nu}{32}+\frac{3\nu}{4}\norm{y}^2
        \qquad(\lambda\downarrow0),
\]
and the certified leading jets of the majorant constructed below reproduce exactly these values minus the far-tail reserve of Lemma~\ref{lem:remote-tail} ($0.094669=\tfrac{5\nu}{32}-0.03$ at $y=0$; increment $0.598414=\tfrac{3\nu}{4}$ at $y=(0,0,1)$), confirming that the majorant is lossless at leading order.

Every unit cubic with $q:=|\ip{p}{p_0}|>0$ can be written (up to sign) in the \emph{local chart}
\[
        p=\frac{p_0+\lambda\,w}{\sqrt{1+\lambda^2\norm{y}^2}},
        \qquad
        w=y_E\,\frac{s^2}{\sqrt3}+y_1\psi_1+y_3\psi_3\perp p_0,
        \qquad
        \norm{w}=\norm{y},
\]
and the transverse coordinate $z:=(1-q^2)/\lambda^2$ satisfies $z=\norm{y}^2/(1+\lambda^2\norm{y}^2)$. The certificate covers all $p$ by two overlapping charts:
\begin{itemize}
\item \textbf{LOCAL} ($\norm{y}\le2$, equivalently $z\le4/(1+4\lambda^2)\ge3.891$ for $\lambda\le\tfrac1{12}$);
\item \textbf{AWAY} ($z\in[3.88,4.2]$ by branch-and-bound in $z$, and $z\ge4.2$ by a certified monotone-domination bracket); this includes $q=0$, where $z=c^2\ge144$.
\end{itemize}
Since $3.88<3.891$ the union covers every $z\ge0$, i.e.\ every unit cubic, for every $c\ge12$.

\subsubsection*{Localization and the far tail}

The active sets $\{x:|p(\pm\lambda x)|\ge x\}$ are \emph{not} compact in the scaled variable: a genuine cubic grows like $\lambda^3x^3$, so the active condition, which for moderate $x$ holds only in a central window, holds again on a remote ray (for $p=\psi_3$ and $\lambda=\tfrac1{12}$ the re-entry is near $x\approx68$). The certificate separates the two components and covers the remote one by an explicit allowance.

\begin{lemma}[Localization]\label{lem:localization}
Let $K(s):=\sum_{j=0}^3\psi_j(s)^2=\tfrac32+\tfrac32s^2-\tfrac12s^4+\tfrac16s^6$. For every unit cubic $p$, every $c\ge12$ and every $s\ge0$: if $|p(s)|\ge cs$, then
\[
        cs<1.5\;(<L:=1.55)
        \qquad\text{or}\qquad
        s>1.55\sqrt c .
\]
\end{lemma}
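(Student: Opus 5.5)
The proof will be a pointwise Christoffel-function bound on $|p(s)|$, followed by a scalar case analysis in $s$. Since $p=\sum_{j\le3}a_j\psi_j$ with $\norm{a}_2=1$, Cauchy--Schwarz gives
\[
        |p(s)|^2\le K(s)=\tfrac32+\tfrac32s^2-\tfrac12s^4+\tfrac16s^6
\]
for every $s$, with the explicit form of $K$ already recorded in the statement. Hence it suffices to show the following: whenever $s\ge0$ satisfies $K(s)\ge c^2s^2$ with $c\ge12$, either $cs<1.5$ or $s>1.55\sqrt c$. Equivalently, on the window $s\in[1.5/c,\,1.55\sqrt c]$ we must have $c^2s^2>K(s)$.

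I will check the two ends of the window separately.

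\emph{Near end.} Here $s$ is small (at most of order $1/\sqrt c$ in the relevant part) and the term $\tfrac32$ dominates $K$. Write $x=cs\ge1.5$. The condition $K(s)\ge x^2$ becomes
\[
        x^2\le \tfrac32+\tfrac32 x^2/c^2-\tfrac12x^4/c^4+\tfrac16x^6/c^6 .
\]
For moderate $x$ (say $x\le c^{1/2}$, i.e.\ $s\le c^{-1/2}$), the right side is at most $\tfrac32+\tfrac32x^2/144+(\text{tiny})$. This forces $x^2(1-1/96)\lesssim1.5$, i.e.\ $x\lesssim1.231<1.5$, a contradiction.

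\emph{Far end.} For $s\le1.55\sqrt c$ and $s\ge c^{-1/2}$, I bound $K(s)$ by its dominant part $\tfrac16s^6+\tfrac32s^2+\tfrac32$ and compare with $c^2s^2$. This amounts to showing
\[
        \tfrac16 s^4+\tfrac32+\tfrac{3}{2s^2}<c^2 .
\]
At $s=1.55\sqrt c$ we have $\tfrac16 s^4=\tfrac{1.55^4}{6}c^2\approx0.962\,c^2$. The remaining terms are at most $1.5+1.5c$, which is less than $0.038c^2$ once $c\ge12$ (since $0.038\cdot144\approx5.5$ and $1.5+1.5\cdot12/\ldots$ needs checking). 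To keep the margins safe, I will keep the $-\tfrac12s^4$ term, which helps, rather than discard it.

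Because $K(s)/s^2$ is convex-ish (decreasing, then increasing) in $s$, the inequality $K(s)<c^2s^2$ on the whole interval follows from the two endpoint regimes together with unimodality of $K(s)/s^2-c^2$. Concretely, $\frac{d}{ds}\bigl(K(s)/s^2\bigr)$ has a single positive root, so $\max_{[a,b]}K/s^2=\max\{K(a)/a^2,K(b)/b^2\}$.

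\emph{Main obstacle.} The delicate step is the far end, where the margin is thin: $1.55^4/6\approx0.962$, leaving only about $0.038c^2$ for the lower-order terms. At $c=12$ this budget is roughly $5.5$, while $\tfrac32 s^2\cdot(\ldots)$ contributes $\tfrac32+\tfrac{3}{2s^2}+(-\tfrac12 s^2)$. Keeping the negative quartic term should close it, since $-\tfrac12 s^2=-0.775\cdot1.55c$ is large. I will verify this as a one-variable polynomial inequality in $c\ge12$ (with interval arithmetic if needed), mirroring the paper's other certified scalar checks.
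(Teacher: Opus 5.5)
Your argument is essentially the paper's. You use the bound $|p(s)|^2\le K(s)$, then a shape argument showing the worst case on the window $[1.5/c,\,1.55\sqrt c]$ occurs at an endpoint, then the two endpoint checks. The paper says $F_c(x)=c^2x-K$ increases then decreases in $x=s^2$; your version is equivalent: with $y=s^2$, $K(s)/s^2=\tfrac{3}{2y}+\tfrac32-\tfrac y2+\tfrac{y^2}{6}$ has second $y$-derivative $3/y^3+1/3>0$, so it is maximized at an endpoint.

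That endpoint reduction makes your near/far split unnecessary, and your literal far-end claim $1.5+1.5c<0.038c^2$ is false at $c=12$. It fails because you take $\tfrac{3}{2s^2}$ at $s=c^{-1/2}$ and $\tfrac16 s^4$ at $s=1.55\sqrt c$ simultaneously. Evaluating every term at the true endpoint $s=1.55\sqrt c$ gives $0.962c^2-1.20c+1.5+0.62/c<c^2$, which holds with ample room for all $c\ge12$, so the step you flagged as delicate is not.
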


\begin{proof}
Cauchy--Schwarz gives $|p(s)|^2\le K(s)$, so activity implies $F_c(s^2)\le0$, where $F_c(x):=c^2x-\bigl(\tfrac32+\tfrac32x-\tfrac12x^2+\tfrac16x^3\bigr)$. It therefore suffices that $F_c>0$ on $[(1.5/c)^2,\,1.55^2c]$. Since $F_c'(x)=c^2-\tfrac32+x-\tfrac{x^2}2$ is a downward parabola in $x$, positive at the left endpoint, $F_c$ is increasing-then-decreasing on the interval, so its minimum is attained at an endpoint. At the left endpoint, $F_c\bigl((1.5/c)^2\bigr)=\tfrac34-\tfrac32u+\tfrac{u^2}2-\tfrac{u^3}6$ with $u=2.25/c^2\in(0,\tfrac1{64}]$, certified positive over the whole $u$-range; at the right endpoint, $G(c):=F_c(1.55^2c)$ is a cubic in $c$ with positive leading coefficient $1.55^2(1-1.55^4/6)$ and certified $G(12)>0$, $G'(12)>0$ (and $G'$ is increasing), so $G>0$ for all $c\ge12$. The three sign evaluations are certified in \texttt{highc\_remote\_tail.py}.
\end{proof}

\begin{lemma}[Far-tail allowance]\label{lem:remote-tail}
For every unit cubic $p$ and every $c\ge12$,
\[
        J_{\mathrm{rem}}(c,p):=\int_{\{|s|\ge1.55\sqrt c\}}\bigl(|p(s)|-c|s|\bigr)_+\,d\gamma
        \;\le\;2\bigl(1.55^2c+2\bigr)\,\varphi\bigl(1.55\sqrt c\bigr)
        \;\le\;\frac{0.03}{c^3};
\]
equivalently, the remote active components contribute at most $0.03\lambda^2$ to $I(\lambda,p)$.
\end{lemma}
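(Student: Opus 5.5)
By Lemma~\ref{lem:localization}, the integrand of $J_{\mathrm{rem}}$ vanishes for $1.5/c\le|s|\le 1.55\sqrt c$. So on the remote region $\{|s|\ge 1.55\sqrt c\}$ it suffices to bound $(|p(s)|-c|s|)_+$ by something simple and integrate the Gaussian tail explicitly. The plan has three steps: a pointwise bound on the integrand, one tail integral, and a scalar monotonicity check in $c$.

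\textbf{Pointwise bound.} Use the crude bound $(|p(s)|-c|s|)_+\le |p(s)|\le\sqrt{K(s)}$, which holds by Cauchy--Schwarz for every unit cubic $p$. Then write $\sqrt{K(s)}\le$ a polynomial in $|s|$. The target form $2(1.55^2c+2)\varphi(1.55\sqrt c)$ suggests using the moment identities
\[
\int_a^\infty s\,\varphi(s)\,ds=\varphi(a),\qquad \int_a^\infty s^3\varphi(s)\,ds=(a^2+2)\varphi(a).
\]
So I would bound $\sqrt{K(s)}\le |s|^3$ for $|s|\ge a:=1.55\sqrt c\ge 1.55\sqrt{12}\approx 5.37$. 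This holds because $K(s)\le s^6$ is equivalent to $\tfrac32+\tfrac32s^2\le\tfrac12s^4+\tfrac56s^6$, which is trivially true for $s\ge 2$. Dropping the $-c|s|$ term only loosens the bound, so this is safe.

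\textbf{Tail integral.} Integrating $|s|^3$ over both tails gives
\[
J_{\mathrm{rem}}\le 2\int_a^\infty s^3\varphi(s)\,ds=2(a^2+2)\varphi(a)=2(1.55^2c+2)\varphi(1.55\sqrt c),
\]
which is exactly the first inequality in the statement.

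\textbf{Comparison with $0.03/c^3$.} It remains to show
\[
\psi(c):=2c^3(1.55^2c+2)\varphi(1.55\sqrt c)\le 0.03\qquad\text{for all }c\ge12.
\]
Since $\varphi(1.55\sqrt c)=(2\pi)^{-1/2}e^{-1.2012c}$, the function $\psi$ is a quartic polynomial in $c$ times $e^{-1.2012c}$. Its logarithmic derivative is
\[
\frac{3}{c}+\frac{1.55^2}{1.55^2c+2}-1.2012,
\]
which is at most $0.25+0.09-1.2<0$ for $c\ge12$. So $\psi$ is decreasing on $[12,\infty)$, and it suffices to check $\psi(12)\le0.03$. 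Numerically $\psi(12)\approx 2\cdot1728\cdot30.8\cdot 0.399\cdot e^{-14.41}\approx 2.4\times10^{-2}$, which is under $0.03$ but close. I would certify this single value in interval arithmetic, in the same script \texttt{highc\_remote\_tail.py}.

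\textbf{Scaled form and main obstacle.} For the equivalent statement, substitute $s=\lambda x$ in $J$. Then $J(c,p)=\lambda I(\lambda,p)$, so the remote part of $I$ is $J_{\mathrm{rem}}/\lambda\le 0.03\lambda^3/\lambda=0.03\lambda^2$. The only delicate step is the endpoint numerics at $c=12$. If the margin turns out too thin, I would fall back on the sharper pointwise bound $\sqrt{K(s)}-cs$ to recover slack.
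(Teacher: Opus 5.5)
Your proposal is correct and matches the paper's proof essentially step for step. Both arguments use the pointwise bound $(|p(s)|-c|s|)_+\le\sqrt{K(s)}\le |s|^3$ for $|s|\ge2$, the identity $\int_a^\infty s^3\varphi(s)\,ds=(a^2+2)\varphi(a)$ on both half-lines, and monotonicity of $2c^3(1.55^2c+2)\varphi(1.55\sqrt c)$ via its logarithmic derivative, which is below $-0.87$ on $[12,\infty)$, followed by a certified evaluation at $c=12$ (value $0.0233\ldots<0.03$). The appeal to the localization lemma is unnecessary since $J_{\mathrm{rem}}$ is already restricted to the remote set, and your scaling $J(c,p)=\lambda I(\lambda,p)$ correctly yields the $0.03\lambda^2$ form.
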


\begin{proof}
For $s\ge2$ one has $K(s)\le s^6$ (equivalent to $9+9s^2\le3s^4+5s^6$, true since $3s^4\ge12s^2>9s^2+9$), so on the remote set $\bigl(|p(s)|-c|s|\bigr)_+\le\sqrt{K(s)}\le s^3$; applying $\int_a^\infty s^3\varphi\,ds=(a^2+2)\varphi(a)$ on both half-lines gives the first bound. For the second, set $t(c):=c^3\cdot2(1.55^2c+2)\varphi(1.55\sqrt c)$; then
\[
        (\ln t)'(c)=\frac3c+\frac{1.55^2}{1.55^2c+2}-\frac{1.55^2}2
        \;\le\;\frac3{12}+\frac{1.55^2}{12\cdot1.55^2+2}-\frac{1.55^2}2\;<\;-0.87,
\]
so $t$ is decreasing on $[12,\infty)$; the certified value $t(12)=0.0233\ldots<0.03$ completes the proof. Both sign checks are certified in \texttt{highc\_remote\_tail.py}.
\end{proof}

This is the exact role of the reserve $0.03\lambda^2$ deducted inside the certified functionals below: it pays for the far tail, so that certifying the central majorant against $D-0.03\lambda^2$ certifies all of~\eqref{eq:dual-bound}. The central window is then frozen at the fixed cutoff $A:=p_0(0)=\sqrt{3/2}$ (the limiting active boundary, since $p_0(0)=x$ at $x=A$) by the following device, which avoids an interval-unstable moving root:

\begin{lemma}[Moving cutoff]\label{lem:moving-cutoff}
Let $0\le w(x)\le\varphi_0:=1/\sqrt{2\pi}$, let $\delta:[0,L]\to\R$ be measurable with $|\delta|\le\Delta$, and let $0<A\le L$. Then
\[
        \int_0^L\bigl(A-x+\delta(x)\bigr)_+\,w(x)\,dx
        \;\le\;
        \int_0^A\bigl(A-x+\delta(x)\bigr)\,w(x)\,dx+\varphi_0\Delta^2 .
\]
\end{lemma}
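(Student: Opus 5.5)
The plan is a pointwise comparison of the two integrands followed by a direct integration of the error. Write $a(x):=A-x+\delta(x)$ and split the left-hand integral at $x=A$. On $[0,A]$ I will use the identity $a_+=a+(-a)_+$. This rewrites the left-hand side as
\[
\int_0^A a(x)\,w(x)\,dx+\int_0^A\bigl(-a(x)\bigr)_+w(x)\,dx+\int_A^L a(x)_+\,w(x)\,dx .
\]
The first term is exactly the main term on the right-hand side. So it suffices to show that the two remaining error integrals sum to at most $\varphi_0\Delta^2$.

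Both error terms are controlled by the bound $|\delta|\le\Delta$. On $[0,A]$ we have $-a(x)=x-A-\delta(x)\le x-A+\Delta$, hence $(-a)_+\le(x-A+\Delta)_+$. This vanishes unless $x\ge A-\Delta$. On $[A,L]$ we have $a(x)\le A+\Delta-x$, hence $a_+\le(A+\Delta-x)_+$. This vanishes unless $x\le A+\Delta$. Using $0\le w\le\varphi_0$, the two errors are bounded by
\[
\varphi_0\int_{\max(0,A-\Delta)}^{A}(x-A+\Delta)\,dx
\;+\;
\varphi_0\int_A^{\min(L,A+\Delta)}(A+\Delta-x)\,dx .
\]
Each of these is the area of a right triangle with legs $\Delta$, or a truncated piece of one. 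So each is at most $\varphi_0\Delta^2/2$, and together they give $\varphi_0\Delta^2$.

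The only points needing care are the boundary cases, and they are not a real obstacle. If $A<\Delta$, the left triangle is truncated at $0$; if $A+\Delta>L$, the right triangle is truncated at $L$. Truncation only shrinks the integration range of a nonnegative integrand, so the bound $\Delta^2/2$ survives in both cases. Measurability of $\delta$ ensures that all the integrals are defined. The hypothesis $A\le L$ is used only so that the split at $A$ lies inside $[0,L]$.
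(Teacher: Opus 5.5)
Your proof is correct and follows the paper's argument essentially verbatim. You split at $x=A$, bound the excess on $[0,A]$ by $(x-A+\Delta)_+$ and the integrand on $[A,L]$ by $(A+\Delta-x)_+$, and note that each piece contributes a (possibly truncated) triangle of area at most $\varphi_0\Delta^2/2$.
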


\begin{proof}
On $[0,A]$, $(A-x+\delta)_+-(A-x+\delta)=(x-A-\delta)_+\le(x-A+\Delta)_+$, which is supported on $[A-\Delta,A]$ and bounded by $\Delta$; its $w$-integral is at most $\varphi_0\Delta^2/2$. On $[A,L]$, $(A-x+\delta)_+\le(\Delta-(x-A))_+$, again contributing at most $\varphi_0\Delta^2/2$.
\end{proof}

\subsubsection*{The LOCAL majorant}

In the local chart, with $n:=\sqrt{1+\lambda^2\norm{y}^2}$,
\[
        n\,p(\pm\lambda x)=P_\pm(x)
        :=A-\frac{\lambda^2x^2}{\sqrt6}+\frac{y_E\lambda^3x^2}{\sqrt3}
        \pm\lambda^2\Bigl(y_1x+y_3\,\frac{\lambda^2x^3-3x}{\sqrt6}\Bigr).
\]
An interval evaluation certifies $P_\pm/n>0$ for $x\in[0,L]$, $\lambda\in[0,\tfrac1{12}]$ and the whole $y$-range, so there $|p(\pm\lambda x)|=P_\pm(x)/n$; by Lemmas~\ref{lem:localization} and~\ref{lem:remote-tail},
\begin{equation}
        I(\lambda,p)\;\le\;\int_0^L\Bigl[\Bigl(\frac{P_+}{n}-x\Bigr)_++\Bigl(\frac{P_-}{n}-x\Bigr)_+\Bigr]\varphi(\lambda x)\,dx
        \;+\;0.03\lambda^2 .
        \label{eq:reserve}
\end{equation}
Write $P_\pm/n-x=A-x+\delta_\pm(x)$ with $\delta_\pm:=P_\pm/n-A$, and apply Lemma~\ref{lem:moving-cutoff} to each branch with weight $w=\varphi(\lambda\,\cdot)\le\varphi_0$. In $\sum_\pm\int_0^A\delta_\pm w$ the $y_1,y_3$ terms cancel (they are odd under the branch sign), leaving, with the frozen moments $W_j(\lambda):=\int_0^Ax^j\varphi(\lambda x)\,dx$,
\[
        \int_0^L\Bigl[\,\cdots\Bigr]\varphi(\lambda x)\,dx
        \;\le\;U_{\mathrm{loc}}
        :=2\bigl(AW_0-W_1\bigr)+2A\Bigl(\frac1n-1\Bigr)W_0
        -\frac{2\lambda^2}{\sqrt6\,n}W_2+\frac{2y_E\lambda^3}{\sqrt3\,n}W_2
        +\varphi_0\bigl(\Delta_+^2+\Delta_-^2\bigr),
\]
where $\Delta_\pm\ge\sup_{[0,L]}|\delta_\pm|$ is supplied by the split $\Delta_\pm=M_E+M_O$ into branch-even and branch-odd parts,
\[
        M_E=A\Bigl(1-\frac1n\Bigr)+\frac{\lambda^2L^2}{\sqrt6\,n}+\frac{|y_E|\,\lambda^3L^2}{\sqrt3\,n},
        \qquad
        M_O=\frac{\lambda^2\norm{y}L}{n}\sqrt{\tfrac52}\,.
\]
$M_E$ bounds the branch-even part of $\delta_\pm$ termwise on $[0,L]$; $M_O$ bounds the branch-odd part by Cauchy--Schwarz,
\[
        \Bigl|y_1x+y_3\,\frac{\lambda^2x^3-3x}{\sqrt6}\Bigr|
        =\frac{\bigl|y_1\psi_1(\lambda x)+y_3\psi_3(\lambda x)\bigr|}{\lambda}
        \le\frac{\norm{y}}{\lambda}\sqrt{K_o(\lambda x)}
        \le x\norm{y}\sqrt{\tfrac52},
\]
using $K_o(s)=s^2\bigl(1+\tfrac{(3-s^2)^2}{6}\bigr)\le\tfrac52s^2$ for $s^2\le6$; an interval check certifies $s^2=\lambda^2x^2\le0.017$ on the chart, so the condition holds with a wide margin.

\subsubsection*{The AWAY majorant}

Here only the size of the transverse part enters, not its direction. Write $p=\pm(q\,p_0+\tau)$ with $\tau\perp p_0$, $q=\sqrt{1-\lambda^2z}$ and $\norm{\tau}=\lambda\sqrt z$, and let
\[
        K_\perp(s):=K(s)-p_0(s)^2=\tfrac52s^2-\tfrac23s^4+\tfrac16s^6\;\le\;\tfrac52s^2
        \qquad(s^2\le4).
\]
Then $|\tau(\pm\lambda x)|\le\lambda\sqrt z\,\sqrt{K_\perp(\lambda x)}\le\lambda^2x\sqrt{5z/2}$, so both branches obey the same bound $|p(\pm\lambda x)|-x\le A-x+\delta_z(x)$ with
\[
        \delta_z(x)=A(q-1)-\frac{q\lambda^2x^2}{\sqrt6}+\lambda^2x\sqrt{\tfrac{5z}2}
\]
(no branch-positivity check is needed here, because $|p|$ is bounded from above directly). Lemma~\ref{lem:moving-cutoff} applied to both branches then gives
\[
        U_{\mathrm{away}}
        =2\bigl(AW_0-W_1\bigr)+2\Bigl(A(q-1)W_0-\frac{q\lambda^2}{\sqrt6}W_2+\lambda^2\sqrt{\tfrac{5z}2}\,W_1\Bigr)
        +2\varphi_0M_z^2,
\]
with $M_z=A(1-q)+\dfrac{\lambda^2L^2}{\sqrt6}+\lambda^2L\sqrt{\dfrac{5z}2}\;\ge\;\sup_{[0,L]}|\delta_z|$ (using $q\le1$). Finally, $z\ge4.2$ is reduced to the certified band by domination:

\begin{lemma}[Domination for $z\ge z_0=4.2$]\label{lem:domination}
Let $a_z(x):=\sqrt{1-\lambda^2z}\,p_0(\lambda x)+\lambda\sqrt z\,\sqrt{K_\perp(\lambda x)}$ denote the amplitude bound above, and let
\[
        \widetilde K_\perp(\lambda,x):=\frac{K_\perp(\lambda x)}{\lambda^2}
        =\tfrac52x^2-\tfrac23\lambda^2x^4+\tfrac16\lambda^4x^6,
\]
a polynomial in $(\lambda,x)$ (the continuous extension to $\lambda=0$). If
\[
        z_0\,p_0(\lambda x)^2\;\ge\;\widetilde K_\perp(\lambda,x)\,\bigl(1-z_0\lambda^2\bigr)
        \qquad\text{for all }\lambda\in[0,\tfrac1{12}],\ x\in[0,L],
\]
then $\partial_za_z(x)\le0$ for every $z\ge z_0$ and $\lambda\in(0,\tfrac1{12}]$; hence every $z\ge z_0$ (including $q=0$, i.e.\ $z=c^2$) is pointwise dominated by $z=z_0$, which lies inside the certified band $[3.88,4.2]$.
\end{lemma}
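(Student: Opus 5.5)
We will differentiate $a_z(x)$ in $z$ at fixed $(\lambda,x)$ and reduce $\partial_z a_z\le0$ for $z\ge z_0$ to the hypothesis at $z_0$ through a monotonicity argument. For $\lambda\in(0,\tfrac1{12}]$ and $0\le z<1/\lambda^2$, the only $z$-dependence is in $q=\sqrt{1-\lambda^2z}$ and in $\lambda\sqrt z$, and
\[
        \partial_z a_z(x)
        =-\frac{\lambda^2 p_0(\lambda x)}{2\sqrt{1-\lambda^2z}}
        +\frac{\lambda\sqrt{K_\perp(\lambda x)}}{2\sqrt z}.
\]
On the window we have $p_0(\lambda x)=(3-\lambda^2x^2)/\sqrt6>0$, since $\lambda^2x^2\le L^2/144\ll3$. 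Also $K_\perp\ge0$, because it is $K-p_0^2$ and Cauchy--Schwarz bounds $p_0^2$ by the kernel $K$. Both terms are therefore of definite sign. The condition $\partial_za_z\le0$ becomes $\lambda\sqrt{K_\perp}\sqrt{1-\lambda^2z}\le\lambda^2 p_0\sqrt z$. After squaring and dividing by $\lambda^4$ (recall $K_\perp(\lambda x)=\lambda^2\widetilde K_\perp(\lambda,x)$), this is
\[
        z\,p_0(\lambda x)^2\ \ge\ \widetilde K_\perp(\lambda,x)\,(1-\lambda^2 z).
\]

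The key observation is that the left side is nondecreasing in $z$ and the right side is nonincreasing in $z$, because $\widetilde K_\perp\ge0$. So if the inequality holds at $z=z_0$, it holds for every $z\in[z_0,1/\lambda^2)$, and the hypothesis gives exactly the $z=z_0$ case. The endpoint $z=1/\lambda^2$, i.e.\ $q=0$, needs separate care because the derivative is singular there. In that case $z=c^2$, and I would handle it by continuity: $a_z$ is continuous in $z$ on $[z_0,1/\lambda^2]$ and nonincreasing on the open part, so $a_{1/\lambda^2}\le a_{z_0}$ as well. We also need $z_0<1/\lambda^2$, which holds since $z_0\lambda^2\le4.2/144<1$.

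The domination conclusion then follows from the AWAY setup. For any unit cubic with transverse parameter $z\ge z_0$, the bound $|p(\pm\lambda x)|\le a_z(x)\le a_{z_0}(x)$ holds on $[0,L]$, so the central-window integrand $(|p(\pm\lambda x)|-x)_+$ is pointwise at most its value with amplitude $a_{z_0}$. Integrating against $\varphi(\lambda x)$, and adding the unchanged far-tail reserve from Lemma~\ref{lem:remote-tail}, reduces these cubics to the bound certified at $z_0=4.2$, which lies in the certified band. The remaining hypothesis is a two-variable polynomial inequality on the compact box $[0,\tfrac1{12}]\times[0,L]$; its extension to $\lambda=0$ reads $4.2\cdot\tfrac32\ge\tfrac52x^2$, i.e.\ $x^2\le2.52$, which holds for $x\le L=1.55$, though with a thin margin near $x=L$. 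The main obstacle is the definite-sign and squaring bookkeeping near $q=0$, together with making sure that pointwise amplitude domination really transfers through $U_{\mathrm{away}}$. I would avoid that transfer entirely by comparing the integrands directly rather than the majorant formulas.
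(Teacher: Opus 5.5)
Your proposal is correct and follows the paper's proof. Like the paper, you compute $\partial_z a_z$, reduce $\partial_z a_z\le 0$ to $z\,p_0(\lambda x)^2\ge \widetilde K_\perp(\lambda,x)(1-\lambda^2 z)$, and extend the $z=z_0$ case to all larger $z$ because the left side increases and the right side decreases in $z$. Your extra bookkeeping fills in details the paper leaves implicit: positivity of $p_0$ on the window so that squaring is an equivalence, $\widetilde K_\perp\ge 0$ for the monotonicity, continuity at the endpoint $q=0$, and transferring the domination through the integrand $(|p|-x)_+$ rather than through $U_{\mathrm{away}}$.
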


\begin{proof}
For $\lambda>0$, $\partial_za_z=-\dfrac{\lambda^2p_0}{2\sqrt{1-\lambda^2z}}+\dfrac{\lambda\sqrt{K_\perp}}{2\sqrt z}\le0$ iff $\lambda^2z\,p_0^2\ge(1-\lambda^2z)K_\perp(\lambda x)$, i.e.\ (dividing by $\lambda^2$) iff $z\,p_0^2\ge\widetilde K_\perp\,(1-z\lambda^2)$. The displayed hypothesis is this condition at $z=z_0$; as $z$ grows the left side increases and the right side decreases, so it persists for all $z\ge z_0$. The hypothesis is certified on an interval grid with worst lower bound $+0.29$ (\texttt{highc\_certificate.py}).
\end{proof}

\subsubsection*{Evaluation and results}

Each frozen moment is an entire function of $\lambda^2$,
\[
        W_j(\lambda)=\varphi_0\sum_{m\ge0}\frac{(-1)^m}{2^m\,m!}\,\frac{A^{\,j+2m+1}}{j+2m+1}\,\lambda^{2m},
\]
evaluated through the terms $m\le9$, and the omitted tail must be accounted for \emph{with its power of $\lambda$}. It is an alternating series with decreasing terms (successive ratios are at most $\lambda^2A^2/2\le\tfrac1{192}$), hence dominated by its first term:
\[
        \bigl|W_j(\lambda)-\widehat W_j(\lambda)\bigr|
        \;\le\;\varphi_0\,\frac{(A^2/2)^{10}}{10!}\,\frac{A^{\,j+1}}{j+21}\,\lambda^{20}
        \;\le\;10^{-9}\,\lambda^{20}
        \qquad(0\le j\le3),
\]
where $\widehat W_j$ denotes the truncation. The $W_j$ enter $U_{\mathrm{loc}}$ and $U_{\mathrm{away}}$ affinely, with coefficient magnitudes summing to less than $10$, so the truncation perturbs $G$ by at most $10^{-8}\lambda^{20}$ and the scaled margin $R=G/\lambda^2$ by at most $10^{-8}\lambda^{18}\le10^{-27}$, \emph{uniformly} on $(0,\tfrac1{12}]$: the $\lambda^{20}$ factor survives the division by $\lambda^2$, so the comparison is scale-aware and negligible against the certified margins below (all $\ge4\times10^{-3}$). With these ingredients, $G:=D-U_{\mathrm{loc/away}}-0.03\lambda^2$ is an explicit algebraic expression in $\lambda$, and a rigorous lower bound for $R:=G/\lambda^2$, \emph{uniform in} $\lambda\in(0,\tfrac1{12}]$, is obtained from the Taylor jet of $G$ at $\lambda=0$ (Arb series coefficients) plus a Lagrange remainder enclosed by the corresponding ball-centered jet over $[0,\tfrac1{12}]$, for each interval box of chart parameters. By~\eqref{eq:reserve} and its AWAY analogue, certifying $R>0$ certifies $D-I>0$, i.e.~\eqref{eq:dual-bound}, on the box.

The branch-and-bound (\texttt{highc\_certificate.py}, precision $300$ bits) certifies: LOCAL, $R>0$ over the ball $\norm{y}\le2$ ($1{,}000$ certified boxes, $1{,}319$ subdivisions, $\approx1$ s), worst certified value $+4.211\times10^{-3}$; AWAY, $R>0$ on $z\in[3.88,4.2]$ (certified in a single interval evaluation), worst $+3.537\times10^{-2}$, together with the domination bracket of Lemma~\ref{lem:domination} and the range check for $M_O$; and the localization and far-tail constants of Lemmas~\ref{lem:localization}--\ref{lem:remote-tail} are certified in \texttt{highc\_remote\_tail.py}.

\subsection{Summary of certified results and reproducibility}
\label{app:cert-summary}

\begin{table}[h]\centering
\footnotesize
\setlength{\tabcolsep}{4pt}
\begin{tabular}{llll}
\hline
certificate & script(s) & what is certified & worst margin \\
\hline
(C1) point $+$ monot. & \texttt{highbeta\_monotone.py} & $g(r_1)>0$, $B(r_1)>0$, $4e^{-3/2}<1$, $g(\infty)>0$ & $+3.65\times10^{-4}$ \\
(C1)$+$(S) fine scan & \texttt{highbeta\_refine.py} & $g\ge0$ on $[r_1,1.0816]$; $\beta_0<\beta_*(c_M)$ & $+1.90\times10^{-6}$ \\
(C2) medium band & \texttt{envelope\_band.py} $\times8$ & $J(c,p)\le d(c)$, all $p\in S^3$, $c\in[c_M,12]$ & $+2.26\times10^{-5}$ \\
(C3) tail, LOCAL & \texttt{highc\_certificate.py} & $R>0$, $\norm{y}\le2$, $\lambda\in(0,\tfrac1{12}]$ & $+4.21\times10^{-3}$ \\
(C3) tail, AWAY & \texttt{highc\_certificate.py} & $R>0$, $z\in[3.88,4.2]$; domination $z\ge4.2$ & $+3.54\times10^{-2}$ \\
(C3) far tail & \texttt{highc\_remote\_tail.py} & active: $cs<1.5$ or $s>1.55\sqrt c$; $J_{\mathrm{rem}}\le0.03c^{-3}$ & $+6.66\times10^{-3}$ \\
\hline
\end{tabular}
\caption{The fiber-inequality certificate, region by region (run of 2026-08-04; all margins are rigorous one-sided ball bounds). Together with Lemmas~\ref{lem:highbudget} and~\ref{lem:dual-splice} these establish~\eqref{eq:fiber-inequality} for every ternary $u$ via Proposition~\ref{prop:fq1-coverage}.}
\label{tab:fq1-results}
\end{table}

Table~\ref{tab:fq1-results} summarizes the certificate. The complete suite runs in under two minutes on one machine via the driver \texttt{run\_all.sh} of \texttt{lower\_bound\_reproducibility/}, which executes every script, stores the logs under \texttt{certificates/}, and verifies the certified verdict line of each log; the only requirement is Python~$\ge3.11$ with \texttt{python-flint~0.8.0}. Every certificate is deterministic: re-runs in the pinned environment reproduce the printed margins verbatim, and independent environments must reproduce the same certified verdicts and enclosures.

\begin{remark}[The certified margins versus the true margin]
The small margins in Table~\ref{tab:fq1-results} measure the slack of the \emph{relaxations} being certified (the support bound of Lemma~\ref{lem:highbudget} and the dual envelope of Lemma~\ref{lem:dual-splice} are both nearly tight along their splice), not the distance of~\eqref{eq:fiber-inequality} from failure. In the extremal coordinates of Section~\ref{sec:disagreement}, the constant fiber $u\equiv1$ (for which $V=1$ and $\beta=\nu$) \emph{realizes}
\[
        \rho\;=\;\frac{3\nu^2-1}{\nu^2}\;=\;3-\frac{\pi}{2}\;=\;1.42920\ldots,
\]
and numerical scans over the threshold family $(a,\tau)$ found no smaller value, suggesting that $3-\pi/2$ is the true infimum of $\rho$; we neither need nor prove that sharper assertion. The fiber inequality asserts only $\rho\ge1$, so it holds with about $0.43$ of numerical headroom; in particular, the sharper inequality $V\le3\nu\beta-c\beta^2$ appears to hold for every $c<3-\pi/2$, though only $c=1$ is certified and used here.
\end{remark}